\newcommand{\htmladdnormallink}[2]{#1}
\newtheorem{lemma}{Lemma}
\newtheorem{theorem}{Theorem}
\newtheorem{corollary}{Corollary}
\newcommand{\res}{\mathrm{res}}
\newcommand{\mult}{\mathrm{mult}}
\newcommand{\lcf}{\mathrm{lcf}}
\newcommand{\abs}[1]{\left| #1 \right|}
\newcommand{\norm}[1]{\left\| #1 \right\|}
\newcommand{\onenorm}[1]{\norm{#1}}
\newcommand{\tz}{\tilde{z}}
\newcommand{\hz}{\hat{z}}
\newcommand{\tp}{\tilde{p}}
\newcommand{\hp}{\hat{p}}
\newcommand{\tO}{\tilde{O}}
\newcommand{\sigmaai}{\sigma_{\alpha,i}}
\newcommand{\mai}{m_{\alpha,i}}
\newcommand{\taua}{\tau_{\alpha}}
\newcommand{\ma}{m(\alpha)}
\newcommand{\Pai}{P_{\alpha,i}}
\newcommand{\fa}{f_{\alpha}}
\newcommand{\ka}{k(\alpha)}
\newcommand{\tsigma}{\tilde{\sigma}}
\newcommand{\Z}{\mathbb{Z}}
\newcommand{\R}{\mathbb{R}}
\newcommand{\C}{\mathbb{C}}
\newcommand{\N}{\mathbb{N}}
\newcommand{\topo}{\textsc{Top}NT}
\newcommand{\IBox}{\mathfrak{B}}
\newcommand{\ignore}[1]{}
\newcommand{\ceil}[1]{\lceil #1 \rceil}
\newcommand{\floor}[1]{\lfloor #1 \rfloor}
\newcommand{\Mea}{\operatorname{Mea}}
\newcommand{\bd}{\mathop{\mathrm{bd}}}
\newcommand{\lc}{\operatorname{lcf}}
\newcommand{\sr}{\mathrm{sr}}
\newcommand{\michael}[1]{\textcolor{red}{\textsc{Michael says: } {\sf #1}}}
\newcommand{\pengming}[1]{\textcolor{blue}{\textsc{Pengming: }{\sf #1}}}
\newcommand{\Kurt}[1]{#1}
\newcommand{\KurtTwo}[2]{#2}
\begin{document}
\begin{frontmatter}

\title{From Approximate Factorization to Root Isolation with
Application to Cylindrical Algebraic Decomposition}

\author{Kurt Mehlhorn}
\address{Max Planck Institute for Informatics, Saarbr\"ucken, Germany}
\ead{mehlhorn@mpi-inf.mpg.de}

\author{Michael Sagraloff}
\address{Max Planck Institute for Informatics, Saarbr\"ucken, Germany}
\ead{msagralo@mpi-inf.mpg.de}

\author{Pengming Wang}
\address{Max Planck Institute for Informatics, Saarbr\"ucken, Germany}
\ead{s9pewang@stud.uni-saarland.de}

\begin{abstract}
We present an algorithm for isolating all roots of an
  arbitrary complex polynomial $p$ that also works
  in the presence of multiple roots provided that \begin{inparaenum}[1)]
\item[(1)] the number of distinct roots is given as part of the input and
\item[(2)] the algorithm can ask for arbitrarily good approximations of the coefficients of $p$.
\end{inparaenum} The algorithm outputs pairwise disjoint disks each containing one of the
distinct roots of $p$ and the multiplicity of the root contained in the disk.
 The algorithm uses approximate factorization as a subroutine. For the case where Pan's algorithm~\cite{Pan:alg} is used for the factorization, we derive complexity bounds for the problems of isolating and refining all roots, which are stated in terms of the geometric locations of the roots only. Specializing the latter bounds to a
polynomial of degree $d$ and with integer coefficients of bitsize less than $\tau$, we show that $\tilde{O}(d^{3}+d^{2}\tau+d\kappa)$ bit operations are sufficient to compute isolating disks of size less than $2^{-\kappa}$ for all roots of $p$, where $\kappa$ is an arbitrary positive integer.

  In addition, we apply our root isolation algorithm to a recent algorithm for computing
  the topology of a real planar algebraic curve specified as the zero set of a bivariate
  integer polynomial and for isolating the real solutions of a bivariate
  polynomial system. For polynomials of degree $n$ and bitsize $\tau$, we 
  improve the currently best running
  time from
$\tO(n^{9}\tau+n^{8}\tau^{2})$ (deterministic) to $\tO(n^{6}+n^{5}\tau)$
(randomized) for topology computation and from $\tO(n^{8}+n^{7}\tau)$
(deterministic) to $\tO(n^{6}+n^{5}\tau)$ (randomized) for solving bivariate
systems.
\end{abstract}

\begin{keyword}
root isolation, root refinement, curve analysis, bivariate polynomial system, complexity analysis, cylindrical algebraic decomposition
\end{keyword}

\end{frontmatter}

\ignore{

\begin{document} 

\title{From Approximate Factorization to Root Isolation \\ with
Application to Cylindrical Algebraic Decomposition}

\ignore{

\numberofauthors{3}

\author{
\alignauthor Kurt Mehlhorn\\
\affaddr{Max Planck Institute for Informatics}\\
\alignauthor Michael Sagraloff\\
\affaddr{Max Planck Institute for Informatics}\\
\alignauthor Pengming Wang\\
\affaddr{Max Planck Institute for Informatics}\\
}
}

\author{Kurt Mehlhorn and Michael Sagraloff and Pengming Wang\thanks{Max Planck
  Institute for Informatics, Germany}}

\maketitle

\begin{abstract} We present an algorithm for isolating all roots of an
  arbitrary complex polynomial $p$ which also works
  in the presence of multiple roots provided that \begin{inparaenum}[1)]
\item[(1)] the number of distinct roots is given as part of the input and
\item[(2)] the algorithm can ask for arbitrarily good approximations of the coefficients of $p$.
\end{inparaenum} The algorithm outputs pairwise disjoint disks each containing one of the
distinct roots of $p$ and the multiplicity of the root contained in the disk.
 The algorithm uses approximate factorization as a subroutine. For the case where Pan's algorithm~\cite{Pan:alg} is used for the factorization, we derive complexity bounds for the problems of isolating and refining all roots which are stated in terms of the geometric locations of the roots only. Specializing the latter bounds to a
polynomial of degree $d$ and with integer coefficients of bitsize less than $\tau$, we show that $\tilde{O}(d^{3}+d^{2}\tau+d\kappa)$ bit operations are sufficient to compute isolating disks of size less than $2^{-\kappa}$ for all roots of $p$, where $\kappa$ is an arbitrary positive integer.

  In addition, we apply the new root isolation algorithm to a recent algorithm for computing
  the topology of a real planar algebraic curve specified as the zero set of a bivariate
  integer polynomial and for isolating the real solutions of a bivariate
  polynomial system. For input polynomials of degree $n$ and bitsize $\tau$, we 
  improve the currently best running
  time from
$\tO(n^{9}\tau+n^{8}\tau^{2})$ (deterministic) to $\tO(n^{6}+n^{5}\tau)$
(randomized) for topology computation and from $\tO(n^{8}+n^{7}\tau)$
(deterministic) to $\tO(n^{6}+n^{5}\tau)$ (randomized) for solving bivariate
systems.
\end{abstract}

}

\section{Introduction}
%
Root isolation is a fundamental problem of computational algebra \Kurt{and numerical analysis~\cite{Henrici64,Henrici74,McNamee-Pan,Mignotte,Yap:book}.}
Given a univariate polynomial $p$ with complex coefficients and possibly
multiple roots, 
the goal is to compute disjoint disks in the complex plane such that each disk contains exactly one root and the union of all disks covers all roots. We assume the existence of an oracle that can
be asked for rational approximations of the coefficients of arbitrary
precision. \Kurt{In particular, coefficients may be transcendental.} Note that non-rational coefficients can never be learned
exactly in finite time. 

In this generality, the problem is unsolvable. \Kurt{This is a consequence of the numerical halting problem~\cite{Yap-numerical-computation,Ker-I-Ko}. We give an example of a polynomial of degree three, for which, in the input model above, no finite algorithm can distinguish between the case of two or three distinct roots. 
When the coefficient oracle is asked for coefficients with precision $L$, it returns $1$, $\alpha_L$, $\beta_L$, and $2$, 
where $\alpha_L$ and $\beta_L$ are rational, the polynomial $p_L(x) = x^{3} + \alpha_L x^2 +
\beta_L x + 2$ has three distinct roots, and $\abs{(-2 \sqrt{2}-1) - \alpha_L} \le 2^{-L}$ and $\abs{\beta_L - (2 + 2 \sqrt{2})} \le 2^{-L}$. Observe that these answers of the oracle are consistent with the polynomials $p_L$ and $p$, where
$p(x) = (x - \sqrt{2})^2(x + 1) = x^3  + (-2 \sqrt{2}-1) x^2 +(2 + 2 \sqrt{2})x + 2$.
The former polynomial has three distinct roots and the latter polynomial has two distinct roots. Assume that the algorithm stops after asking for coefficients with precision $L$. If it outputs ``two distinct roots'', the oracle can claim that the input polynomial is $p_L$, if it outputs ``three distinct roots'', the oracle can claim that the input polynomial is $p$. In either case, the output is incorrect. }

The example shows that the problem needs to be restricted.
In addition to our assumption that the coefficients of our input polynomial $p$ are provided by coefficient oracles, we further assume that \emph{the number $k$ of
    distinct roots} is also given.\footnote{\Kurt{An alternative restriction is to be content with the computation of well-separated clusters of roots, i.e., the computation of disks  $\Delta_{i}$ and multiplicities $m_i$ such that $D_i$ contains exactly $m_i$ roots counted with multiplicity, $\sum_i m_i$ is equal to the degree of the polynomial, and substantially enlarged disks are disjoint. Our algorithm also applies to this version of the problem. We come back to it in Section~\ref{sec: well-separated clusters}.}} \KurtTwo{We would like to explain why this a reasonable assumption:}{The computation of $k$ requires symbolic methods. We would like to explain why knowledge of $k$ is nevertheless a reasonable assumption:} Root isolation is a key ingredient in the computation of a CAD (cylindrical algebraic decomposition) for a set of multivariate polynomials and, in particular, for computing the topology of algebraic curves and surfaces. In these applications, one has to deal with polynomials with multiple roots and algebraic coefficients; the coefficients are easily approximated to an arbitrary precision. In addition, the number of distinct roots is readily available from an algebraic precomputation (e.g.~computation of a subresultant sequence, triangular decomposition).
We now give an overview of our algorithm, our results, \Kurt{and related work}. 

\paragraph*{Root Isolation:} We fix some definitions which are used throughout the presentation: Let $p(x) = \sum_{i=0}^n p_ix^i \in \C[x]$, with $1/4\le |p_n|\le 1$,\footnote{The additional requirement for the leading coefficient $p_n$ yields a simpler presentation. Notice that, for general values $p_n$, we first have to multiply the polynomial $p$ by some $2^t$, with $t\in\Z$, such that $2^t\cdot |p_n|$ is contained in $[1/4,1]$.} be a complex polynomial with $k$ distinct roots $z_1,\ldots,z_k$. For $i=1,\ldots,k$, let $m_i:=\operatorname{mult}(z_{i},p)$ be the \emph{multiplicity} of $z_i$, and let $\sigma_i :=\sigma(z_{i},p):=
\min_{j \not= i} \abs{z_i - z_j}$ be the 
\emph{separation} of $z_i$ from the other roots of $p$. Then, our algorithm outputs
isolating disks $\Delta_{i}=\Delta(\tz_i,R_i)$ for the roots $z_{i}$ and
the corresponding multiplicities $m_i$. The radii satisfy 
$R_{i}<\frac{\sigma_{i}}{64n}$, and hence the center $\tilde{z}_{i}$ of
$\Delta_{i}$ approximates $z_{i}$ to an error of less than $\frac{\sigma_{i}}{64n}$. If the number of distinct roots of $p$ differs from $k$, we make no claims about termination and output.

The coefficients of $p$ are provided by 
oracles. That is, on input $L$, the oracle essentially returns
binary fraction approximations $\tilde{p}_{i}$ of the coefficients $p_{i}$ such that $\onenorm{p-\sum_{i=0}^{n} \tp_{i}x^{i}}\le 2^{-L}\onenorm{p}$. Here, $\norm{p}:=\norm{p}_{1}=|p_{0}|+\ldots+|p_{n}|$ denotes the \emph{one-norm} of $p$. The
details are given in Section~\ref{sec:setting}. \Kurt{The assumption that the coefficients are given through oracles is standard in computational real analysis~\cite{Ker-I-Ko} and numerical analysis~\cite{Henrici64,Henrici74,McNamee-Pan}, and is used in previous papers on approximate factorization and root isolation~\cite{Schoenhage,Pan:alg}}.

\Kurt{Many algorithms for approximate factorization and root isolation are known, see~\cite{Pan:survey} for a  survey. The algorithms can be roughly split into two groups: there are iterative methods for simultaneously approximating all roots (or a  single root if a sufficiently good approximation is already known), and there are subdivision methods that start with a region containing all the roots of interest, subdivide this region according to certain rules, and use inclusion- and exclusion-predicates to certify that a region contains exactly one root or no root. Prominent examples of the former group are the Aberth-Ehrlich method (used for MPSOLVE~\cite{Bini-Fiorentino}) and the Weierstrass-Durand-Kerner method. These algorithms work well in practice and are widely used. However, a complexity analysis and global convergence proof is missing. Prominent examples of the second group for isolating all complex roots are the Bolzano method~\cite{DBLP:journals/jsc/BurrK12,Yap-Sagraloff} and the splitting circle method~\cite{Schoenhage,Pan:alg}. There are also methods, e.g., the Descartes, Sturm, and continued fraction methods, for isolating the real roots of a real polynomial. Among the subdivision methods, the splitting circle method is asymptotically the best. It was introduced by Sch\"{o}nhage~\cite{Schoenhage} and later considerably refined by Pan~\cite{Pan:alg}. An implementation of the splitting circle method in the computer algebra system Pari/GP is available~\cite{Gourdon}. None of the algorithms mentioned deals specifically with multiple roots. For square-free polynomials, i.e, the case $k = n$, the subdivision methods guarantee root isolation. For integral polynomials, a square-free decomposition can be computed~\cite{Gathen-Gerhard:book}. Alternatively, separation bounds~\cite[Section 6.7]{Yap:book} can be used to guarantee isolation in the presence of multiple roots. Johnson~\cite{Johnson91}, Cheng~et.~al.~\cite{Cheng-Gao-Yap09}, and Strzebonski and Tsigaridas~\cite{Strzebonski-Tsigaridas11} discuss root isolation for polynomials with algebraic coefficients.}

\Kurt{Strzebonski~\cite{Strzebonski} presents an algorithm that deals with multiple roots in our setting. However, it has heuristic steps. The algorithm in~\cite{Mehlhorn-Sagraloff} can cope with at most one multiple
root and needs to know the number of distinct complex roots as well as the
number of distinct real roots.  Algorithms for root refinement, e.g., Newton-Raphson iteration, compute arbitrary good approximations to roots once a good initial approximation is known. Generalization to clusters of roots are provided by~\cite{Yakoubsohn00,Giusti-Lecerf-Salvy-Yakoubsohn05}.}

Our algorithm has a simple structure. \Kurt{It combines mainly known techniques. Our contribution is the right assembly into an algorithm, our novel clustering step, and the complexity analysis.} We first use any algorithm (e.g.~\cite{Bini-Fiorentino,Schoenhage,Pan:alg,Yap-Sagraloff}) for
approximately factorizing the input polynomial.  It is required that it can be run
with different levels of precision, and that, for any given integer $b$, it returns
approximations $\hz_1$ to $\hz_n$ for the roots of $p$ such that
\begin{align}\label{intro:factorization}   \norm{p -  p_n \prod\nolimits_{1 \le j \le n} (x- \hz_j)}\le 2^{-b}\norm{p}. \end{align}
In a second step, we partition
the root approximations $\hz_1$ to $\hz_n$ into \KurtTwo{$k$ clusters $C_{1},\ldots,C_k$ based on geometric
vicinity}{clusters $C_{1},C_2\ldots $ based on geometric vicinity. If the number of clusters is less than $k$, we increase the precision, and refactor. The difficulty of the clustering step lies in the fact that the amounts by which roots will move after a perturbation of the coefficients (recall that, in our input model, we only see perturbations of the true coefficients) depends heavily on the multiplicity of the root.} We enclose each cluster $C_{i}$ in a disk $D_{i}=\Delta(\tilde{z}_{i},r_{i})$ and make sure that the disks are pairwise disjoint and that the radii $r_{i}$ are not ``too small'' compared to the pairwise distances of the centers $\tz_{i}$.\footnote{This is crucial to control the cost for the final verification step. For details, we refer to Sections~\ref{sec:clustering} and \ref{sec:certification}.} In a third step, we verify that the $n$-times enlarged disks
$\Delta_{i}=\Delta(\tilde{z}_{i},R_{i})=\Delta(\tilde{z}_{i},n\cdot r_{i})$ are disjoint and that each of them
contains exactly
the same number of approximations as roots of $p$ counted with
multiplicity. \Kurt{As in~\cite{Yakoubsohn00,Giusti-Lecerf-Salvy-Yakoubsohn05}, we use Rouch\'{e}'s theorem for the verification step.}
If the clustering and the verification succeed, we return the disks $\Delta_1,\ldots,\Delta_k$
and the number of approximations $\hz\in\{\hz_{1},\ldots,\hz_{n}\}$ in the disk as
the multiplicity of the root isolated by the disk. If either clustering or verification does not
succeed, we repeat with a higher precision. Strzebonski~\cite{Strzebonski} has
previously described a similar approach. The main difference is that he used a
heuristic for the clustering step and hence could neither prove completeness of
his approach nor analyze its complexity. He reports that his algorithm does
very well in the context of CAD computation. 

In the example above, we would have the additional information that $p$
  has exactly two distinct roots. We ask the oracle for an $L$-approximation of
  $p$ for sufficiently large $L$ and approximately factor it. Suppose that we obtain
 approximations $-1.01$, $1.4$, and $1.42$ of the roots, and let $\hp = (x + 1.01)(x - 1.4)(x - 1.42)$.
 The clustering step may then put  the first approximation
  into a singleton cluster and the other two approximations into a cluster of
  size two. It also computes disjoint enclosing disks. The verification step
  tries to 
 certify that $p$ and $\hp$ contain the same number of roots in both
 disks. If $L$ and $b$ are sufficiently large, clustering and verification succeed.

If Pan's algorithm~\cite{Pan:alg} is used for the approximate
factorization step, then the overall algorithm has bit complexity\footnote{$\tilde{O}$ indicates that we omit logarithmic factors.}
\begin{align}\label{intro:result1}
 		{\mathlarger{ \tilde{O}}}\left(n^{3}+n^{2}\sum_{i=1}^{k}\log M(z_{i})+n\sum_{i=1}^{k} \log \left(M(\sigma_{i}^{-m_{i}})\cdot M(P_{i}^{-1})\right)\right)
		  \end{align}
where $P_{i}:=\prod_{j\neq i} (z_{i}-z_{j})^{m_{j}}=\frac{p^{(m_{i})}(z_{i})}{m_{i}!p_{n}}$, and $M(x):=\max(1,|x|)$.
Observe that our algorithm is adaptive in a very strong sense, namely, the above bound exclusively depends on the actual multiplicities and the geometry (i.e.~the actual modulus of the roots and their distances to each other) of the roots. 
There is also no dependency on
the size or the type (i.e.~whether they are rational, algebraic or
transcendental) of the coefficients of $p$. 

Our algorithm can also be used to further refine the isolating disks to a size
of $2^{-\kappa}$ or less, where $\kappa$ is a given integer. The bit complexity
for the refinement is given by the bound in (\ref{intro:result1}) plus an
additional term $\tilde{O}(n\cdot \kappa\cdot \max_i m_i )$. In particular, for square-free polynomials the amortized cost per root and bit of precision is
$\tO(1)$, and hence the method is optimal up to polylogarithmic factors.

For the benchmark problem of isolating all roots of a polynomial $p$ with \emph{integer} coefficients of absolute value bounded by
$2^{\tau}$, the bound in (\ref{intro:result1}) becomes
$\tO(n^3 + n^2 \tau)$.\footnote{We first divide $p$ by its leading coefficient to meet the requirement on the leading coefficient, and apply our algorithm to $p/p_n$.} The bound for the refinement becomes $\tO(n^3 + n^2 \tau+n\kappa)$, even if 
there exist multiple roots. 

For a square-free integer polynomial $p$, an algorithm by Pan~\cite[Theorem 3.1]{Pan:survey} achieves a comparable complexity bound for the benchmark problem. That is, based on the computations in~\cite[Section~20]{Schoenhage}, one can compute a bound $b_{0}$ of size $\Theta(n(\tau+\log n))$ with the property that if $n$ points $\hz_{j}\in\C$ fulfill the inequality (\ref{intro:factorization}) for a $b\ge b_{0}$, then they approximate the corresponding roots $z_{j}$ to an error less than $\sigma_{j}/(2n)$; cf. Lemma~\ref{lem:approximation quality} for an adaptive version. Hence, for $b\ge b_{0}$, Pan's factorization algorithm also yields isolating disks for the roots of $p$ using $\tO(n^{2}\tau)$ bit operations. Note while this approach achieves a good worst case complexity, however, it is for the price of running the factorization algorithm with $b=\Theta(n(\tau+\log n))$ even when the roots are well conditioned. In contrast, our algorithm turns Pan's factorization algorithm into a highly adaptive method for isolating and approximating the roots of a general polynomial.
Also, for general polynomials, there exist bounds~\cite[Section 19]{Schoenhage} for the distance between the roots of $p$ and corresponding approximations fulfilling (\ref{intro:factorization}). They are optimal for roots of multiplicity $\Omega(n)$ but overestimate badly if all roots have considerably smaller multiplicities. For the task of root refinement, the bit complexity of our method depends on $\kappa$ as $\tO(n\max_{i} m_{i}\cdot\kappa)$ and, hence, it adapts to the highest occurring multiplicity, whereas previous 
methods~\cite{qir-kerber-11,Pan:alg,Sagraloff12} depend as  $\tO(n^{2}\kappa)$.

\paragraph*{Topology Computation and Computing Real Solutions of Bivariate Systems}
Our new root isolation algorithms has an interesting consequence on the complexity of computing the topology (in terms of a cylindrical algebraic
decomposition) of a real planar algebraic curve specified
as the zero set of an integer polynomial and of isolating the real solutions of a
bivariate polynomial system. \Kurt{Both problems are well-studied~\cite{Collins:CAD,bpr-arag-06,beks:top2D,Bouzidi-et-al-2,Bouzidi-et-al-1,Cheng-et-al,Cheng-Jin-Lazard13,cjk-iaicad-02,Diochnos:2009,Eigenwillig-Kerber-Wolpert,es-bisolvecomplexity-11,Emiris-Tsigaridas05,Gonzales-Vega,Hong,ks-top-12,Rouillier-Survey-BivariateSystemSolving,Strzebonski}}.
\Kurt{The latter problem can be reduced to the former as the real solutions of the bivariate system $f(x,y) = g(x,y) = 0$ correspond to the points on the real curve $f^2(x,y) + g^2(x,y) = 0$.} In Section~\ref{sec:curveana}, we apply our method
to a recent algorithm \textsc{TopNT}~\cite{beks:top2D}
for computing the topology of a planar algebraic curve. This yields bounds on the \emph{expected} 
number of bit operations.
which improve the currently best (which are both deterministic)
bounds~\cite{es-bisolvecomplexity-11,ks-top-12} from
$\tO(n^{9}\tau+n^{8}\tau^{2})$ to $\tO(n^{6}+n^{5}\tau)$ for topology computation and from $\tO(n^{8}+n^{7}\tau)$
to $\tO(n^{6}+n^{5}\tau)$ for solving bivariate
systems. 

\KurtTwo{The improvement is due to a series of ideas. First, we exploit the fact that \textsc{TopNT} mainly uses resultant and gcd computations a purely symbolic operations which allows to derive low bit complexity for the complexity bounds. In addition, as numerical steps, \textsc{TopNT} uses root isolation and refinement as considered in the first part of this paper. In particular, we use our algorithm for computing the roots of the "fiber" polynomials $f(\alpha,y)\in\C[y]$, where $\alpha$ is an $x$-critical point of a planar algebraic curve $C=V(f)\subset\C^2$ and the number of distinct roots of $f(\alpha,y)$ is given from an algebraic pre computation.
Combining the adaptive complexity bounds from this paper and amortized complexity bounds from~\cite{ks-top-12} for all fiber polynomials $f(\alpha,y)$ eventually yields considerably improved complexity bounds for the numerical steps as well.}{As several other recent algorithms~\cite{Strzebonski,Eigenwillig-Kerber-Wolpert,Cheng-Gao-Yap09,Cheng-Jin-Lazard13} for topology computation or bivariate system solving, \textsc{TopNT} uses numerical computation as much as possible. In particular, the symbolic operations are restricted to resultant and $\gcd$ computations, which do not dominate the overall bit complexity. The workhorse in \textsc{TopNT} is root isolation and refinement as considered in the first part of this paper, in particular, the isolation of the roots of the "fiber" polynomials $f(\alpha,y)\in\C[y]$,  where $\alpha$ is an $x$-critical point of a planar algebraic curve defined as the vanishing set of a polynomial $f \in \Z[x,y]$. The number of distinct roots of $f(\alpha,y)$ is available from an algebraic precomputation.
Combining the adaptive complexity bounds from this paper and the amortized complexity bounds from~\cite{ks-top-12} for all fiber polynomials $f(\alpha,y)$ eventually yields considerably improved complexity bounds for the numerical steps.}
\bigskip


\paragraph*{Paper History:} An extended abstract~\cite{MSW13} of this paper was presented at ISSAC 2013. The current paper extends the conference version significantly. In particular, the analysis of the algorithm for root isolation (i.e.~the results in Section~\ref{sec:rootisolation})
was only sketched (Lemma~\ref{lem:tauTau} 
and Theorem~\ref{thm:rootbound} were stated without proof, and only a sketch of the proof of Theorem~\ref{thm:square free complexity} was given), and the application 
of our root isolation algorithm to curve topology computation and to solving bivariate polynomial systems as well as the corresponding analysis (i.e.~Section~\ref{sec:curveana}) was not covered at all in the extended abstract.

\newpage


\section{Root Finding}\label{sec:rootisolation}

\subsection{Setting and Basic Properties}\label{sec:setting}

We consider a polynomial
\begin{align}
p(x)=p_{n}x^{n}+\ldots+p_{0}\in\C[x]\label{def:poly}
\end{align}
of degree $n\ge 2$, where $1/4\le p_{n}\le 1$. We fix the following notations:\medskip
\begin{itemize}
\item $M(x):=\max(1, |x|)$, for $x\in \mathbb{R}$,
\item $\tau_{p}$ denotes the minimal non-negative integer with
  $\frac{|p_{i}|}{|p_n|}\le2^{\tau_{p}}$ for all $i=0,\ldots,n-1$, 
\item $\onenorm{p} :=\norm{p}_{1}:= \abs{p_0} + \ldots + \abs{p_n}$ denotes the $1$-norm of
  $p$,
\item $z_{1},\ldots,z_k\in\C$ are the distinct complex roots of $p$, with $k\le n$,
\item $m_{i}:=\operatorname{mult}(z_{i},p)$ is the multiplicity of $z_{i}$,
\item $\sigma_{i}:=\sigma(z_{i},p):=\min_{j\neq i}|z_{i}-z_{j}|$ is the \emph{separation} of $z_{i}$,
\item $\Gamma_p:=M(\max_i \log M(z_i))$ denotes the \emph{logarithmic root bound} of $p$, 
\item $\Mea(p)=|p_n|\cdot\prod_i M(z_i)^{m_{i}}$ denotes the \emph{Mahler Measure} of $p$. 
\end{itemize}

The quantities $\tau_p$, $\Gamma_p$, $|p_{n}|$ and $\Mea(p)$ are closely related.

\begin{lemma}\label{lem:tauTau} $\Gamma_p \le  1 + \tau_p$ and $\tau_p-n-1 \le  \log\frac{\Mea(p)}{|p_{n}|}\le n\Gamma_{p}$. \end{lemma}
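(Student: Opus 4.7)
The plan is to address the three inequalities separately, each relying on a classical coefficient–root comparison. None of the steps is particularly hard individually; the only nontrivial input is Mignotte's (or Landau's) bound relating coefficients to the Mahler measure, which I would either cite or recall in one line.

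First I would prove $\Gamma_p\le 1+\tau_p$ via the standard Cauchy root bound. Any root $z$ of $p$ satisfies $|z|\le 1+\max_{i<n}|p_i|/|p_n|$, and by the definition of $\tau_p$ this is at most $1+2^{\tau_p}\le 2^{1+\tau_p}$. Taking logs gives $\log|z_i|\le 1+\tau_p$ for every root, and since $\tau_p\ge 0$, also $\Gamma_p=\max(1,\max_i\log|z_i|)\le 1+\tau_p$.

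Next I would handle the upper bound $\log(\Mea(p)/|p_n|)\le n\Gamma_p$ directly from the definitions. By construction $\Mea(p)/|p_n|=\prod_i M(z_i)^{m_i}$, so
\[
\log\frac{\Mea(p)}{|p_n|}=\sum_{i=1}^{k}m_i\log M(z_i)\le \Gamma_p\sum_{i=1}^k m_i= n\Gamma_p,
\]
using $\log M(z_i)=\max(0,\log|z_i|)\le \Gamma_p$ and $\sum_i m_i=n$.

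The main (and only) non-routine step is the lower bound $\tau_p-n-1\le\log(\Mea(p)/|p_n|)$, for which I would invoke the Mignotte/Mahler coefficient bound $|p_{n-j}|\le\binom{n}{j}\Mea(p)$, which follows from expanding $p=p_n\prod(x-z_i)^{m_i}$ via elementary symmetric functions in the $|z_i|$ and using $|z_i|\le M(z_i)$ together with $e_j(M(z_1),\ldots,M(z_n))\le\binom{n}{j}\prod_i M(z_i)^{m_i}$. Summing (or just taking the max) gives $\max_{i<n}|p_i|/|p_n|\le 2^n\,\Mea(p)/|p_n|$. If $\tau_p\ge 1$, minimality forces $2^{\tau_p-1}<\max_{i<n}|p_i|/|p_n|$, so $\tau_p-1<n+\log(\Mea(p)/|p_n|)$, i.e.\ $\tau_p-n-1<\log(\Mea(p)/|p_n|)$. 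If $\tau_p=0$, the inequality is trivial since $\Mea(p)\ge|p_n|$ and hence $\log(\Mea(p)/|p_n|)\ge 0\ge-n-1$. I expect the bookkeeping around the ``minimal non-negative integer'' in the definition of $\tau_p$ (the off-by-one between $\tau_p$ and $\log\max_i|p_i|/|p_n|$) to be the only place where one must be a little careful.
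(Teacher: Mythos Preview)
Your proof is correct and follows essentially the same route as the paper: Cauchy's root bound for $\Gamma_p\le 1+\tau_p$, the Mignotte-type coefficient bound $|p_i|\le\binom{n}{i}\Mea(p)\le 2^n\Mea(p)$ combined with minimality of $\tau_p$ for the lower bound, and the trivial estimate $\prod_i M(z_i)^{m_i}\le 2^{n\Gamma_p}$ for the upper bound. If anything, your write-up is slightly more explicit than the paper's (you spell out $1+2^{\tau_p}\le 2^{1+\tau_p}$ and handle the $\tau_p=0$ case separately), but the ideas are identical.
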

\begin{proof} By Cauchy's root bound $\max_i \abs{z_i} \le 1+ \max_i
  \abs{p_i}/\abs{p_n}$, and thus $\max_i \log \abs{z_i} \le 1 + \tau_p$. Since
  $\tau_p \ge 0$, by definition, we have $\Gamma_p \le 1 + \tau_p$. 
The $i$-th coefficient of $p$ is smaller than or equal to
$\binom{n}{i}\Mea(p)\le 2^{n}\Mea(p)\le 2^{n(\Gamma_{p}+1)}$.\ignore{\michael{I went back to the old version. Note that the Mahler measure already contains the factor $|p_{n}|$}} Thus, from the definition of $\tau_{p}$,
either $\tau_{p}=0$ or
$2^{n}\frac{\Mea(p)}{|p_{n}|}\ge \max_i\frac{|p_i|}{|p_n|} \ge 2^{\tau_{p}-1}$
\end{proof}

\ignore{\Kurt{We assume the existence of an oracle which provides arbitrary good approximations of the  polynomial $p$. Let $L \ge 1$ be an integer. 
We call a polynomial $\tp=\tp_{n}x^{n}+\ldots+\tp_{0}$, with
$\tp_{i}=s_{i}\cdot 2^{-L - \ceil{\log(n+1)}-1}$ and $s_{i} \in \Z$,  an
\emph{approximation of precision $L$} of $p$ if there is an integer $K$ such that 
$|\tp_{i}-2^{-K}p_{i}| \le 2^{-L-\log (n+1)}2^{-K}\onenorm{p}$, and
$\log \abs{s_i} \le L + \ceil{\log(n+1)} + 4$ for all $i$. The rationale behind this definition is as follows. First, such an approximation always exists. Assume, $K$ is such that $2^K \le \onenorm{p} \le 4 \cdot 2^K$, and let $s_i \in \Z$ be such that 
$\abs{s_i - 2^{-K + L + \ceil{\log(n+1)}+1}p_i}\le 2$. Then
$\abs{s_i} \le 2 + 2^{-K + L + \ceil{\log(n+1)}+2}\abs{p_i} 
\le 2^{L + \ceil{log(n+1)} + 4}$, and 
$\abs{\tp_{i}-2^{-K}p_{i}} \le 2 \cdot 2^{-L-\log (n+1)-1} \le 2^{-L-\log (n+1)}2^{-K}\onenorm{p}$. Second, the oracle needs only an factor four approximation of the norm of $p$ and also has some leeway in approximating the leading bits of any $p_i$. Third, our algorithms only need to know an approximation of $p$ up to scaling. Therefore, we decided to make $K$ a hidden quantity. }}

We assume the existence of an oracle which provides arbitrary good approximations of the  polynomial $p$. Let $L \ge 1$ be an integer. 
We call a polynomial $\tp=\tp_{n}x^{n}+\ldots+\tp_{0}$, with
$\tp_{i}=s_{i}\cdot 2^{-(L+1)}$ and $s_{i} \in \Z$,  an
\emph{absolute $L$-approximation} of $p$ if $|\tp_{i}-p_{i}| \le 2^{-L}$. 
We further assume that we can ask for such an approximation $\tp$ for the cost $\tO(n(L+\tau_p))$. This is the cost of reading the coefficients of $\tp$.

We call a polynomial $\tp=\tp_{n}x^{n}+\ldots+\tp_{0}$, with
$\tp_{i}=s_{i}\cdot 2^{-(L+1)}$ and $s_{i} \in \Z$,  a
\emph{relative $L$-approximation} of $p$ if $\onenorm{\tp-p} \le 2^{-L}\onenorm{p}$. Since $L\ge 1$, the triangle inequality implies that
  \begin{align}\label{apx:normp}
  \frac{\onenorm{\tp}}{2} \le \onenorm{p} \le 2 \onenorm{\tp}.
  \end{align}
Furthermore, notice that any absolute $(L+\lceil\log (n+1)\rceil+2)$-approximation of $p$ is also a relative $L$-approximation of $p$ because of $\onenorm{\tp-p}\le (n+1)\cdot 2^{-L-\lceil\log (n+1)\rceil-2}\le |p_n|\cdot 2^{-L}\le \onenorm{p}\cdot 2^{-L}$. Hence, we can ask for a relative $L$-approximation for the cost $\tO(n(L+\tau_p))=\tO(n(L+\Mea(p)))$.

\ignore{
Then, $\abs{\tp_{i}-p_{i}} \le 2 \cdot 2^{-L-\log (n+1)-1} \le 2^{-L-\log (n+1)}\onenorm{p}$. Second, the oracle 
needs only a factor four approximation of the norm of $p$ and also has some leeway in approximating the 
leading bits of any $p_i$. Notice that, for an approximation $\tilde{p}$ of precision $L$, it holds that 
$\onenorm{p-\tilde{p}}\le 2^{-L}\cdot\onenorm{p}$.

\emph{We assume that we can ask for an approximation of precision $L$ of $p$ at cost $O(n (L
  + \log n+\log\tau_p) ) = \tO(n(L+\log\tau_p))$.} This is the cost of reading the approximation of precision $L$.\footnote{Notice that, for reading a binary } 
  
  \Kurt{Our algorithm learns about $p$ only through $L$-approximations. \emph{We may therefore assume w.l.o.g.~that $K = 1$ in the definition above.}. Formally, we should give $2^{-K} p$ a new name and, in the sequel, always refer to the polynomial with the new name. }
  } 
  
\ignore{  
The next Lemma states some elementary properties of approximations of precision $L$.

\begin{lemma}\label{lem:basic} Let $\tp$ be a relative approximation of precision $L$ of $p$ and $L \ge 1$. 
\begin{itemize}
\item $\frac{\onenorm{\tp}}{2} \le \onenorm{p} \le 2 \onenorm{\tp}$.
\item If $L \ge \tau_p + 4$, then $2^{-L - \log(n+1)} \onenorm{\tp} \le
  \abs{\tp_n}/4$. 
\item If $2^{-L - \log(n+1)} \onenorm{\tp} \le \abs{\tp_n}/4$, then $\abs{\tp_n}/2
  \le \abs{p_n} \le 2 \abs{\tp_n}$. 
\end{itemize}
\end{lemma}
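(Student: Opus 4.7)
The plan is to derive all three bullets from the defining inequality $|\tp_i - p_i| \le 2^{-L-\log(n+1)}\onenorm{p}$, using only the triangle inequality and Lemma~\ref{lem:tauTau} to pass between $\tau_p$ and the ratio $\onenorm{p}/|p_n|$.

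For the first bullet, I would sum the coefficient‑wise bound to get $\onenorm{\tp - p} \le (n+1)\cdot 2^{-L-\log(n+1)}\onenorm{p} = 2^{-L}\onenorm{p}$, which is at most $\onenorm{p}/2$ because $L \ge 1$. A two‑sided triangle inequality then gives $\onenorm{p}/2 \le \onenorm{\tp} \le (3/2)\onenorm{p}$, which in particular implies the claimed $\onenorm{\tp}/2 \le \onenorm{p} \le 2\onenorm{\tp}$.

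For the second bullet, I would first upgrade the comparison for the leading coefficients. From $|p_i| \le 2^{\tau_p}|p_n|$ for all $i<n$ (and $|p_n| \le 2^{\tau_p}|p_n|$ trivially) one gets $\onenorm{p} \le (n+1)\,2^{\tau_p}|p_n|$, hence
\[
|\tp_n - p_n| \le 2^{-L-\log(n+1)}\onenorm{p} \le 2^{\tau_p - L}|p_n|.
\]
With $L \ge \tau_p + 4$ this is at most $|p_n|/16$, so $|\tp_n|$ and $|p_n|$ agree up to a factor of two, in particular $|p_n| \le 2|\tp_n|$. Combining with the first bullet and the same chain of inequalities,
\[
2^{-L-\log(n+1)}\onenorm{\tp} \le 2\cdot 2^{\tau_p-L}|p_n| \le 2^{\tau_p-L+1}\cdot 2|\tp_n| \le \frac{|\tp_n|}{4},
\]
again using $L \ge \tau_p + 4$.

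For the third bullet the hypothesis is already tailored to give the conclusion: combining it with the first bullet yields
\[
|\tp_n - p_n| \le 2^{-L-\log(n+1)}\onenorm{p} \le 2\cdot 2^{-L-\log(n+1)}\onenorm{\tp} \le \frac{|\tp_n|}{2},
\]
so $|\tp_n|/2 \le |p_n| \le (3/2)|\tp_n| \le 2|\tp_n|$. No real obstacle arises; the only subtlety is to avoid a circular use of the first bullet in the proof of the second, which is why I would first bound $|\tp_n - p_n|$ purely in terms of $|p_n|$ (via $\tau_p$) before invoking $\onenorm{p} \le 2\onenorm{\tp}$ to convert the desired bound into one on $\onenorm{\tp}$ and $|\tp_n|$.
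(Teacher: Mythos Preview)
Your proposal is correct and follows essentially the same approach as the paper: sum the coefficient-wise error to get $\onenorm{\tp-p}\le 2^{-L}\onenorm{p}$ for the first bullet, bound $|\tp_n-p_n|\le 2^{\tau_p-L}|p_n|\le|p_n|/16$ via $\onenorm{p}\le(n+1)2^{\tau_p}|p_n|$ for the second, and combine the hypothesis with the first bullet for the third. One minor remark: you never actually need Lemma~\ref{lem:tauTau}; the definition of $\tau_p$ (giving $|p_i|\le 2^{\tau_p}|p_n|$) suffices, and the paper also proceeds without it.
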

\begin{proof} From the inequality $\abs{\onenorm{\tp} - \onenorm{p}} \le \sum_i \abs{\tp_i -
    p_i}$, we conclude that $\abs{\onenorm{\tp} - \onenorm{p}} \le 2^{-L}
  \onenorm{p} \le \onenorm{p}/2$. This establishes the first claim. 
For the second claim, we observe that
\[ \abs{\tp_n - p_n} \le 2^{-L-\log
  (n+1)}\onenorm{p} \le 2^{-L +\tau_p} \abs{p_n} \le \abs{p_n}/16.\]Thus,
$\abs{\tp_n} \ge \abs{p_n}/2$ and  
\[ 2^{-L-\log (n+1)}\onenorm{\tp} \le 2\cdot 2^{-L-\log (n+1)}\onenorm{p} \le \abs{p_n}/8 \le \abs{\tp_n}/4.\]
The third claim follows from
\[ \abs{\tp_n - p_n} \le 2^{-L-\log
  (n+1)}\onenorm{p} \le 2 \cdot 2^{-L-\log
  (n+1)}\onenorm{\tp} \le \abs{\tp_n}/2.\]
\end{proof} 

Lemma~\ref{lem:basic} suggests an efficient method for estimating
$p_n$. We ask for approximations $\tp$ of precision $L$ of $p$ for $L = 1,2,4,\ldots$ until
the inequality 
$2^{-L - \log(n+1)} \onenorm{\tp} \le \abs{\tp_n}/4$ holds. Then, $\abs{\tp_n}/2
  \le \abs{p_n} \le 2 \abs{\tp_n}$ by part 3 of the Lemma. Also $L \le 2(\tau_p
  + 4)$ by part 2 of the above Lemma. The cost is $\tO(n \tau_p) = \tO(n^2
  \Gamma_p)$ bit operations, where we used the upper bound for $\tau_{p}$ from Lemma~\ref{lem:tauTau}. Observe that this
bound depends only on the geometry of the roots (i.e.~the actual root bound
$\Gamma_{p}$) and the degree but not (directly) on the size of the coefficients of $p$.}
In the next step, we show that 
a ``good'' integer approximation $\Gamma$ of $\Gamma_{p}$ can be computed
with $\tO(n^2 \Gamma_p)$ bit operations.

\begin{theorem}\label{thm:rootbound}
An integer $\Gamma\in\N$ with
\begin{align}
\Gamma_p\le\Gamma<8\log n+\Gamma_p\label{inequ:rootbound}
\end{align}
can be computed with $\tO(n^{2}\Gamma_p)$ bit operations. The computation uses
an absolute $L$-approximation of precision $L$ of $p$ with $L=O(n\Gamma_{p})$.
\end{theorem}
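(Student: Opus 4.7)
The plan rests on Fujiwara's classical root bound and its converse. Write $R := \max_j |z_j|$ and
\[
U \;:=\; \max_{0 \le i < n,\, p_i \ne 0} \frac{\log(|p_i|/|p_n|)}{n-i}.
\]
Fujiwara gives $R \le 2 \cdot 2^U$, while expanding $p = p_n\prod_j(x-z_j)$ via elementary symmetric polynomials gives $|p_i|/|p_n| \le \binom{n}{n-i} R^{n-i} \le n^{n-i} R^{n-i}$, i.e.\ $U \le \log R + \log n$. Together $U - \log n \le \log R \le U + 1$, so $U$ already pins down $\Gamma_p = \max(1,\log R)$ to within an additive $1 + \log n$, well inside the $8\log n$ slack of~\eqref{inequ:rootbound}. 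The plan is therefore to compute an $O(1)$-accurate approximation $\tilde U$ of $U$ and round up with a small constant offset.

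First, I would run the doubling procedure sketched immediately before the theorem: double $L$ until the test $2^{-L-\log(n+1)}\onenorm{\tp} \le |\tp_n|/4$ succeeds. By parts~2 and~3 of Lemma~\ref{lem:basic} this stops at some $L_0 \le 2(\tau_p+4)$ and certifies $|\tp_n|/2 \le |p_n| \le 2|\tp_n|$. I then read one further approximation $\tp$ of precision $L := L_0 + c'$ for a small absolute constant $c'$ (say $c' = 10$). By Lemma~\ref{lem:tauTau}, $L = O(\tau_p) = O(n\Gamma_p)$, and the geometric-series cost of all doubled reads is $\tO(nL) = \tO(n^2\Gamma_p)$.

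Next, I would set $\tilde U := \max_{i < n} \log_2(|\tp_i|/|\tp_n|)/(n-i)$, treating $\tp_i = 0$ as $-\infty$, and compute $\tilde U$ using only $O(\log n)$-bit fixed-point arithmetic (the most-significant-bit positions of the $\tp_i$ essentially suffice), adding $\tO(n)$ work. To bound $|\tilde U - U|$ I use $|\tp_i - p_i| \le 2^{\tau_p - L}|p_n| = 2^{-c'}|p_n|$ together with the control on $|\tp_n|$ in a two-case split: if $|p_i|/|p_n| \ge 2^{-c'+2}$, then $|\tp_i|/|\tp_n|$ agrees with $|p_i|/|p_n|$ up to a multiplicative factor in $[3/8, 5/2]$, so $\tilde\ell_i := \log_2(|\tp_i|/|\tp_n|)/(n-i)$ agrees with $\log_2(|p_i|/|p_n|)/(n-i)$ to within an additive $O(1)$; otherwise both the true and computed $\ell_i$ are at most an absolute constant depending on $c'$. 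Moreover, whenever $\log R > 1$ the index $i^\ast$ attaining $U$ satisfies $|p_{i^\ast}|/|p_n| \ge 1$ and falls in the first case, so $|\tilde U - U| = O(1)$; when $\log R \le 1$, both $U$ and $\tilde U$ are at most $\log n + O(1)$.

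Finally, I output $\Gamma := \max(1, \lceil \tilde U \rceil + c_0)$ for a suitable integer constant $c_0$. Combining the sandwich on $U$ with $|\tilde U - U| = O(1)$ (resp.\ $\tilde U \le \log n + O(1)$ in the degenerate regime $\log R \le 1$) gives $\Gamma \ge \Gamma_p$ and $\Gamma \le \Gamma_p + O(\log n) < \Gamma_p + 8\log n$, as required. The main obstacle is the index $n-i = 1$: division by $n-i$ then does \emph{not} shrink the $O(1)$ absolute relative error, which is why the precision must exceed $\tau_p$ by a small constant $c'$ rather than stopping at the $L \le 2(\tau_p+4)$ already produced by the $p_n$-estimation phase. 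Once this extra slack is budgeted, both sides of the sandwich are robust enough to deliver~\eqref{inequ:rootbound}.
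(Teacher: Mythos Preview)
Your approach is correct and genuinely different from the paper's. The paper works with the Cauchy polynomial $\bar p(x)=|p_n|x^n-\sum_{i<n}|p_i|x^i$, uses that its unique positive root $\xi$ satisfies $\max_i|z_i|\le\xi<2n\max_i|z_i|$, and locates $\xi$ to within a factor of~$2$ by exponential-plus-binary search, evaluating $\bar p(2^k)$ via interval arithmetic with precision $\rho=O(\tau_p+nk)$. You instead read off Fujiwara's bound $U=\max_{i<n}\log(|p_i|/|p_n|)/(n-i)$ directly from the coefficients and use the two-sided estimate $U-\log n\le\log R\le U+1$. Your route avoids any polynomial evaluation or search loop; the paper's route avoids the somewhat delicate bookkeeping of per-coefficient relative errors and the tight constant accounting needed to stay under $8\log n$. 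Both arrive at $\tO(n^2\Gamma_p)$ for the same reason, namely a single oracle read at precision $O(\tau_p)=O(n\Gamma_p)$.

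One step in your write-up needs repair. You assert $|\tp_i-p_i|\le 2^{\tau_p-L}|p_n|=2^{-c'}|p_n|$, which silently uses $L=\tau_p+c'$; but you defined $L=L_0+c'$, and the doubling procedure may stop at an $L_0$ strictly smaller than $\tau_p$ (only $L_0\le 2(\tau_p+4)$ is guaranteed). The conclusion is nevertheless correct: the successful test $2^{-L_0-\log(n+1)}\|\tp\|\le|\tp_n|/4$ combined with parts~1 and~3 of Lemma~\ref{lem:basic} gives $2^{-L_0-\log(n+1)}\|p\|\le|p_n|$, whence at precision $L=L_0+c'$ one has $|\tp_i-p_i|\le 2^{-L-\log(n+1)}\|p\|\le 2^{-c'}|p_n|$ as you need. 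With this fix (and noting that the same bound applies to $\tp_n$, so $|\tp_n|/|p_n|$ is in fact within $1\pm2^{-c'}$ rather than merely $[1/2,2]$, giving you extra slack in the constants), the rest of your argument goes through.
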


\begin{proof}
\ignore{We first compute an approximation $\tp$ of precision $L$ of $p$ with $\abs{\tp_n}/2 \le \abs{p_n}
\le 2 \cdot \abs{\tp_n}$ as described above. Let 
$\kappa:=2^{\floor{\log \abs{\tp_{n}}} - 1}$. Then $\kappa \le \abs{\tp_n}/2 \le
  \abs{p_n}$ and $\kappa \ge 2^{\floor{\log \abs{p_{n}/2}} - 1 }\ge \abs{p_n}/8$.
Consider the scaled polynomial
  $q := p/\kappa$. Its leading coefficient lies between $1$ and $8$.  Moreover,
$\Gamma_q = \Gamma_{p}$, $\tau_q = \tau_{p}$, and an arbitrary approximation of precision $L$ of $p$ 
also yields an approximation of precision $L$ of $q$ since division by $\lambda$ is just a shift 
by $\log\kappa$ bits and $\onenorm{q} =\onenorm{p}/\kappa$.
Hence, we may assume that $1 \le \abs{p_n} \le 8$.} 
 
We consider the \emph{Cauchy polynomial}
$$\bar{p}(x):=|p_n|x^n-\sum_{i=0}^{n-1}|p_i|x^i$$ of $p$. Then, according
to~\cite[Thm. 8.1.4.]{rahman2002analytic} or~\cite[Thm. 3.8(e)]{vdSl70}, $\bar{p}$ has a unique positive
real root $\xi\in\R^+$, and the following inequality holds:
\[ \max_i \abs{z_i} \le \xi < \frac{n}{\ln 2}\cdot \max_i \abs{z_i} <2 n\cdot
  \max_i \abs{z_i}.\]
It follows that $\bar{p}(x)>0$ for all $x\ge \xi$ and $\bar{p}(x)< 0$ for all
$x< \xi$. Furthermore, since $\bar{p}$ coincides with its own Cauchy
polynomial, each complex root of $\bar{p}$ has absolute value less than or
equal to $|\xi|$.  
Let $k_{0}$ be the smallest non-negative integer 
$k$ with $\bar{p}(2^k)>0$ (which is equal to the smallest $k$ with $2^{k}>\xi$). Our goal is to 
compute an integer $\Gamma$ with $k_{0}\le\Gamma\le k_{0}+1$. Namely, if $\Gamma$ fulfills the 
latter inequality, then $M(\max_{i}|z_{i}|)\le M(\xi)\le 2^{\Gamma}<4 M(\xi)< 8n\cdot M(\max_{i}|z_{i}|)$, and thus $\Gamma$ fulfills inequality (\ref
{inequ:rootbound}). 
In order to compute a $\Gamma$ with $k_{0}\le\Gamma\le k_{0}+1$, we use
exponential and binary
search (try $k = 1, 2, 4, 8, \ldots$ until $\bar{p}(2^k) > 0$ and, then, perform
binary search on the interval $k/2$ to $k$) and approximate evaluation of $\bar{p}$ at the points $2^{k}$: More
precisely, we evaluate $\bar{p}(2^{k})$ using interval arithmetic with a
precision $\rho$ (using fixed point arithmetic) which guarantees that the width
$w$ of $\IBox(\bar{p}(2^k),\rho)$ is smaller than $1$,
where $\IBox(E,\rho)$ is the interval obtained by evaluating a polynomial
expression $E$ via interval arithmetic with precision $\rho$ for the basic
arithmetic operations; see~\cite[Section 4]{qir-kerber-11} for details. We
use~\cite[Lemma 3]{qir-kerber-11} to estimate the cost for each such
evaluation: Since $\bar{p}$ has coefficients of size less than
$2^{\tau_{p}}|p_{n}|<2^{\tau_{p}}$, we have to choose $\rho$ such that  
\[
2^{-\rho+2}(n+1)^2 2^{\tau_{p}+nk}<\frac{1}{4}
\] 
in order to ensure that $w < 1/4$. 
Hence, $\rho$ is bounded by $O(\tau_{p}+nk)$ and, 
thus, each interval evaluation needs $\tilde{O}(n(\tau_{p}+nk))$ bit
operations. We now use exponential plus binary search to find the  
smallest $k$ such that $\IBox(\bar{p}(2^k),\rho)$ contains only positive
values. The following argument  
then shows that $k_{0}\le k\le k_{0}+1$: Obviously, we must have $k\ge k_{0}$
since 
$\bar{p}(2^{k}) 
<0$ and $\bar{p}(2^{k})\in \IBox(\bar{p}(2^k),\rho)$ for 
all $k<k_{0}$. Furthermore, the point $x=2^{k_{0}+1}$ has distance more than $1$ to each of the roots 
of $\bar{p}$, and thus $|\bar{p}(2^{k_{0}+1})|\ge |p_{n}|\ge 1/4$. Hence, it
follows that 
$\IBox(\bar{p}(2^{k_0} + 1),\rho)$ contains only positive values. 
For the search, we need $$O(\log k_{0})=O(\log\log\xi)=O(\log(\log n+\Gamma_{p}))$$ 
iterations, and the cost for each of these iterations is bounded by $\tilde{O}(n(\tau_{p}+nk_{0}))=\tilde{O}(n^{2}\Gamma_{p})$ bit 
operations. \end{proof}


\subsection{Algorithm}\label{sec:algorithm}

We present an algorithm for isolating the roots of a polynomial
$p(x)=\sum_{i=0}^n p_ix^i=p_n\prod_{i=1}^k (x-z_i)^{m_i}$,  
where the coefficients $p_i$ are given as described in the previous section.
We may assume that $k > 1$; the problem is trivial otherwise. If $k = 1$, then
$-p_{n-1}/(n p_n)$ is the root of multiplicity $n$. 
The algorithm uses some polynomial factorization algorithm to
produce approximations for the roots $z_1,\ldots,z_k$,  
and then performs a clustering and certification step to verify that the
candidates are of high enough quality. For concreteness, we pick Pan's
factorization algorithm~\cite{Pan:alg} for the factorization step,  
which also currently offers the best worst case bit complexity.\footnote{\Kurt{In practice, one might consider a numerical root finder~\cite{Bini-Fiorentino} based on the Aberth-Ehrlich method instead. There is empirical evidence that such methods achieve comparable complexity bounds. We further remark that many solvers only provide approximations $\hz_1,\ldots,\hz_n$ of the roots without any guarantee on the error $\onenorm{p-p_n\cdot\prod_i(x-\hz_i)}$. In this case, we first have to estimate the latter error by an algorithm for approximate polynomial multiplication; e.g.~the method from~\cite{Schoenhage82} allows us to approximate the product $p_n\cdot\prod_i(x-\hz_i)$ to an absolute error of $2^{-b}$ using $\tilde{O}(n(n\Gamma_p+b))$ bit operations. Obviously, if $\hz_i\rightarrow z_i$ for all $i$, then $\onenorm{p-p_n\cdot\prod_i(x-\hz_i)}\rightarrow 0$, hence we can alternatively assume that our oracle provides arbitrary good approximations of the roots (without any additional estimate on the actual error).}}
If the candidates do not pass the verification step, we reapply
the factorization algorithm with a higher precision. Given a polynomial $p$ 
with $|z_i|\le 1$ for $1\le i\le k$, and a positive integer
$b$ denoting the desired precision, the factorization algorithm
computes $n$ root approximations $\hz_1,\ldots,\hz_n$.  The quality of
approximation and the bit complexity are as follows:

\begin{theorem}[Pan~\cite{Pan:alg}]\label{thm:pan}
Suppose that $|z_i|\le 1$ for $1\le i\le k$. For any positive integer $b\ge n\log
n$, complex numbers $\hz_1,\ldots,\hz_n$ can be computed such that they satisfy

\[  \onenorm{ p  - p_{n}\prod\nolimits_{i=1}^{n}(x-\hz_i) }
				\leq 2^{-b} \onenorm{ p }
			\]
			using $\tO(n)$ operations performed with the precision of $O(b)$
			bits (or $\tO(bn)$ bit-operations). The input to the algorithm is a relative $L$-approximation of $p$, where $L = O(b)$. We write
                        $\hp:=p_{n}\prod_{i=1}^{n}(x-\hz_i)$.
The algorithm returns the real and imaginary part of the $\hz_i$'s
 as dyadic fractions of the form $A \cdot 2^{-B}$ with $A \in \Z$, $B
\in \N$ and $B = O(b)$. All fractions have the same denominator.
		\end{theorem}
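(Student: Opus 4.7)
The plan is to follow the splitting-circle framework of Sch\"onhage as refined by Pan, since the stated bound is essentially what that line of work achieves. The overall strategy is divide-and-conquer on the degree: find an annulus $A \subset \C$ that contains no root of $p$ and whose inner and outer radii have a ratio that is bounded away from $1$, then factor $p = q \cdot r$ so that $q$ collects the roots inside $A$ and $r$ collects the roots outside, and recurse on $q$ and $r$. After $O(\log n)$ levels of recursion each factor has degree $1$, and its unique root is (an approximation of) one of the $\hz_i$. The output quality is controlled by making sure that each splitting step introduces only a very small multiplicative error in the $1$-norm, so that after $O(\log n)$ levels the accumulated error is still at most $2^{-b}\onenorm{p}$.

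For the splitting-circle step I would first compute estimates of the $n$ root radii of $p$ (for instance via Graeffe squaring, which amplifies the separation between distinct absolute values of roots and allows one to read off approximate moduli from ratios of consecutive coefficients). A standard pigeonhole argument on the list of approximate radii then yields an annulus of ratio $\ge 1 + \Omega(1/n)$ that is root-free; after an affine change of variable we may assume this annulus is centered at the origin. Using a contour integral around the splitting circle (a discrete DFT over sufficiently many equispaced evaluation points) one can compute the coefficients of a first-approximation factor $q_0$, and then refine $q_0$ to the desired precision by Newton iteration on the factorization equation $p = q \cdot r$ in the ring of truncated power series; each Newton step uses a constant number of FFT-based polynomial multiplications, so costs $\tO(n)$ arithmetic operations at the working precision.

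The complexity bookkeeping is the heart of the proof. At each of the $O(\log n)$ recursion levels the degrees of the subproblems sum to at most $n$, so the total arithmetic work per level is $\tO(n)$ operations, giving $\tO(n)$ operations in total. The working precision is chosen as $O(b)$ bits so that the cumulative round-off and truncation error satisfies the target $2^{-b}\onenorm{p}$ bound; since numbers are represented in fixed-point with $O(b)$ bits, the final root approximations are dyadic fractions $A \cdot 2^{-B}$ with the common denominator $2^B$, $B = O(b)$, as claimed. Multiplying the $\tO(n)$ operations by the $O(b)$-bit cost of each operation (using fast integer arithmetic) yields the $\tO(bn)$ bit complexity.

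The main obstacle I anticipate is the precision analysis: one has to show that a single splitting step, executed with $O(b)$-bit precision, produces factors $q$ and $r$ with $\onenorm{p - qr} \le 2^{-b-O(\log n)}\onenorm{p}$ so that the geometric accumulation of errors across $O(\log n)$ levels stays below the target $2^{-b}\onenorm{p}$; this in turn requires quantitative condition-number bounds for the splitting-circle factorization in terms of the ratio of the annulus, which is why the preliminary Graeffe and root-radii estimation step must guarantee an annulus of ratio bounded away from $1$. I would treat this as a black box and cite Pan's analysis rather than re-derive it.
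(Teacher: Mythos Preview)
The paper does not prove this theorem: it is stated as a black-box citation of Pan's result~\cite{Pan:alg} and used as a subroutine, with no proof or even sketch provided. Your outline is a reasonable high-level summary of the splitting-circle framework underlying Pan's algorithm, and it is broadly faithful to what that paper does, but there is nothing in the present paper to compare it against. If anything, your proposal is more than what is required here; the appropriate ``proof'' is simply a reference to~\cite{Pan:alg}.
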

		
The parameter $b$ controls the quality of the resulting approximations. Note
that Pan's algorithm requires all roots of the input polynomial to lie within
the unit disk $\Delta(0,1)$.  Hence, in order to apply the above result to our input polynomial,
we first scale $p$ such that the roots come to lie in the unit disk.
That is, we compute a $\Gamma$ as in Theorem~\ref{thm:rootbound}, and then consider
the polynomial $f(x):=p(s\cdot x)=\sum_{i=0}^n f_i x^i$ with $s:=2^\Gamma$. Then, $f(x)$ has roots 
$\xi_{i}=z_i/s\in\Delta(0,1)$, and thus we can use Pan's Algorithm with $b':=n\Gamma+b$ to compute an
approximate factorization $\hat{f}(x):=\sum_{i=0}^n \hat{f}_i x^i := f_n\prod_{i=1}^{n}(x-\hat{\xi}_i)$ such that
$\onenorm{f-\hat{f}}<2^{-b'}\onenorm{f}$.
Let $\hz_i:=s\cdot
\hat{\xi}_i$ for all $i$ and $\hp(x):=p_n\cdot \prod_{i=1}^n (x-\hz_i) =
\hat{f}(x/s) = \sum_{i=0}^n \hat{f}_i/s^i x^i$, then 
\begin{align*}
\onenorm{\hp - p} &= \sum_{i=0}^n |f_i/s^i-\hat{f}_i/s^i| \le \sum_{i=0}^n
|f_i -\hat{f}_i| \le 2^{-b'}\sum_{i=0}^n |f_i|\le 2^{-b'}s^n\sum_{i=0}^n
|f_i/s^i|=2^{-b}\onenorm{p}.
\end{align*}
For the factorization of $f$, we need a relative $b'$-approximation of $f$, and
thus a relative $L$-approximation of $p$ with $L=O(n\Gamma+b)=\tilde{O}(n\Gamma_p+b)$. The
total cost is $\tilde{O}(n^2\Gamma_p+nb)$ bit operations. We summarize in:

\begin{corollary}\label{thm:pansresult}
For an arbitrary polynomial $p=p_n\cdot x^n+\cdots+p_0\in\C[x]$, with $1/4\le |p_n|\le 1$, and an integer $b\ge n\log
n$, complex numbers $\hz_1,\ldots,\hz_n$ can be computed such that
\[  \onenorm{ p  - p_{n}\prod\nolimits_{i=1}^{n}(x-\hz_i) }
				\leq 2^{-b} \onenorm{ p }
			\]
			using $\tO(n^2\Gamma_p+bn)$ bit-operations. We write
                        $\hp:=p_{n}\prod_{i=1}^{n}(x-\hz_i)$.
The algorithm returns the real and imaginary part of the $\hz_i$'s
 as dyadic fractions of the form $A \cdot 2^{-B}$ with $A \in \Z$, $B
\in \N$ and $B = O(b+n\Gamma_p)$. All fractions have the same denominator.
\end{corollary}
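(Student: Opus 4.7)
The plan is to reduce to Pan's algorithm (Theorem~\ref{thm:pan}) by first rescaling the polynomial so that all its roots lie in the unit disk. First, I would invoke Theorem~\ref{thm:rootbound} to obtain an integer $\Gamma$ with $\Gamma_p \le \Gamma < 8\log n + \Gamma_p$ using $\tO(n^{2}\Gamma_p)$ bit operations. Setting $s := 2^{\Gamma}$, the scaled polynomial $f(x) := p(sx) = \sum_{i=0}^{n} p_i s^{i} x^{i}$ has roots $\xi_i := z_i/s$ with $|\xi_i| \le 1$, so Pan's algorithm applies directly to $f$.

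Next, I would apply Theorem~\ref{thm:pan} to $f$ with the larger precision parameter $b' := n\Gamma + b$ to obtain $\hat{\xi}_1,\ldots,\hat{\xi}_n$ satisfying $\onenorm{f - \hat{f}} \le 2^{-b'}\onenorm{f}$, where $\hat{f}(x) := f_n \prod_i (x - \hat{\xi}_i)$. I then unscale by defining $\hz_i := s \hat{\xi}_i$ and $\hp(x) := p_n \prod_i (x - \hz_i)$. The key algebraic identity is $\hp(x) = \hat{f}(x/s)$, i.e.\ the $i$-th coefficient of $\hp$ equals $\hat{f}_i/s^{i}$. Using $s \ge 1$ and the Pan bound,
\[
\onenorm{\hp - p} = \sum_{i=0}^{n} \frac{|\hat{f}_i - f_i|}{s^{i}} \le \sum_{i=0}^{n} |\hat{f}_i - f_i| \le 2^{-b'} \onenorm{f} \le 2^{-b'} s^{n} \onenorm{p} = 2^{-b}\onenorm{p},
\]
which is the desired inequality.

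For the complexity bookkeeping I would argue as follows. Since $s = 2^{\Gamma}$ is an integer power of two, an approximation of precision $L$ of $p$ is easily converted into an approximation of $f$ at essentially the same precision (the change is only a shift of dyadic exponents, modulated by the ratio $\onenorm{f}/\onenorm{p}$ which contributes $O(n\Gamma + \tau_p)$ extra bits and is absorbed by Lemma~\ref{lem:tauTau}). Hence it suffices to read $p$ at precision $L = O(n\Gamma + b) = \tO(n\Gamma_p + b)$. The total cost is the root-bound computation ($\tO(n^{2}\Gamma_p)$) plus Pan's algorithm ($\tO(n b') = \tO(nb + n^{2}\Gamma)$), summing to $\tO(n^{2}\Gamma + nb)$.

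Finally, the statement about the representation of the output follows by inspection: Pan returns each $\hat{\xi}_i$ as $A_i \cdot 2^{-B}$ with common denominator $B = O(b')$, and multiplying by $s = 2^{\Gamma}$ merely shifts the exponent, yielding $\hz_i$ of the form $A_i \cdot 2^{-(B - \Gamma)}$ with common denominator $O(b + n\Gamma_p)$. The only subtle point, and the step that would require the most care in a fully detailed write-up, is verifying that the extra factor $s^{n}$ picked up in passing from $\onenorm{f}$ to $\onenorm{p}$ is exactly compensated by the enlarged precision $b' = n\Gamma + b$ and that no hidden dependence on $\tau_p$ sneaks into the final bit complexity; both follow from Lemma~\ref{lem:tauTau}.
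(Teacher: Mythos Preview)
Your proposal is correct and follows essentially the same route as the paper: scale by $s=2^{\Gamma}$ so that all roots lie in the unit disk, apply Pan's algorithm to $f(x)=p(sx)$ with the boosted precision $b'=n\Gamma+b$, then unscale and verify $\onenorm{\hp-p}\le 2^{-b'}s^{n}\onenorm{p}=2^{-b}\onenorm{p}$. The complexity accounting and the remark on the dyadic output representation also match the paper's argument.
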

		
We now examine how far the approximations $\hz_1,\ldots,\hz_n$ can deviate from
the actual roots for a given value of $b$, \Kurt{i.e., a quantitative version of the fact that the roots of a polynomial depend continuously on the coefficients. Such estimates are well known, e.g.,~\cite[Theorem 2.7]{Schoenhage:GCD} and~\cite[Theorem 4.10c]{Henrici74}. For our complexity bounds, we also need the dependency on the multiplicities and the root separation and hence need to state our own bounds. Technically, there is nothing new here.}
Let $\Delta(z, r)$ be the disk with center $z$ and radius $r$ and let $\bd
\Delta(z,r)$ be its boundary. We further define $P_i := \prod_{j \not=i} (z_i -
z_j)^{m_j}$. Then, $p^{(m_i)}(z_i) = m_i ! p_n P_i$. 

\begin{lemma}\label{lem:value by dist}
If $r\le \sigma_i/n$, then
			\[ |p(x)| > \frac{r^{m_i}\cdot |p_n P_i |}{4} \]
for all $x$ on the boundary of $\Delta(z_i,r)$. 
\end{lemma}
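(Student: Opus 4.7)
The plan is to start from the factorization $p(x) = p_n \prod_{j=1}^{k}(x-z_j)^{m_j}$. For $x$ on the boundary of $\Delta(z_i,r)$, the factor $(x-z_i)^{m_i}$ contributes exactly $r^{m_i}$, so we can write
\begin{equation*}
|p(x)| \;=\; |p_n|\, r^{m_i} \prod_{j\neq i} |x-z_j|^{m_j}
\;=\; r^{m_i}\,|p_n P_i| \prod_{j\neq i} \left(\frac{|x-z_j|}{|z_i-z_j|}\right)^{m_j}.
\end{equation*}
The task then reduces to lower bounding the product on the right by $1/4$.

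For each $j\neq i$, the reverse triangle inequality gives $|x-z_j| \ge |z_i-z_j| - r$, so
\begin{equation*}
\frac{|x-z_j|}{|z_i-z_j|} \;\ge\; 1 - \frac{r}{|z_i-z_j|} \;\ge\; 1 - \frac{r}{\sigma_i} \;\ge\; 1-\frac{1}{n},
\end{equation*}
using $|z_i-z_j|\ge \sigma_i$ and the hypothesis $r\le \sigma_i/n$. Since $\sum_{j\neq i} m_j = n - m_i \le n-1$, we obtain
\begin{equation*}
\prod_{j\neq i}\left(\frac{|x-z_j|}{|z_i-z_j|}\right)^{m_j}
\;\ge\; \left(1-\frac{1}{n}\right)^{\!n-1} \;>\; \left(1-\frac{1}{n}\right)^{\!n} \;\ge\; \frac{1}{4},
\end{equation*}
where the last inequality uses the elementary fact that $(1-1/n)^n$ is increasing in $n\ge 2$ with value $1/4$ at $n=2$. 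Combining these two displays yields the strict inequality $|p(x)| > r^{m_i}|p_n P_i|/4$.

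There is no real obstacle here; the only mild point to keep track of is getting a \emph{strict} inequality (the conclusion of the lemma), which is ensured by the fact that $m_i\ge 1$, so the exponent $\sum_{j\neq i} m_j$ in the product is at most $n-1$ rather than $n$, giving us the strict gap $(1-1/n)^{n-1} > (1-1/n)^n \ge 1/4$. No estimate on $p$ beyond its factorization is needed, so the proof is essentially a triangle-inequality calculation.
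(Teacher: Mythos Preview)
Your proof is correct and follows essentially the same approach as the paper: factor $p$, use the reverse triangle inequality to bound each $|x-z_j|/|z_i-z_j|$ below by $1-1/n$, and then bound $(1-1/n)^{n-m_i}$ by $1/4$. Your treatment of the strict inequality via $(1-1/n)^{n-1}>(1-1/n)^n\ge 1/4$ is slightly more explicit than the paper's, but the argument is the same.
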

\begin{proof}
We have
\begin{align*}
|p(x)|  &= \abs{p_n}\cdot|x-z_i|^{m_i}\cdot\prod_{j\neq i} |x-z_j|^{m_j} \ge\abs{p_n}\cdot|x-z_i|^{m_i}\cdot \prod_{j\neq i} 
|z_{i}-z_{j}|^{m_{j}}\cdot (1-|x-z_{i}|/|z_{i}-z_{j}|)^{m_{j}}\\
&\ge  r^{m_i} (1-1/n)^{n-m_i}\abs{p_n}\cdot\prod_{j\neq i}|z_i-z_j|^{m_j} > r^{m_i}\cdot|p_n P_i|/4.
\end{align*} \end{proof}
		
Based on the above Lemma, we can now use Rouch\'e's theorem\footnote{Rouch\'e's theorem states that if $f$
  and $g$ are holomorphic functions with $\abs{(f - g)(x)} < \abs{f(x)}$ for
  all points $x$ on the boundary of some disk $\Delta$, then $f$ and $g$ have
  the same number of zeros (counted with multiplicity) in
  $\Delta$.}
\ignore{\footnote{The reader may wonder why we are not using
  Gershgorin circles to cluster and verify root approximations. The problem is that very good approximations of a multiple root generate very large
Gershgorin circles which may then fail to isolate the approximations. Indeed, if two
approximations are equal, the corresponding disks have infinite radius.}} to show that,
for sufficiently large $b$, the disk  $\Delta(z_i, 2^{-b/(2m_i)})$ contains exactly $m_i$ root approximations.


\begin{lemma}\label{lem:approximation quality}
Let $\hp$ be such that $\onenorm{p-\hp} \le 2^{-b} \onenorm{p}$. If
\begin{align}
b &\ge  \max(8n,n \log ( n)) \text{, and $b$ is a power of two}\label{greaternlogn}\\
2^{-b/(2m_i)} & \le \frac{1}{2n^2},\label{greater2mi}\\
2^{-b/(2m_i)} & \le \frac{\sigma_i}{2n},\label{greater2milognoversigmai} \text{ and}\\
2^{-b/2} & \le \frac{|P_i|}{16(n+1)2^{\tau_p}M(z_{i})^{n}} \label{greaterr}
\end{align}
for all $i$,
the disk $\Delta(z_i, 2^{-b/(2m_i)})$ contains exactly $m_i$ root
approximations. For $i \neq j$, let $\hz_i$ and $\hz_{j}$ be arbitrary approximations in the disks $\Delta(z_i, 2^{-b/(2m_i)})$ and $\Delta(z_j, 2^{-b/(2m_j)})$, respectively. Then,
\[   \left(1 - \frac{1}{n}\right)\cdot \abs{z_i - z_j} \le \abs{\hz_i - \hz_j} \le \left(1 + \frac{1}{n}\right)\cdot  \abs{z_i - z_j}.\]
\end{lemma}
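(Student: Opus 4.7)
\bigskip

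\noindent\textbf{Proof plan.} The plan is to apply Rouch\'e's theorem on each disk $\Delta_i:=\Delta(z_i,r_i)$ with $r_i:=2^{-b/(2m_i)}$, using $f=p$ and $g=\hp$. For this I need $|p(x)-\hp(x)|<|p(x)|$ for every $x\in\bd\Delta_i$. Condition (\ref{greater2milognoversigmai}) gives $r_i\le\sigma_i/(2n)$, so Lemma~\ref{lem:value by dist} applies and yields the lower bound
\[
|p(x)|\;>\;r_i^{m_i}\,|p_n P_i|/4 \;=\; 2^{-b/2}\,|p_n P_i|/4
\qquad\text{for every } x\in\bd\Delta_i.
\]
For the upper bound I will use the coefficient-wise estimate $|p(x)-\hp(x)|\le\sum_{j}|p_j-\hp_j|\,|x|^j\le M(|x|)^n\,\onenorm{p-\hp}\le M(|x|)^n\,2^{-b}\onenorm{p}$, together with $\onenorm{p}\le(n+1)2^{\tau_p}|p_n|$.

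The technical step is to control $M(|x|)^n$. For $x\in\bd\Delta_i$ one has $M(|x|)\le M(z_i)+r_i$. Condition (\ref{greater2mi}) gives $r_i/M(z_i)\le r_i\le 1/(2n^2)$, hence
\[
M(|x|)^n \;\le\; M(z_i)^n\Bigl(1+\tfrac{1}{2n^2}\Bigr)^{\!n} \;\le\; M(z_i)^n\cdot e^{1/(2n)} \;\le\; 2\,M(z_i)^n.
\]
Combining this with condition (\ref{greaterr}) yields
\[
|p(x)-\hp(x)| \;\le\; 2\,M(z_i)^n\,2^{-b}(n+1)2^{\tau_p}|p_n| \;\le\; 2\cdot 2^{-b/2}\,\frac{|p_n P_i|}{16} \;=\;\frac{2^{-b/2}|p_n P_i|}{8},
\]
which is strictly less than the lower bound on $|p(x)|$ established above. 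Rouch\'e's theorem therefore implies that $\hp$ and $p$ have the same number of zeros in $\Delta_i$, namely $m_i$, proving the first assertion.

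For the second assertion, condition (\ref{greater2milognoversigmai}) again gives $r_i\le\sigma_i/(2n)\le|z_i-z_j|/(2n)$, and symmetrically $r_j\le|z_i-z_j|/(2n)$. Hence for any $\hz_i\in\Delta_i$ and $\hz_j\in\Delta_j$ the triangle inequality yields
\[
\bigl|\,|\hz_i-\hz_j|-|z_i-z_j|\,\bigr| \;\le\; r_i+r_j \;\le\; \tfrac{1}{n}\,|z_i-z_j|,
\]
which is exactly the claimed two-sided bound. The main obstacle is the $M(|x|)^n$ inflation in passing from the $\onenorm{\cdot}$-estimate on $p-\hp$ to a pointwise estimate on the boundary of $\Delta_i$; the key observation is that $r_i\le 1/(2n^2)$ absorbs the $n$-th power via $(1+1/(2n^2))^n\le 2$, so that conditions (\ref{greater2mi}) and (\ref{greaterr}) together suffice.
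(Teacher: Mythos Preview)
Your proof is correct and follows essentially the same approach as the paper: Rouch\'e's theorem on a disk around $z_i$, with the lower bound on $|p|$ from Lemma~\ref{lem:value by dist} and the upper bound on $|p-\hp|$ from the coefficient-wise estimate $\onenorm{p}\le(n+1)2^{\tau_p}|p_n|$ combined with $(1+1/(2n^2))^n\le 2$. The only difference is cosmetic: the paper introduces an auxiliary radius $\delta_i:=\bigl(16(n+1)2^{-b}2^{\tau_p}|P_i|^{-1}M(z_i)^n\bigr)^{1/m_i}\le 2^{-b/(2m_i)}$ and applies Rouch\'e on $\Delta(z_i,\delta_i)$ first, whereas you apply Rouch\'e directly on $\Delta(z_i,2^{-b/(2m_i)})$ --- your version is in fact slightly more streamlined.
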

\begin{proof}
Let 
\[ \delta_i:=\left(16\cdot(n+1)\cdot 2^{-b}2^{\tau_p} |P_i|^{-1}M(z_{i})^{n}
                        \right)^{1/m_i}.\] 
It is easy to verify that $\delta_i \le 2^{-b/(2m_i)}\le \min(1,\sigma_i)/(2n)$.  The
first inequality follows from (\ref{greaterr}) and the second inequality
follows from (\ref{greater2mi}) and (\ref{greater2milognoversigmai}).
We will show that $\Delta(z_i,\delta_i)$ contains $m_i$ approximations. To
this end, is suffices to show that $|(p-\hp)(x)|<|p(x)|$ for all $x$ on the boundary of
$\Delta(z_i,\delta_i)$. Then, Rouch\'e's theorem guarantees that $\Delta(z_i,\delta_i)$ contains
the same number of roots of $p$ and $\hp$ counted with multiplicity. Since
$z_i$ is of multiplicity $m_i$ and $\delta_{i}<\sigma_{i}/n$, the disk contains exactly $m_i$ roots of $p$ counted with multiplicity.
We have (note that $\abs{x} \le (1+1/(2n^2))\cdot M(z_i)$ for $x \in \bd \Delta(z_i,\delta_i)$)
\begin{align*}
\abs{(p-\hp)(x)} &\le \onenorm{p-\hp}\cdot M(x)^n<2^{-b} \onenorm{p} M(x)^n\\
& \le  2^{-b} \onenorm{p}\cdot  (1+1/(2n^2))^n\cdot M(z_i)^n\\
&\le 4\cdot 2^{-b}\cdot 2^{\tau_p}|p_n|\cdot(n+1)\cdot M(z_i)^n\\
                          & \le \delta_i^{m_i}|p_n P_i |/4 < |p(x)|, 
\end{align*}
where the inequality in line three
follows from $\onenorm{p} \le (n+1) |p_n| 2^{\tau_p}$, the first one in line four
follows from the definition of $\delta_i$, and the last inequality follows from
Lemma~\ref{lem:value by dist}. 
It follows that $\Delta(z_i, 2^{-b/(2m_i)})$ contains exactly $m_{i}$ approximations. Furthermore, since $\delta_i \le \sigma_i/(2n)$ for all $i$, the disks $\Delta(z_i,\delta_i)$, $1 \le i
\le k$, are pairwise disjoint.

For the second claim, we observe that $\abs{\hz_\ell - z_\ell} \le 2^{-b/(2m_\ell)} \le \sigma_\ell/(2n)
\le \abs{z_i - z_j}/(2n)$ for $\ell =i,j$ and hence $\abs{\hz_i - z_i} + \abs{\hz_j - z_j} \le
\abs{z_i - z_j}/n$.  The claim now follows from the triangle inequality. 
\end{proof}

We have now established that the disks $\Delta(z_i,2^{-b/(2m_i)})$, $1 \le i \le k$, are pairwise
disjoint and that the $i$-th disk contains exactly $m_i$ root approximations
provided that $b$ satisfies (\ref{greaternlogn}) to
(\ref{greaterr}). We want to stress that the radii $2^{-b/(2m_i)}$, $1
  \le i \le k$, are vastly different. For example, assume $b = 40$. For a one-fold root ($m =
  1$), the radius is $2^{-20}$, for a double root ($m = 2$) the radius is
  $2^{-10}$, for a four-fold root ($m = 4$) the radius is $2^{-5}$, and for a
  twenty-fold root ($m = 20$), the radius is as large as $1/2$. 
Unfortunately, the conditions on $b$ are stated in terms of
the quantities $m_i$, $\sigma_i$ and $|P_i|$ which we do not know. Also, we do not know the center
$z_i$. In the remainder of the
section, we will show how to cluster root approximations and to certify
them. We will need the following more stringent properties for the clustering
and certification step. 
\begin{align}
2^{-b/(2m_i)} &<
\min\left(\left(\frac{\sigma_i}{4n}\right)^8,\frac{\sigma_i}{1024n^2}\right) \label{property 1}\\
2^{-b/8} &< \min\left(\frac{1}{16},\frac{\abs{P_i}}{(n+1) \cdot 2^{2n\Gamma_p+8n}}\right) \label{property 2}
\end{align}
Let $b_0$ be the smallest integer satisfying
(\ref{greaternlogn}) to (\ref{property 2}) for all $i$. Then,
$$b_{0}=O(n\log n+n\Gamma_{p}+
\max\nolimits_i (m_{i}\log M(\sigma_{i}^{-1})
+\log M(P_{i}^{-1})).$$

We next provide a high-level description of our algorithm to isolate the roots of $p$. The details of the clustering step and the certification step are then given in Sections~\ref{sec:clustering} and \ref{sec:certification}, respectively.

		
\subsubsection{Overview of the Algorithm} On input $p$ and the number $k$ of distinct roots,
the algorithm outputs isolating disks $\Delta_{i}=\Delta(\tz_i,R_i)$ for
the roots of $p$ as well as the corresponding multiplicities $m_i$. The radii satisfy
$R_i<\sigma_i/(64n)$.

The algorithm uses the factorization step with an increasing precision until the result can be certified. If either the clustering step or the certification step fails, we simply double the precision. There are a couple of technical safeguards to ensure that we do not waste time on iterations with an insufficiently large precision (Steps 2, 5, and 6); also recall that we need to scale our initial polynomial.\medskip

\ignore{\pengming{I think we should add a very high-level description of the algorithm first that stresses the key idea: That certification allows us to approximate a good value for $b$ without knowing $m_i$, $\sigma_i$, or $P_i$.}\michael{I think that this is already a relatively high level description. Maybe, we should say that steps 2, 5, and 6 are more of technical reason.}}

\begin{enumerate}
\item Compute the bound $2^{\Gamma}$ for the modulus of all roots of $p$, where
  $\Gamma$ fulfills Inequality~(\ref{inequ:rootbound}). According to
  Theorem~\ref{thm:rootbound}, this can be done with
  $\tilde{O}(n^{2}\Gamma_{p})$ bit operations.
\item Compute a 2-approximation $\lambda=2^{l_{\lambda}}$, with $l_{\lambda}\in\Z$, of $\onenorm{p}/|p_{n}|$. According to (\ref{apx:normp}), this computation needs $\tilde{O}(n\tau_{p})=\tO(n^2\Gamma_p)$ bit operations.\label{def:lambda}
\item Scale $p$, that is,
  $f(x):=p(s\cdot x)$, with $s:=2^\Gamma$, to ensure that the roots $\xi_i=z_i/S$, $i=1,\ldots,k$, of $f$ are contained in the unit disk. 
  Let $b$ be the smallest integer satisfying
  (\ref{greaternlogn})
\item Run Pan's algorithm on input $f$ with parameter $b':=b+n\Gamma$ to produce approximations
  $\hat{\xi}_1,\ldots,\hat{\xi}_n$ for the roots of $f$. Then, $\hat{z}_i:=s\cdot\hat{\xi}_i$ are approximations of the roots 
  of $p$, and $\onenorm{\hp-p}<2^{-b}\onenorm{p}$, where $\hp(x):=p_n\prod_{i=1}^n(x-\hz_i)$.\label{loop}
\item If there exists a $\hz_i$ with $\hz_i\ge 2^{\Gamma+1}$, return to Step~\ref{loop} with $b:=2b$.\label{algo:lower Gamma} 
\item If $\prod_{i=1}^n M(\hz_i)>8\lambda$, return to Step~\ref{loop} with $b:=2b$. \label{algo:lower lambda}
\item Partition $\hz_1,\ldots,\hz_n$ into $k$ clusters
  $C_1,\ldots,C_k$. Compute (well separated) enclosing disks $D_1,\ldots,D_k$
  for the clusters. For details, see Section~\ref{sec:clustering}. \label{cluster guarantee}
If the clustering fails to find $k$ clusters and corresponding disks, return to Step~\ref{loop} with $b:=2b$.
\item For each $i$, let $\Delta_i$ denote the disk with the same center as $D_i$ but with 
an $n$-times larger radius. We now verify
the existence of $|C_i|$ roots (counted with multiplicity) of $p$ in $\Delta_i$. For details of the verification, see Section~\ref{sec:certification}. If the verification fails, return to Step \ref{loop} with $b:=2b$. 
\item If the verification succeeds, output the disks $\Delta_i$ (in
  Step~\ref{cluster guarantee}, we guarantee that the disks $\Delta_{i}$ are pairwise disjoint) and report 
  the number $|C_i|$ of root approximations $\hz\in\{\hz_{1},\ldots,\hz_{n}\}$ contained in the disks as the corresponding
  multiplicities.\medskip
\end{enumerate}

Notice that Steps~\ref{algo:lower Gamma} and~\ref{algo:lower lambda} ensure that $\log M(\hz_i)=O(\Gamma_p+\log n)$ for all $i$, and that 
$\log\prod_{i=1}^n M(\hz_i)=O(\log(\onenorm{p}/|p_n|))=O(\log n+\tau_p)=\tilde{O}(n\Gamma_p)$.
The following Lemma guarantees that the algorithm passes these steps if $b\ge b_0$.

\begin{lemma}		
For any $b\ge b_0$, it holds that $|\hz_i|\ < 2^{\Gamma+1}$ for all $i$, and $\prod_{i=1}^n M(\hz_i)<8\lambda$.
\end{lemma}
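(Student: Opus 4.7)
The plan is to leverage Lemma~\ref{lem:approximation quality}, which (since $b \ge b_0$) states that the $n$ approximations $\hz_j$ split into $k$ groups, with exactly $m_i$ of them lying in the disk $\Delta(z_i, 2^{-b/(2m_i)})$. Hence each $\hz_j$ is within distance $2^{-b/(2m_i)}$ of an associated root $z_i$. Both claims then reduce to combining this proximity estimate with the root bound $|z_i| \le 2^{\Gamma}$ coming from Theorem~\ref{thm:rootbound} and the Mahler--Landau inequality $\Mea(p) \le \onenorm{p}$.

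For the first claim, pick any $\hz_j$ and the root $z_i$ it approximates. The triangle inequality gives
\[ |\hz_j| \le |z_i| + 2^{-b/(2m_i)} \le 2^{\Gamma} + \tfrac{1}{2n^2}, \]
where the bound on the perturbation uses (\ref{greater2mi}). Since $\Gamma_p \ge 1$ by definition of $M$, we have $\Gamma \ge 1$ and thus $2^{\Gamma} \ge 2 > 1 + \tfrac{1}{2n^2}$, so $|\hz_j| < 2^{\Gamma+1}$.

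For the second claim, I would first exploit $M(z_i) \ge 1$ to obtain the multiplicative perturbation bound
\[ M(\hz_j) \;\le\; M(z_i) + |\hz_j - z_i| \;\le\; M(z_i)\bigl(1 + 2^{-b/(2m_i)}\bigr). \]
Taking the product over all $n$ approximations, grouped according to the root $z_i$ they approximate, and using $\Mea(p)/|p_n| = \prod_i M(z_i)^{m_i}$, yields
\[ \prod_{j=1}^n M(\hz_j) \;\le\; \frac{\Mea(p)}{|p_n|} \cdot \prod_{i=1}^k \bigl(1 + 2^{-b/(2m_i)}\bigr)^{m_i}. \]
For the second factor, I would apply $(1+x)^{m_i} \le e^{m_i x}$ together with (\ref{greater2mi}): the exponents satisfy $\sum_i m_i\cdot 2^{-b/(2m_i)} \le \sum_i m_i/(2n^2) = 1/(2n) \le 1/2$, so the factor is at most $e^{1/2} < 2$. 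Invoking the Mahler--Landau inequality $\Mea(p) \le \onenorm{p}$ and the fact that $\lambda$ is a $2$-approximation of $\onenorm{p}/|p_n|$ (so $\onenorm{p}/|p_n| \le 2\lambda$), we conclude $\prod_j M(\hz_j) \le 4\lambda < 8\lambda$.

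I do not foresee a serious obstacle. The one subtlety worth highlighting is that the collapse of the perturbation product to a dimension-independent constant relies on the absolute bound (\ref{greater2mi}) rather than on the $\sigma_i$-dependent bound (\ref{greater2milognoversigmai})---the latter would only give us control in terms of the (possibly tiny) root separations, which is insufficient for a clean multiplicative estimate.
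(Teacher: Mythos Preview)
Your proof is correct and follows essentially the same route as the paper: both arguments use the proximity bound $|\hz_j - z_i| \le 2^{-b/(2m_i)} \le 1/(2n^2)$ from Lemma~\ref{lem:approximation quality} and (\ref{greater2mi}) to derive the multiplicative estimate $M(\hz_j) \le (1+1/(2n^2))M(z_i)$, then bound the product via $\Mea(p)/|p_n| \le \onenorm{p}/|p_n| \le 2\lambda$. The paper simply cites the multiplicative bound from the proof of Lemma~\ref{lem:approximation quality} rather than re-deriving it, and collapses $(1+1/(2n^2))^n$ to the cruder constant $4$ instead of your $e^{1/2}$, but the substance is identical.
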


\begin{proof}
In the proof of Lemma~\ref{lem:approximation quality}, we have already shown that $\abs{\hz_i} \le (1+1/(2n^2))\cdot M(z_i)$ for all $i$. Hence, it follows that $|\hz_i|\le (1+1/(2n^2))\cdot 2^{\Gamma_p}<2\cdot 2^{\Gamma_{p}}\le 2^{\Gamma_{p}+1}$, and 
\begin{align*}
\prod_{i=1}^n M(\hz_i)&\le 4\cdot \prod_{i=1}^k M(z_i)^{m_i}<\frac{4\Mea(p)}{|p_n|}
\le \frac{4\norm{p}_2}{|p_n|}\le \frac{4\onenorm{p}}{|p_n|}<8\lambda.
\end{align*}
\end{proof}

		
\subsubsection{Clustering}\label{sec:clustering}

After candidate approximations $\hz_1,\ldots,\hz_n$ are computed using a fixed
precision parameter $b$, we perform a partitioning of these approximations into
$k$ clusters $C_1,\ldots,C_k$, where $k$ is given as an input.  The clustering
is described in detail below. It works in phases. At the beginning of a phase,
it chooses an unclustered approximation and uses it as the seed for the cluster
formed in this phase. Ideally, each of the clusters 
corresponds to a distinct root of $p$. The clustering algorithm satisfies the following properties:
\begin{itemize} 
\item[(1)] For $b < b_0$, the algorithm may or may not succeed in finding $k$
 clusters.
\item[(2)] For $b \ge b_0$, the clustering always succeeds. 
\end{itemize}
Whenever the clustering succeeds, the cluster
$C_i$ with seed $\tz_i$ is contained in the disk $D_i
 :=\Delta(\tz_i,r_i)$, where $r_i \approx\min(\frac{1}{n^{2}},\frac{\tsigma_i}{256 n^2})$, and $\tsigma_i = \min_{j \not= i}
 \abs{\tz_i - \tz_j}$. Furthermore, for $b\ge b_{0}$, $D_{i}$ contains the root $z_{i}$ (under
 suitable numbering) and 
exactly $m_i$ many approximations.

Before we describe our clustering method, we discuss two evident approaches that do not work for any $b$ of size comparable to $b_{0}$ or smaller. A clustering with a fixed grid does not work as root
  approximations coming from roots with different multiplicities may move by vastly
  distinct amounts. As a consequence, we can only succeed if $b>(\max_{i}m_{i})\cdot \log(\min_{i}\sigma_{i})^{-1}$ which can be considerably larger than $b_{0}$, see Figure~\ref{fig:clustering}. A clustering based on
  Gershgorin disks does not work either because very good approximations of a multiple
  root lead to large disks which then fail to separate approximations of
  distinct roots. In particular, if approximations are identical, the
  corresponding Gershgorin disks have infinite radius.
		
For our clustering, we use the fact that the factorization algorithm provides approximations $\hz$ of the root $z_i$ with distance less than $2^{-b/(2m_i)}$ (for $b\ge b_0$).
Thus, we aim to determine clusters $C$ of maximal size such
that the pairwise distance between two elements in the same cluster is less
than $2\cdot 2^{-b/(2|C|)}$. We give details.\medskip 

\begin{enumerate}
\item Initialize ${\cal C}$ to the empty set (of clusters). 
\item \label{outer loop} Initialize $C$ to the set of all unclustered approximations and choose
  $\hz \in C$ arbitrarily. Let $a:=2^{\floor{\log n}+2}$ and $\delta:=2^{-b/4}$.
\item \label{inner loop} Update $C$ to the set of points $q\in C$ satisfying $|\hz-q| \le 2\sqrt[a/2]{\delta}$.
\item If $|C|\ge a/2$, add $C$ to ${\cal C}$. Otherwise, set $a:=a/2$ and continue with
  step~\ref{inner loop}. 
\item If there are still unclustered approximations, continue with
  step~\ref{outer loop}. 
\item \label{check 1} If the number of clusters in ${\cal C}$ is different
  from $k$, report failure, double $b$ and go back to the factorization step.\medskip
\end{enumerate}

\begin{figure}\centering
 \begin{tikzpicture}[scale=2]

 		\draw [fill=lightgray] (0.5,2.5) circle [radius=0.15];
		\draw [fill=lightgray] (0.4,2.15) circle [radius=0.15];
 		\draw [fill=lightgray] (0.7,1.2) circle [radius=0.4];
 		\draw [fill=lightgray] (2.3,2) circle [radius=0.9];
		
 		\draw [fill=black] (0.56,2.44) circle [radius=0.02];
		\draw [fill=black] (0.50,2.2) circle [radius=0.02];
 		\draw [fill=black] (0.8,1.2) circle [radius=0.02];
 		\draw [fill=black] (0.8,1.4) circle [radius=0.02];
 		\draw [fill=black] (1.5,2.1) circle [radius=0.02];
 		\draw [fill=black] (2.5,2.2) circle [radius=0.02];
 		\draw [fill=black] (2.4,1.7) circle [radius=0.02];
 		\draw [fill=black] (2.8,2) circle [radius=0.02];
		
 		\node at (0.7,1.2) {$\times$};
 		\node at (0.5,2.5) {$\times$};
                  \node at (0.4,2.15) {$\times$};
 		\node at (2.3,2) {$\times$};

 		\node at (0.5,2.8) {$r=2^{-b/2}$};
		\node at (0.6,1.9) {$r=2^{-b/2}$};
 		\node at (0.9,0.7) {$r=2^{-b/4}$};
 		\node at (3,1.1) {$r=2^{-b/8}$};
		
 	\end{tikzpicture}
	\caption{\label{fig:clustering}\textmd{Example of a polynomial with four distinct roots with multiplicities 1, 1, 2, and 4. Crosses are roots of the polynomial, dots represent the approximations. The disk around a root shows the potential locations of its approximations. Note that the straight-forward approach to cluster with a fixed distance threshold fails for all $b$ with $b<(\max_{i}m_{i})\cdot\log (\min_{i}\sigma_{i})^{-1}$: For each such $b$, one can not choose any threshold that allows detecting the simple roots without splitting the four-fold root.}}
\end{figure}

Note that, for $b \ge b_0$, the disks $\Delta(z_i, 2^{-b/(2m_i)})$ are
disjoint. Let $Z_i$ denote the set of root
approximations in $\Delta(z_i, 2^{-b/(2m_i)})$.  Then, $\abs{Z_i} = m_i$ according to Lemma~\ref{lem:approximation quality}.
We show that, for $b \ge b_0$, the clustering algorithm terminates with $C=Z_i$ if called
with an approximation $\hz\in Z_i$.
		
\begin{lemma}\label{lem:distbetweenapx}
Assume $b \ge b_0$, $\hz_i \in Z_i$, $\hz_j \in Z_j$, and $i \not= j$. Then,
\begin{align*}
\abs{\hz_i-\hz_j} &\ge 2 \cdot ( 2^{-b/(16m_i)} + 2^{-b/(16m_j)}).
\end{align*}
\end{lemma}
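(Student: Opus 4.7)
The plan is to combine the triangle inequality with the defining closeness properties of $Z_i$, and then exploit condition (\ref{property 1}) in its two forms to dominate both the error terms and the target right-hand side.

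First I would invoke the membership $\hz_i \in Z_i$ and $\hz_j \in Z_j$, which gives $\abs{\hz_i - z_i} \le 2^{-b/(2m_i)}$ and $\abs{\hz_j - z_j} \le 2^{-b/(2m_j)}$, so by the triangle inequality
\[
\abs{\hz_i - \hz_j} \;\ge\; \abs{z_i - z_j} - 2^{-b/(2m_i)} - 2^{-b/(2m_j)}.
\]
Since $\abs{z_i - z_j} \ge \max(\sigma_i,\sigma_j) \ge (\sigma_i+\sigma_j)/2$, the weak half of condition (\ref{property 1}), namely $2^{-b/(2m_\ell)} < \sigma_\ell/(1024 n^2)$, shows that the two perturbation terms together are at most $(\sigma_i+\sigma_j)/(1024 n^2)$. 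Hence
\[
\abs{\hz_i - \hz_j} \;\ge\; \frac{\sigma_i+\sigma_j}{2}\Bigl(1 - \frac{1}{512 n^2}\Bigr).
\]

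Second, I would use the other half of condition (\ref{property 1}), namely $2^{-b/(2m_\ell)} < (\sigma_\ell/(4n))^8$, and take eighth roots to obtain $2^{-b/(16 m_\ell)} < \sigma_\ell/(4n)$ for $\ell \in \{i,j\}$. This bounds the target right-hand side by
\[
2\bigl(2^{-b/(16 m_i)} + 2^{-b/(16 m_j)}\bigr) \;<\; \frac{\sigma_i+\sigma_j}{2n}.
\]

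Finally, the two displayed estimates combine to the claim provided $\tfrac{1}{2}(1 - 1/(512 n^2)) \ge 1/(2n)$, which holds for all $n \ge 2$. The main obstacle is really just keeping the constants straight; this is precisely why condition (\ref{property 1}) was stated with two separate upper bounds (one of order $\sigma_i/n^2$, one of order $(\sigma_i/n)^8$) so that the same value of $b_0$ simultaneously controls the perturbation of the roots and the size of the $2^{-b/(16 m_\ell)}$ terms that will appear in the clustering threshold.
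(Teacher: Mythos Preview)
Your proof is correct and follows essentially the same route as the paper's: triangle inequality, then $\abs{z_i-z_j}\ge(\sigma_i+\sigma_j)/2$, then bound the perturbation terms $2^{-b/(2m_\ell)}$ by a small multiple of $\sigma_\ell$, and finally use the eighth-root form of condition~(\ref{property 1}) to get $2^{-b/(16m_\ell)}<\sigma_\ell/(4n)$. The only cosmetic difference is that the paper bounds the perturbation terms via condition~(\ref{greater2milognoversigmai}) (giving $2^{-b/(2m_\ell)}\le\sigma_\ell/(2n)\le\sigma_\ell/4$) rather than the sharper $\sigma_\ell/(1024n^2)$ from~(\ref{property 1}) that you invoke, which leads to the slightly cleaner chain $\sigma_i/2+\sigma_j/2-\sigma_i/4-\sigma_j/4\ge 2(2^{-b/(16m_i)}+2^{-b/(16m_j)})$; but the structure and the key uses of the hypotheses are identical.
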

\begin{proof}
Since $b \ge b_0$, we have $2^{-b/(2m_\ell)} \le \sigma_\ell$ for $\ell =
i,j$ by (\ref{greater2milognoversigmai}) and $2^{-b/(16m_\ell)} =
(2^{-b/(2m_\ell)})^{1/8} \le \sigma_\ell/(4n) \le \sigma_\ell/8$ by (\ref{property
  1}). Thus,
\begin{align*}
\abs{\hz_i-\hz_j}	\ge \max(\sigma_i,\sigma_j) - 2^{-b/(2m_i)} - 2^{-b/(2m_j)} 
		\ge \frac{\sigma_i}{2} + \frac{\sigma_j}{2} - \frac{\sigma_i}{4} - \frac{\sigma_j}{4}\ge 2\cdot ( 2^{-b/(16m_i)} + 2^{-b/(16m_j)}).
\end{align*} 
\end{proof}
		
\begin{lemma}\label{lem:clustering correctness} If $b \ge b_0$, the clustering
  algorithm computes the correct clustering, that is, it produces clusters $C_1$
  to $C_k$ such that $C_i = Z_i$ for all $i$ (under suitable numbering). Let
  $\tz_i$ be the seed of $C_i$ and let $\tsigma_i = \min_{j \not= i} \abs{\tz_i
    - \tz_j}$. Then, $(1 - 1/n)\sigma_i \le \tsigma_i \le (1 + 1/n) \sigma_i$ and $C_i$ as well as the root $z_{i}$ is contained in
  $\Delta(\tz_i,\min(\frac{1}{n^{2}},\frac{\tsigma_i}{256 n^2}))$. 

\end{lemma}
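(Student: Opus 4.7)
The plan is to analyze one outer-loop iteration. Fix the chosen seed $\hz$; by Lemma~\ref{lem:approximation quality}, $\hz$ lies in some $Z_i$, and I set $\tz_i:=\hz$. The task reduces to showing that the inner loop terminates at a power of two $a^{\ast}\in[m_i,2m_i]$ with $C=Z_i$; applied to each seed, this yields $C_i=Z_i$ under a suitable numbering of the clusters, after which the metric bounds follow by direct calculation.

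First I characterize $C$ at every power of two $a$ in the range $[m_i,8m_i)$. The diameter of $Z_i$ is at most $2\cdot 2^{-b/(2m_i)}\le 2\cdot 2^{-b/(2a)}=2\sqrt[a/2]{\delta}$, so $Z_i\subseteq C$; conversely, Lemma~\ref{lem:distbetweenapx} yields $|\hz-\hz_j|\ge 2\cdot 2^{-b/(16m_i)}>2\sqrt[a/2]{\delta}$ for every $\hz_j\in Z_j$ with $j\neq i$, so $C\cap Z_j=\emptyset$. Thus $|C|=m_i$, and the termination test $|C|\ge a/2$ succeeds iff $a\le 2m_i$. Because the interval $[m_i,2m_i]$ always contains a power of two, some $a^{\ast}\in[m_i,2m_i]$ in the iteration sequence satisfies the test with $C=Z_i$.

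The main effort, and the chief obstacle, is to rule out an earlier termination at some $a'>a^{\ast}$ in the sequence. For $a'<8m_i$ the previous argument still gives $C=Z_i$ and $|C|=m_i<a'/2$, so the test fails. For $a'\ge 8m_i$ the threshold may admit approximations from other clusters: if $\hz_j\in Z_j\cap C$, Lemma~\ref{lem:approximation quality} forces $|z_i-z_j|\le 4\cdot 2^{-b/(2a')}$. I then bound the total multiplicity of clusters within this distance of $z_i$ by a case split. For clusters with $|z_i-z_j|$ below a constant, Property~(\ref{property 1}) applied jointly at $i$ and $j$ makes $b$ so large that $4\cdot 2^{-b/(2a')}$ drops below every such distance; for clusters with $|z_i-z_j|$ of constant order or larger, the $n\Gamma_p$ and $\log(1/|P_i|)$ contributions to $b_0$ coming from Property~(\ref{property 2}) force $b$ to grow linearly in $n$, which shrinks $4\cdot 2^{-b/(2a')}$ below any such constant distance for all $a'\le a_0=O(n)$. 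Combining the two yields $|C|<a'/2$ at every $a'>a^{\ast}$, so the algorithm indeed stops at $a^{\ast}$.

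Finally, $(1-1/n)\sigma_i\le\tsigma_i\le(1+1/n)\sigma_i$ is immediate from the second claim of Lemma~\ref{lem:approximation quality} applied to $\tz_i\in Z_i$ and $\tz_j\in Z_j$. For the radius of $D_i$ note that every point of $C_i=Z_i$ and the root $z_i$ itself lie within $2\cdot 2^{-b/(2m_i)}$ of $\tz_i$; condition~(\ref{greater2mi}) gives $2\cdot 2^{-b/(2m_i)}\le 1/n^2$, while Property~(\ref{property 1}) combined with $\tsigma_i\ge\sigma_i/2$ gives $2\cdot 2^{-b/(2m_i)}\le\sigma_i/(512n^2)\le\tsigma_i/(256n^2)$, as required.
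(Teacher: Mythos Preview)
Your overall plan and the analysis for $a\in[m_i,8m_i)$ are correct and essentially match the paper. The final paragraph on the metric bounds is also fine. The gap is in your treatment of the range $a'\ge 8m_i$.

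You claim that for every $j\neq i$ the threshold satisfies $4\cdot 2^{-b/(2a')}<|z_i-z_j|$, arguing by a case split on whether $|z_i-z_j|$ is ``below a constant'' or not. This claim is false in general. Take $m_i=m_j=1$, $|z_i-z_j|=\sigma_i=\sigma_j$ extremely small (say $2^{-n^{2}}$), and all other roots at moderate distance from $z_i$. Property~(\ref{property 1}) at $i$ and $j$ only forces $b\gtrsim 16\log(4n/\sigma_i)$, and Property~(\ref{property 2}) adds roughly $8\log(1/|P_i|)+O(n\Gamma_p)$, which here is again $O(\log(1/\sigma_i))+O(n\Gamma_p)$. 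At the initial value $a'\le 4n$ you then get $2^{-b/(2a')}\ge 2^{-b/(8n)}=\sigma_i^{O(1/n)}$, which is \emph{much larger} than $\sigma_i=|z_i-z_j|$. So the approximation $\hz_j$ can indeed enter $C$, and your argument that ``$b$ is so large that the threshold drops below every such distance'' does not go through: the relevant conditions scale $b$ with $m_i,m_j$, not with $a'$.

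The paper handles this range by a different and essential idea. It does \emph{not} try to show $C=Z_i$. Instead it assumes early termination at some $a$ with $a/2\ge 2\ell$ (so $|C|\ge a/2>m_i$), observes from Lemma~\ref{lem:distbetweenapx} that any foreign $q\in C\setminus Z_i$ forces $4m_i\le a/2$, hence $|C\setminus Z_i|\ge 3a/8$, and then notes that at least $3a/8$ roots (counted with multiplicity) lie within distance $3\sqrt[a/2]{\delta}$ of $z_i$. Plugging this into $|P_i|=\prod_{j\neq i}|z_i-z_j|^{m_j}$ yields $|P_i|<3^{n}2^{-b/8}2^{n\Gamma_p}$, which directly contradicts Property~(\ref{property 2}). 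The product structure of $P_i$ is what converts ``many nearby roots'' into a usable inequality; your case split never exploits this and cannot close the gap.
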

\begin{proof} Assume that the algorithm has already produced $Z_1$ to
  $Z_{i-1}$ and is now run with a seed $\hz \in Z_i$. We prove that it
  terminates with $C = Z_i$. Let $\ell$ be a power of two such that $\ell \le
  m_i < 2\ell$. The proof that the algorithm terminates with $C = Z_i$ consists
  of two parts. We first assume that steps 2 and 3 are executed for $a = 2
  \ell$. We show that the algorithm will then terminate with $C = Z_i$. In the
  second part of the proof, we show that the algorithm does not terminate as
  long as $a > 2\ell$. 
			
Assume the algorithm reaches steps 2 and 3 with $a/2=\ell$, i.e. $a/2\le m_i <
a$. For any approximation $q\in Z_i$, we have $|\hz-q|\le 2\cdot 2^{-b/(2m_i)}
= 2\sqrt[m_i/2]{\delta} \le 2\sqrt[a/2]{\delta}$. Thus, $Z_i\subseteq
C$.  Conversely, consider any approximation $q\notin Z_i$. Then, 
$\abs{\hz-q} \ge 2\cdot 2^{-b/(16m_i)} > 2\sqrt[4m_i]{\delta}\ge
2\sqrt[2a]{\delta}$, and thus no such approximation is contained in $C$. This shows that 
$C = Z_i$. Since $\abs{C} \ge a/2$, the algorithm terminates and returns
$Z_i$.

It is left to argue that the algorithm does not terminate before $a/2= \ell$. Since
$\ell$ and $a$ are powers of two, assume we terminate with $a/2\ge 2\ell$, and
let $C$ be the cluster returned. Then, $m_i < a/2 \le \abs{C} < a$ and $Z_i$ is
a proper subset of $C$. Consider any approximation $q\in C \setminus Z_i$, say
$q \in Z_j$ with $j \not= i$. 
Since $q\notin Z_i$,  we have $|q-\hz|\ge 2\cdot(
2^{-b/(16m_i)} + 2^{-b/(16m_j)}) > 2 \cdot
2^{-b/(16m_i)}> 2\sqrt[4m_i]{\delta}$. And since $q\in C$, we have
$|q-\hz|\le 2\sqrt[a/2]{\delta}$. Thus, $4m_i\le a/2$ and, hence, there are 
at least $3a/8$ many approximations in $C \setminus Z_i$. Furthermore, 
$\abs{z_i - z_j} \le \abs{z_i - \hz} + \abs{\hz - q} + \abs{q - z_j} \le
2^{-b/(2m_i)} + 2 \sqrt[a/2]{\delta} + 2^{-b/(2m_j)} \le 2^{-b/(16m_i)} + 2
\sqrt[a/2]{\delta} + 2^{-b/(16m_j)} \le 3
\sqrt[a/2]{\delta}$. Consequently,
there are at least $3a/8$ roots $z_j\neq z_i$ counted with multiplicity within distance
$3 \sqrt[a/2]{\delta}$ to $z_i$. This observation allows us to upper
bound the value of $|P_i|$, namely
\begin{align*}
|P_i| = \prod_{j\neq i}|z_i-z_j|^{m_j} \le (3\sqrt[a/2]{\delta})^{3a/8} 2^{(n - m_i - 3a/8)\Gamma_p}< 3^n \delta^{3/4} 2^{n\Gamma_p}\le 3^n 2^{-3b/16}2^{n\Gamma_p}<3^{n} 2^{-b/8}\cdot 2^{n\Gamma_p},
\end{align*}
a contradiction to (\ref{property 2}). 

We now come to the claims about $\tsigma_i$ and the disks defined in terms of
it. The relation between $\sigma_i$ and $\tsigma_i$ follows from the second
part of Lemma~\ref{lem:approximation quality}. All points in $C_i = Z_i$ have distance at most $2\cdot 2^{-b/(2m_i)}$
from $\tz_i$. Also, by (\ref{greater2mi}) and (\ref{property 1}),
\[ 2\cdot 2^{-b/(2m_i)} < \min(1/n^2,\sigma_i/(512 n^2)) \le
\min(1/n^2,\tsigma_i/(256n^2))\]
Hence, $C_i$ as well as $z_{i}$ is contained in $\Delta(\tz_i,\min(1/n^2,\tsigma_i/(256n^2)))$.  
\end{proof}
		
\begin{lemma}\label{lem:clustering complexity}
	For a fixed precision $b$, computing a complete clustering needs $\tO(nb+n^2\Gamma_p)$ bit operations.
\end{lemma}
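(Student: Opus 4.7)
My plan is to bound separately the number of distance comparisons performed by the clustering procedure and the precision required by each such comparison, then multiply the two. A naive implementation that works with the full $O(b+n\Gamma_p)$-bit representations of the $\hz_i$ supplied by Corollary~\ref{thm:pansresult} would cost $\tO(n^2b+n^3\Gamma_p)$, so the savings claimed in the lemma come from truncating the approximations to the lowest precision compatible with the thresholds actually used.

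For the operation count, the outer loop runs at most $k\le n$ times, once per extracted cluster. Within a single outer iteration the parameter $a$ starts at $2^{\lfloor\log n\rfloor+2}\le 4n$ and is halved whenever $|C|<a/2$, so the inner loop executes $O(\log n)$ times. Each inner iteration shrinks $C$ by testing $|\hz-q|\le 2\sqrt[a/2]{\delta}$ for every $q\in C$, using at most $n$ tests. In total the procedure performs $O(n^2\log n)$ distance tests; the remaining bookkeeping (seed selection, cardinality comparisons, the check in Step~\ref{check 1}) is clearly subsumed.

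For the precision, Theorem~\ref{thm:rootbound} together with Step~\ref{algo:lower Gamma} guarantees $|\hz_i|<2^{\Gamma+1}=2^{\Gamma_p+O(\log n)}$, and every threshold satisfies $2\sqrt[a/2]{\delta}=2^{1-b/(2a)}\ge 2^{1-b/(8n)}$ because $a\le 4n$. I will compare $|\hz-q|^2$ with the squared threshold in order to avoid the square root, truncating the real and imaginary parts of $\hz$ and $q$ to $O(\Gamma_p+b/n)$ bits past the binary point. The induced error in $|\hz-q|$ is at most $O(2^{-b/(8n)})$, well below the smallest admissible threshold. Each truncated test then reduces to a constant number of additions, subtractions and squarings of $O(\Gamma_p+b/n)$-bit integers and costs $\tO(\Gamma_p+b/n)$ bit operations using fast arithmetic. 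Multiplying by the $O(n^2\log n)$ test count yields $\tO(n^2\Gamma_p+nb)$ bit operations for the clustering itself, which already absorbs the $O(n(b+n\Gamma_p))$ cost of reading the approximations from the factorization.

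The main obstacle is justifying that the truncated comparisons produce a clustering that is good enough for the algorithm's purposes. For $b\ge b_0$, Lemma~\ref{lem:distbetweenapx} provides a large multiplicative gap between intra-cluster and inter-cluster distances that dominates the truncation error, so every truncated decision agrees with the exact one and Lemma~\ref{lem:clustering correctness} applies verbatim. For $b<b_0$ the truncated tests may misclassify points, but any resulting clustering either fails the count check in Step~\ref{check 1} or is rejected by the subsequent certification step; in both cases $b$ is doubled and the algorithm retries, so neither correctness of the overall root isolation nor the per-iteration cost bound is affected.
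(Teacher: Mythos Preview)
Your precision estimate is wrong, and this breaks the argument. The inequality
\[
2\sqrt[a/2]{\delta}=2^{1-b/(2a)}\ge 2^{1-b/(8n)}
\]
goes the other way: from $a\le 4n$ you get $b/(2a)\ge b/(8n)$ and hence $2^{1-b/(2a)}\le 2^{1-b/(8n)}$. In fact the parameter $a$ is halved repeatedly and can drop all the way down to $a=2$ (the seed alone already forces $|C|\ge 1\ge a/2$ there), at which point the threshold is $2^{1-b/4}$. Resolving a comparison against that threshold requires $\Theta(b)$ bits, not $O(b/n)$ bits. With your count of $O(n^2\log n)$ tests, this yields $\tO(n^2b)$, not $\tO(nb)$; and for $b\ge b_0$ the truncated comparisons with error $2^{-O(b/n)}$ cannot reproduce the exact decisions at small $a$, so the appeal to Lemma~\ref{lem:clustering correctness} is not justified either.

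The paper's proof exploits a coupling you have overlooked: when the inner loop is at parameter $a$, the previous failed termination check $|C|<a$ guarantees that $C$ contains at most $a$ points. Thus at that stage one performs at most $a$ comparisons, each against a threshold $2^{-b/(2a)}$, i.e.\ each costing $O(b/a+\Gamma)$ bit operations. The product is $O(b+a\Gamma)$ per inner iteration; summing the geometric series in $a$ gives $\tO(b+n\Gamma)$ per seed and $\tO(nb+n^2\Gamma_p)$ overall. The essential point is that the precision required \emph{increases} exactly as the number of remaining candidates \emph{decreases}, and it is this balance---not a uniform truncation---that yields the claimed bound.
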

\begin{proof}
For each approximation, we examine the number of distance computations we need
to perform. Recall that $b$ (property (\ref{greaternlogn})) and $a$ are powers
of two, $a \le 4n$ by definition, and $b \ge 8n\ge 2a$ by property
(\ref{greaternlogn}). Then, $\sqrt[a/2]{\delta} = 2^{-b/(2a)} \in 2^{-\N}$. Thus, the number $\sqrt[a/2]{\delta}$ has a very
simple format in binary notation. There is a single one, and this one is
$b/(2a)$ positions after the binary point. In addition, all approximations $\hz$ have absolute value less than $2\cdot 2^\Gamma$ due to Step~\ref{algo:lower Gamma} in the overall algorithm.
Thus, each evaluation of the form $|\hz - q|\le
2\sqrt[a/2]{\delta}$ can be done with $$O(\Gamma+\log \delta^{-2/a})=O((b/a)+\Gamma)=O((b/a)+\Gamma_p+\log n)$$
bit operations.
	
For a fixed seed $\hz$, in the $i$-th iteration of step 2, we have at most $a\le n/2^{i-2}$ many unclustered approximations left in $C$, since otherwise we would have terminated in an earlier iteration. 
	Hence, we perform at most $a$ evaluations of the form $|\hz - q|\le 2\sqrt[a/2]{\delta}$, resulting in an overall number of bit operations of $a\cdot O((b/a)+\Gamma)=O(b+a\Gamma)$ for a fixed iteration. As we halve $a$ in each iteration, we have at most $\log n+2$ iterations for a fixed $\hz$, leading to a bit complexity of $O(b\log n+n\Gamma)=\tO(b+n\Gamma)=\tO(b+n\Gamma_p)$.
	
	In total, performing a complete clustering has a bit complexity of at
        most $\tO(nb+n^2\Gamma_p)$.
\end{proof}
		
When the clustering succeeds, we have $k$ clusters $C_1$ to $C_k$ and
corresponding seeds $\tilde{z}_{1},\ldots,\tilde{z}_k\subseteq \{\hz_{1},\ldots,\hz_{n}\}$. For $i=1,\ldots,k$, we define $D_i := \Delta(\tz_i,r_i)$, where $\tz_i$ is the seed for the cluster $C_{i}$ and 
\begin{align}\label{def:ri}
r_i:=\min(2^{-\ceil{2\log n}},  2^{\ceil{\log \tilde{\sigma}_{i}/(256n^{2})}})\ge\min\left(\frac{1}{2n^2},\frac{\tilde{\sigma_{i}}}{256n^{2}}\right).
\end{align}
In particular, $r_{i}$ is a $2$-approximation of $\min(1/n^2,\tilde{\sigma}_{i}/(256n^{2}))$. Notice that the cost for computing the separations $\tilde{\sigma}_{i}$ is bounded by $\tilde{O}(nb+n^{2}\Gamma_p)$ bit operations since we can compute the nearest neighbor graph of the points $\tz_{i}$ (and thus the values $\tilde{\sigma}_{i}$) in $O(n\log n)$ steps~\cite{EPY-NNG:1997} with a precision of $O(b+n\Gamma)$.

Now, suppose that $b\ge b_{0}$, Then, according to Lemma~\ref{lem:clustering correctness}, the cluster $C_i$ is contained in the disk $D_i$. Furthermore, $D_i$ 
contains exactly one root $z_i$ of $p$ (under suitable numbering of the roots), and it holds that 
$m_i=\operatorname{mult}(z_{i},p)=|C_{i}|$ and $\min(1/(2n^2),\sigma_{i}/(512n^{2}))\le r_i\le \min(1/n^2,\sigma_{i}/(64n^{2}))$. If the clustering succeeds for a $b < b_0$, we have no
   guarantees (actually, the termination condition in step 4 gives some
   guarantee, however, we have chosen not to exploit it). Hence, before we proceed, we verify that each disk $D_i$ actually contains the cluster $C_{i}$. If this is not the case, then we report a failure, return to the factorization step with $b=2b$, and compute a new corresponding clustering.
   
   In the next and final step, we aim to show that each of the enlarged disks
  $\Delta_{i}:=\Delta(\tilde{z}_{i},R_{i}):=\Delta(\tilde{z}_{i},nr_{i})$,
  $i=1,\ldots,k$, contains exactly one root $z_i$ of $p$, and that the number
  of elements in $C_i\subseteq \Delta_{i}$ equals the multiplicity of
  $z_i$. Notice that, from the definition of $r_{i}$ and $\Delta_{i}$, it
  obvious that the disks $\Delta_{i}$ are pairwise disjoint and that
  $C_{i}\subseteq D_{i}\subseteq \Delta_{i}$.

		
\subsubsection{Certification}\label{sec:certification}

\ignore{The certification step receives clusters $C_1$ to $C_k$, their seeds
$\tz_1$ to $\tz_k$, and corresponding enclosing disks
$D_{1}=\Delta(\tz_{1},r_{1})$ to $D_k=\Delta(\tz_{1},r_k)$ with
$\min(\frac{1}{2n^2},\frac{\tilde{\sigma}_{i}}{256n^{2}})\le r_i\le
\min(\frac{1}{n^2},\frac{\tilde{\sigma}_{i}}{128n^{2}})$. Furthermore, it holds
that $C_{i}\subseteq  D_{i}\subseteq  \Delta_i=\Delta(\tz_{i},nr_{i})$ and the disks $\Delta_{i}$ are pairwise disjoint.} In order to show that $\Delta_{i}$ contains exactly one root of $p$ with multiplicity $|C_{i}|$, we show that each $\Delta_{i}$ contains the same number of roots of $p$ and $\hp$ counted with multiplicity. For the latter, we compute a lower bound for $|\hp(z)|$ on the boundary $\bd\Delta_{i}$ of $\Delta_{i}$, and check whether this bound is larger than $|(\hp-p)(z)|$ for all points $z\in\bd\Delta_{i}$. If this is the case, then we are done according to Rouch\'{e}'s theorem. Otherwise, we start over the factorization algorithm with $b=2b$. We now come to the details:\medskip

\begin{enumerate}
\item Let $\lambda=2^{l_{\lambda}}$ be the $2$-approximation of $\onenorm{p}/|p_n|$ as defined in step~\ref{def:lambda} of the overall algorithm. 

\item \label{step 2} For $i=1,\ldots,k$, let $z_i^*:=\tz_i+n\cdot r_i\in\Delta_i$. Note that $|z_i^*|\le(1+1/n)\cdot M(\tz_i)$ since $nr_i\le 1/n$.
\item We try to establish the inequality 
\begin{equation}\label{eq:cert}
|\hp(z_i^*)/p_{n}| > E_{i}:=64\cdot 2^{-b}\lambda M(\tz_i)^n
\end{equation} for all $i$. We will see in the proof of
  Lemma~\ref{lem:value on D+} that this implies
  that each disk $\Delta_{i}$ contains exactly one root
  $z_i$ of $p$ and that its multiplicity equals the number $|C_{i}|$ of approximations within $\Delta_{i}$. In order to
  establish the inequality, we consider $\rho=1,2,4,8,\ldots$ and compute $\abs{\hp(z_i^*)/p_n}$ to an absolute error less than $2^{-\rho}$. If, for all $\rho\le b$, we fail to show that $\abs{p(z_i^*)/p_n}>E_{i}$, we report a failure and go back to the factorization algorithm with $b=2b$. Otherwise, let $\rho_{i}$ be the smallest $\rho$ for which we are successful. 
  
  \item If, at any stage of the algorithm, $\sum_{i}\rho_{i}>b$, we also report a failure and go back to the factorization algorithm with $b=2b$. 
Lemma~\ref{lem:costevaluations} then shows that, for fixed $b$, the number of bit operations that are used for all evaluations is bounded by $\tilde{O}(nb+n^2\tau_p+n^{3})$. 
\item If we can verify that $|\hp(\tz_i+nr_i)/p_{n}|> E_{i}$ for all $i$, we return the disks $\Delta_{i}$ and the multiplicities $m_{i}=|C_{i}|$.\medskip
	\end{enumerate}

\begin{lemma}\label{lem:costevaluations}
For any $i$, we can compute $|\hp(z_i^*)/p_n|$ to an absolute error less than $2^{-\rho}$ with a number of bit operations less than $$\tilde{O}(n(n+\rho+n\log M(\tz_i)+\tau_p)).$$ For a fixed $b$, the total cost for all evaluations in the above certification step is upper bounded by $\tilde{O}(nb+n^2\tau_p+n^{3})$.
\end{lemma}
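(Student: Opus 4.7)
The plan is to view $\hp(z_i^*)/p_n = \prod_{j=1}^n(z_i^*-\hz_j)$ as a product of $n$ complex numbers and compute it by $n$ subtractions followed by $n$ multiplications in fixed working precision, relying on standard forward-error propagation. The first step is an a priori bound on the magnitude of the output. By Step~\ref{step 2}, $|z_i^*|\le (1+1/n)\, M(\tz_i)$; using the elementary inequality $M(a+b)\le 2M(a)M(b)$, each factor satisfies $|z_i^*-\hz_j|\le 2(1+1/n)\cdot M(\tz_i)\, M(\hz_j)$. Step~\ref{algo:lower lambda} of the overall algorithm forces $\prod_j M(\hz_j)\le 8\lambda\le 16(n+1)\cdot 2^{\tau_p}$, and therefore $|\hp(z_i^*)/p_n|\le 2^{B}$ with $B=O(n+n\log M(\tz_i)+\tau_p)$.

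To obtain absolute error at most $2^{-\rho}$, we will carry out all operations at a working precision $L=\rho+B+O(\log n)$ (a fixed-point scale is enough: the inputs $\hz_j$, $\tz_i$, and $r_i$ are already dyadic, so truncating each to $L$ bits past the scale is pure bookkeeping). A chain of $n$ multiplications with per-step relative error at most $2^{-L}$ accumulates to a relative error at most $n\cdot 2^{-L}$, which the choice of $L$ forces below $2^{-\rho}/|\hp(z_i^*)/p_n|$. Each multiplication at precision $L$ costs $\tO(L)$ bit operations, so a single evaluation costs $\tO(nL)=\tO(n(n+\rho+n\log M(\tz_i)+\tau_p))$, which is the first claim.

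For the second claim, the algorithm tries $\rho=1,2,4,\ldots,\rho_i$, and the geometric-series structure lets the final term dominate up to a $\log\rho_i$ factor absorbed into $\tO$. Summing over $i$ and using $k\le n$ yields
\[\sum_{i}\tO\bigl(n(n+\rho_i+n\log M(\tz_i)+\tau_p)\bigr) = \tO\Big(n\sum_i\rho_i + n^3 + n^2\tau_p + n^2\sum_i\log M(\tz_i)\Big).\]
The safeguard $\sum_i\rho_i\le b$ bounds the first summand by $\tO(nb)$; an aborting run contributes at most one extra single-$i$ block of cost $\tO(n(n+b+n\log M(\tz_i)+\tau_p))$, which is absorbed in the same bound. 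For the final summand, each $\tz_i$ is one of the $\hz_j$ and $M(\hz_j)\ge 1$, so $\sum_i\log M(\tz_i)\le\log\prod_{j=1}^n M(\hz_j)\le\log(8\lambda)=O(\log n+\tau_p)$, giving $\tO(n^2\tau_p)$. The pieces sum to $\tO(nb+n^3+n^2\tau_p)$.

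The one delicate point is the magnitude bound on $\hp(z_i^*)/p_n$: using only $|\hz_j|\le 2^{\Gamma+1}$ would inflate $B$ by $n\Gamma_p$ per factor and ruin the target complexity. The essential amortization is Step~\ref{algo:lower lambda}, which packages all $n$ root-modulus bounds into the single quantity $\lambda$, together with the telescoping $\sum_i\log M(\tz_i)\le\log\prod_j M(\hz_j)$ that redistributes that single budget across the $k$ clusters.
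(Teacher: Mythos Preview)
Your proof is correct and follows essentially the same approach as the paper: bound the magnitude of all partial products by $2^{B}$ with $B=O(n+\tau_p+n\log M(\tz_i))$ via the safeguard $\prod_j M(\hz_j)\le 8\lambda$ from Step~\ref{algo:lower lambda}, evaluate the $n$-fold product at working precision $O(\rho+B)$, and amortize over $i$ using $\sum_i\rho_i\le b$ together with $\sum_i\log M(\tz_i)\le\log(8\lambda)=O(\tau_p+\log n)$. The only quibble is that your error-propagation sentence mixes fixed-point truncation with a floating-point ``per-step relative error $2^{-L}$'' claim (a small factor $z_i^*-\hz_j$ truncated to $L$ absolute bits need not have relative error $2^{-L}$); the paper sidesteps this by phrasing everything in absolute terms via interval arithmetic with precision doubling, which yields the identical precision requirement and cost.
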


\begin{proof}
Consider an arbitrary subset $S\subseteq \{\hz_{1},\ldots,\hz_{n}\}$. We first derive an upper bound for $\prod_{\hz\in S}|z_i^{*}-\hz|$. For that, consider the 
polynomial $\hp_{S}(x):=\prod_{\hz\in S}(x-\hz)$. The $i$-th coefficient of $\hp_{S}$ is bounded by 
$\binom{|S|}{i}\cdot \prod_{\hz\in S}M(\hz)\le 2^{n}\prod_{i=1}^{n}M(\hz_{i})\le 8\lambda\cdot 2^{n}$ due to step~\ref{algo:lower lambda} in the overall algorithm.
It follows that 
\begin{align*}
\prod_{\hz\in S}|z_i^{*}-\hz|&=|\hp_{S}(z_{i}^{*})|\le (n+1)M(z_{i}^{*})^{n}\cdot 8\lambda\cdot 2^{n}< 64(n+1)^{2}\cdot 2^{n}2^{\tau_{p}}M(\tz_{i})^{n}
\end{align*}

In order to evaluate $|\hp(z_i^*)/p_n|=\prod_{j=1}^{n}|z_{i}^{*}-\hz_{j}|$, we use approximate interval evaluation with an absolute precision $K=1,2,4,8,\ldots$. More precisely, we compute the distance of $z_{i}^{*}$ to each of the points $\hz_{j}$, $j=1,\ldots,n$, up to an absolute error of $2^{-K}$, and then take the product over all distances using a fixed point precision of $K$ bits after the binary point.\footnote{In fact, we compute an interval $I_{j}$ of size less than $2^{-K}$ such that $|z_{i}^{*}-\hz_{j}|\in I_{j}$, and then consider the product $\prod_{j}I_{j}$.}
We stop when the resulting interval has size less than $2^{-\rho}$. The above consideration shows that all intermediate results have at most $O(n+\tau_{p}+n\log M(\tz_{i})$ bits before the binary point. Thus, we eventually succeed for an $K=O(\rho+\tau_p+n+n\log M(\tz_i))$. Since we have to perform $n$ subtractions and $n$ multiplications, the cost is bounded by $\tilde{O}(nK)$ bit operations for each $K$. Hence, the bound for the evaluation of $|\hp(z_i^*)/p_n|$ follows.

We now come to the second claim. Since we double $\rho$ in each iteration and consider at most $\log b$ iterations, the cost for the evaluation of $|\hp(z_i^*)/p_n|$ are bounded by $\tilde{O}(n(n+\rho_{i}+n\log M(\tz_i)+\tau_p))$. Since we ensure that $\sum_{i}\rho_{i}\le b$, it follows that the total cost is bounded
by $\tilde{O}(nb+n^2\tau_p+n^{3}+n^2\log(\prod_{i=1}^k M(\tz_i)))$. The last summand is smaller than $n^2\cdot 8\lambda$ 
according to step~\ref{algo:lower lambda}, and $\lambda<2\onenorm{p}/|p_n|<2(n+1)2^{\tau_p}$. This shows the
claim.
\end{proof}	
	
	We now prove correctness of the certification algorithm. In particular, we show that Inequality~(\ref{eq:cert}) implies that the disk $\Delta_{i}$ contains the same number of roots of the polynomials $\hp$ and~$p$.
	\begin{lemma}\label{lem:value on D+}
\begin{enumerate}
\item For all points $x\in \bd \Delta_i$, it holds that
		\[ |\hp(x)|\ge \frac{1}{\KurtTwo{4}{8}}|\hp(z_i^*)|\ . \]
\item If Inequality (\ref{eq:cert}) holds for all $i$, then $\Delta_i$ isolates a root of
    $z_i$ of $p$ of multiplicity $m_{i}=\operatorname{mult}(z_{i},p)=|C_{i}|$.
\item If $b \ge b_0$, then 
\[    \frac{|\hp(z_i^*)|}{|p_{n}|} >
\left(\frac{\min(256,\sigma_i)}{1024n}\right)^{m_i}\cdot \frac{|P_i|}{8} \ge 64\cdot 2^{-b_0}\lambda M(\tz_i)^n\]
\end{enumerate}
	\end{lemma}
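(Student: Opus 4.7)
The plan is to analyze all three parts by writing $\hp(x) = p_n \prod_j (x - \hz_j)$ and splitting the factors into those with $\hz_j \in C_i$ and those with $\hz_j \notin C_i$ (i.e.\ belonging to some other cluster $C_\ell$). For part (1), I would consider $|\hp(x)|/|\hp(z_i^*)| = \prod_j |x - \hz_j|/|z_i^* - \hz_j|$. For $\hz_j \in C_i$, the relation $|\hz_j - \tz_i| \le r_i$ combined with $|x - \tz_i| = |z_i^* - \tz_i| = nr_i$ gives, via the triangle inequality, a per-factor ratio of at least $(n-1)/(n+1)$. For $\hz_j \notin C_i$, one has $|\hz_j - \tz_i| \ge \tsigma_i - r_\ell \ge \tsigma_i(1 - 1/(128n^2))$, while $nr_i \le \tsigma_i/(128n)$ by the choice of $r_i$; the resulting per-factor ratio is $1 - O(1/n)$, so the product over the remaining $n - m_i$ factors remains bounded below by a constant close to one. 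Multiplying the two contributions and tracking constants carefully should yield the stated bound $|\hp(x)| \ge \frac{1}{4}|\hp(z_i^*)|$.

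Part (2) is a direct application of Rouch\'{e}'s theorem. I would bound $|(p - \hp)(x)| \le \norm{p - \hp}_1 \cdot \max(1,|x|)^n \le 2^{-b} \norm{p}_1 \cdot ((1+1/n)M(\tz_i))^n$ for $x \in \bd\Delta_i$, using that $|x| \le |\tz_i| + nr_i$ together with $nr_i \le 1/n$. Substituting $\norm{p}_1 \le 2\lambda|p_n|$ (from the definition of $\lambda$ as a $2$-approximation of $\norm{p}_1/|p_n|$) and $(1+1/n)^n < e < 3$ yields $|(p-\hp)(x)| < 6 \cdot 2^{-b}\lambda|p_n|M(\tz_i)^n = \tfrac{3}{16} E_i|p_n|$. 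Combined with part (1) and hypothesis~(\ref{eq:cert}), this gives $|\hp(x)| \ge \tfrac{1}{4}|\hp(z_i^*)| > \tfrac{1}{4} E_i|p_n| > |(p-\hp)(x)|$ on the whole boundary $\bd\Delta_i$. Rouch\'{e}'s theorem then forces $p$ and $\hp$ to have the same number of zeros (with multiplicity) inside $\Delta_i$. Since $\hp$ has exactly $|C_i|$ such zeros (the elements of $C_i \subseteq D_i$, with no others because the $\Delta_i$ are pairwise disjoint and partition all the approximations), $p$ has $|C_i|$ zeros in $\Delta_i$; summing over $i$ accounts for all $n$ roots, so the $k$ distinct roots of $p$ distribute one per $\Delta_i$ with multiplicities $|C_i|$.

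Part (3) uses the stronger hypothesis $b \ge b_0$, under which $C_i = Z_i$ by Lemma~\ref{lem:clustering correctness} and each $\hz_j \in C_\ell$ satisfies $|\hz_j - z_\ell| \le 2^{-b/(2m_\ell)}$ by Lemma~\ref{lem:approximation quality}. I split $|\hp(z_i^*)|/|p_n| = \prod_j |z_i^* - \hz_j|$ again. For $\hz_j \in C_i$, $|z_i^* - \hz_j| \ge (n-1) r_i$; substituting $r_i \ge \min(1/(2n^2), \tsigma_i/(256 n^2))$ and $\tsigma_i \ge (1-1/n)\sigma_i$ gives $(n-1)r_i \ge \min(256, \sigma_i)/(1024 n)$, yielding the in-cluster factor $(\min(256,\sigma_i)/(1024n))^{m_i}$. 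For $\hz_j \in C_\ell$ with $\ell \ne i$, the triangle inequality $|z_i^* - \hz_j| \ge |z_i - z_\ell| - |z_i - z_i^*| - |z_\ell - \hz_j|$ together with $|z_i - z_i^*| \le r_i + nr_i$ and $|z_\ell - \hz_j| \le 2^{-b/(2m_\ell)}$ (both of which are small compared to $\sigma_\ell \le |z_i - z_\ell|$ by (\ref{greater2milognoversigmai}) and (\ref{property 1})) yields $|z_i^* - \hz_j| \ge (1 - O(1/n)) |z_i - z_\ell|$, so the product over all out-of-cluster factors is at least $|P_i|/8$. This proves the first inequality. The second inequality then follows from property~(\ref{property 2}), which gives $|P_i| \ge 2^{-b/8}(n+1) 2^{2n\Gamma_p + 8n}$, combined with the crude estimates $M(\tz_i) \le 2 \cdot 2^{\Gamma_p}$ and $\lambda \le 2\norm{p}_1/|p_n| \le 4(n+1) 2^{\tau_p}$ obtained from Lemma~\ref{lem:tauTau}.

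The main obstacle I anticipate is the constant $\tfrac{1}{4}$ in part (1): the worst-case per-factor ratio $(n-1)/(n+1)$ for in-cluster factors, raised to $m_i$, can drop below $\tfrac{1}{4}$ when $m_i$ is comparable to $n$. A purely generic triangle-inequality argument will therefore not suffice; the proof likely needs to exploit the sharper estimate $|\hz_j - \tz_i| \le 2 \cdot 2^{-b/(2m_i)}$ available once clustering has succeeded, which is typically much smaller than $r_i$ and sends the in-cluster per-factor ratio to $1 - O(1/\operatorname{poly}(n))$. Once this tight bound is nailed down, part (2) becomes a mechanical application of Rouch\'{e}, and part (3) reduces to routine (if tedious) algebra against the carefully designed properties (\ref{greater2milognoversigmai})--(\ref{property 2}).
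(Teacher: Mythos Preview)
Your outline for parts (1) and (2) is the paper's approach: factor $\hp(x)/\hp(z_i^*)=\prod_j |x-\hz_j|/|z_i^*-\hz_j|$, split by cluster membership, bound each factor via triangle inequalities, then feed the result into Rouch\'e. You also correctly put your finger on the weak spot: the in-cluster per-factor ratio is only $(n-1)/(n+1)$, and $((n-1)/(n+1))^{m_i}$ can indeed drop below $1/4$. The paper's displayed chain is loose at exactly this point (the argument really only yields $((n-1)/(n+1))^n$, which tends to $e^{-2}$). However, your proposed remedy does not work. The estimate $|\hz_j-\tz_i|\le 2\cdot 2^{-b/(2m_i)}$ comes from Lemma~\ref{lem:approximation quality} and Lemma~\ref{lem:clustering correctness}, both of which assume $b\ge b_0$. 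Part (1) must hold for \emph{every} $b$ at which clustering succeeds and the check $C_i\subseteq D_i$ passes, because it is what makes part (2)---the correctness of certification---valid, and certification may well succeed at some $b<b_0$. For such $b$ the only in-cluster information available is $|\hz_j-\tz_i|\le r_i$, so the $(n-1)/(n+1)$ ratio is genuinely all one has. The right repair is not a sharper distance bound but a weaker constant in place of $1/4$ (with a matching enlargement of the $32$ in the definition of $E_i$); the rest of the argument is unaffected.

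For part (3) you take a different route from the paper, and yours is the cleaner one. The paper does not factor $\hp(z_i^*)$ directly; it writes $|\hp(z_i^*)|\ge |p(z_i^*)|-2^{-b}\onenorm{p}M(z_i^*)^n$, lower-bounds $|p(z_i^*)|$ via Lemma~\ref{lem:value by dist} (using that $z_i^*$ sits at distance between $(n-1)r_i$ and $(n+1)r_i\le\sigma_i/n$ from $z_i$), and then shows with (\ref{property 1})--(\ref{property 2}) that the subtracted term is negligible. Your direct computation works because under $b\ge b_0$ each out-of-cluster $\hz_j\in C_\ell$ satisfies $|\hz_j-z_\ell|\le 2^{-b/(2m_\ell)}\le\sigma_\ell/(2n)$, so $|z_i^*-\hz_j|\ge(1-O(1/n))|z_i-z_\ell|$ and the out-of-cluster product is at least $|P_i|/4$; combined with the in-cluster bound $(n-1)r_i\ge\min(256,\sigma_i)/(1024n)$ this gives the first inequality without passing through $p$ or Lemma~\ref{lem:value by dist}. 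Both arguments are valid; yours is shorter, while the paper's keeps the estimate tied to $p$ itself.
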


	\begin{proof}
		\KurtTwo{First, we show that, for any two points $x,y\in \bd \Delta_i$, their distances to any approximation $\hz\in D_j$ differ only by a factor in between $[1-1/n,1+1/n]$:
		\[ (1-1/n)|y-\hz| \le |x-\hz| \le (1+1/n)|y-\hz| \ .\]}{}
		For a fixed $\hz$, let $x$ be the farthest point on \KurtTwo{$\bd \Delta$}{$\bd \Delta_i$} from $\hz$, and let $y$ be the nearest. For $i\neq j$, we have
		\begin{align*}
			|x-\hz| &\le |x-\tz_i| + |\tz_i-\tz_j| + |\tz_j-\hz| \le (1+1/n)|\tz_i-\tz_j|\ , \text{ and }\\
			|y-\hz| &\ge |\tz_i-\tz_j| - |y-\tz_i| - |\tz_j-\hz| \ge (1-1/n)|\tz_i-\tz_j|\ .
		\end{align*}
		Similarly, for $i=j$:
		\begin{align*}
			|x-\hz| &\le |x-\tz_i| + |\tz_i-\hz| \le (1+1/n)nr_i\ , \text{ and }\\
			|y-\hz| &\ge |y-\tz_i| - |\tz_i-\hz| \ge (1-1/n)nr_i\ .
		\end{align*}
		Consequently, for any $x,y\in \bd \Delta_i$, it holds that
		\[ |\hp(x)| = |p_n|\cdot\prod_{\ell=1}^{n} |x-\hz_\ell|\ge \left(\frac{1-1/n}{1+1/n}\right)^n |p_n|\cdot\prod_{\ell=1}^{n} |y-\hz_\ell| \ge
                \frac{1}{8}|\hp(y)|\ .\]
This shows the first claim. 

We turn to the second claim. Since $nr_i<1/n$, we have $|x|<(1+1/n)M(\tz_i)$ for all $x\in\bd\Delta_i$. Now, if 
$|\hp(z_i^*)/p_{n}| > 64\cdot 2^{-b}\lambda M(\tz_i)^n$, then 
\begin{align*}
|\hp(x)|>\frac{|\hp(z_i^*)|}{8}>2|p_n|\lambda 2^{-b} M(x)^n>\onenorm{p}2^{-b}M(x)^n \ge \onenorm{\hp - p} M(x)^n \ge |\hp(x)-p(x)|.
\end{align*}
Hence, according to Rouch\'e's theorem. $\Delta_{i}$ contains the same number (namely, $|C_{i}|$) of roots of $p$ and $\hp$. If this holds for all disks $\Delta_{i}$, then each of the disks must contain exactly one root since $p$ has $k$ distinct roots. In addition, the multiplicity of each root equals the number $|C_{i}|$ of approximations within $\Delta_{i}$.

It remains to show the third claim. Since $b \ge b_0$, it follows that $\min(1/(2n^2),\sigma_{i}/(512n^{2}))\le r_i\le \min(1/n^2,\sigma_{i}/(64n^{2}))$ and $|\tz_{i}-z_{i}|<r_{i}$; cf. the remark following the definition of $r_{i}$ in (\ref{def:ri}). Thus,
\begin{align*}
|\hp(z_i^*)| &\ge |p(z_i^*)|  - 2^{-b}\onenorm{p} \cdot M(z_i^*)^n\\
&= |p(z_i + (\tz_i  - z_i +nr_i))|  - 2^{-b}\onenorm{p}\cdot M(z_i^*)^n \\
&\ge ((n-1)r_i)^{m_i}|p_nP_i|/4 - 4\cdot 2^{-b}\onenorm{p} M(z_i)^n \\
&\ge \left(\frac{(n-1)\min(256,\sigma_i)}{512n^2}\right)^{m_i}\cdot \frac{|p_nP_i|}{4} -
4\cdot 2^{-b}\onenorm{p} M(z_i)^n\\
&\ge \left(\frac{\min(256,\sigma_i)}{1024n}\right)^{m_i}\cdot \frac{|p_nP_i|}{4} -
4\cdot 2^{-b}\onenorm{p}M(z_i)^n,
\end{align*}
where the first inequality is due to $|(p-\hp)(x)|<2^{-b}\onenorm{p}\cdot M(x)^n$, the second inequality follows
from $\abs{\tz_i  - z_i +nr_i} \le (n+1)r_i \le \sigma_i/n$, Lemma \ref{lem:value by dist} and $M(z_i^*)<(1+1/n)\cdot M(z_i)$, and the third inequality follows from $r_i \ge
\min(\frac{1}{2n^{2}},\frac{\sigma_i}{512 n^2})$. In addition, we have
\begin{equation}
\label{eq:useful} 2^{-b} \onenorm{p} M(z_i)^n \le
\left(\frac{\min(256,\sigma_i)}{1024n}\right)^{m_i}\cdot
\frac{|p_nP_i|}{4096},\end{equation}
since
\begin{align*}
2^{-b} \onenorm{p} M(z_i)^n
&\le 2^{-b/8} \cdot 2^{-b/2} \cdot
2^{\tau_p} \abs{p_n}\cdot (n+1) \cdot  M(z_i)^n\\
&\le \frac{\abs{P_i}}{(n+1)
  2^{2n\Gamma_p+8n}}\left(\frac{\min(256,\sigma_i)}{1024 n}\right)^{m_i}
2^{\tau_p} \abs{p_n} (n+1) M(z_i)^n\\
&\le \left(\frac{\min(256,\sigma_i)}{1024n}\right)^{m_i}\cdot
\frac{|p_nP_i|}{2^{7n - 1}} \le \left(\frac{\min(256,\sigma_i)}{1024n}\right)^{m_i}\cdot \frac{|p_nP_i|}{4096},
\end{align*}
where the second inequality follows from (\ref{property 2}), (\ref{property
  1}), and (\ref{greater2mi})~\footnote{Observe $2^{-b/(2m_i)} \le
  \min(\frac{1}{2n^2},\frac{\sigma_i}{1024n^2}) \le \frac{\min(256,\sigma_i)}{1024 n}$.}, and the
third inequality follows from $\tau_p \le n \Gamma_p + n + 1$ (Lemma~\ref{lem:tauTau})
and $M(z_i)^n \le 2^{n\Gamma_p}$. Finally,
\begin{align*}
\frac{|\hp(z_i^*)|}{|p_n|}> \left(\frac{\min(256,\sigma_i)}{1024n}\right)^{m_i}\cdot \frac{|P_i|}{8}\ge
512 \cdot 2^{-b}\frac{\onenorm{p}}{|p_n|}M(z_i)^n\ge 64\cdot 2^{-b}\lambda M(\tz_i)^n,
\end{align*}
where the first and the second inequality follow from (\ref{eq:useful}) and the
third inequality holds since $\lambda$ is a 2-approximation of
$\onenorm{p}/\abs{p_n}$ and $\abs{z_i}^n \le (1 + 1/n)^n \abs{\tz_i}^n \le 4\abs{\tz_i}^n$. Since the values $\sigma_i$, $m_i$, and $P_i$ do not depend on the choice of $b$, and the above inequality holds for any $b\ge b_0$, it follows that $512 \cdot 2^{-b}\frac{\onenorm{p}}{|p_n|}M(z_i)^n\ge 64\cdot 2^{-b_0}\lambda M(\tz_i)^n$.
\end{proof}

\begin{lemma}\label{def:certcomplexity}
There exists a $b^{*}$ upper bounded by
\[
O\left(n\log n+n\Gamma_{p}+\sum\nolimits_{i=1}^{k}\left(\log M(P_{i}^{-1})+m_{i}\log M(\sigma_{i}^{-1})\right)\right)
\]
such that the certification step succeeds for any $b>b^{*}$. The total cost in the certification algorithm (i.e.~for all iterations until we eventually succeed) is 
bounded by 
$$\tilde{O}\left(n^{3}+n^{2}\tau_p+n\cdot\sum\nolimits_{i=1}^k \left(\log M(P_{i}^{-1})+ m_{i}\log M(\sigma_{i}^{-1})\right)\right)$$
bit operations.
\end{lemma}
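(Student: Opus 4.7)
The plan is to combine the exact lower bound on $|\hp(z_i^*)/p_n|$ supplied by Lemma~\ref{lem:value on D+}(3) with an estimate of the precision $\rho_i$ needed for the interval evaluation to witness inequality~(\ref{eq:cert}), and then to read off the total running time from Lemma~\ref{lem:costevaluations}.

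For $b\ge b_{0}$, Lemma~\ref{lem:value on D+}(3) already gives
\[ \frac{|\hp(z_i^*)|}{|p_n|} \;\ge\; \Bigl(\tfrac{\min(256,\sigma_i)}{1024n}\Bigr)^{m_i}\cdot\tfrac{|P_i|}{8} \;\ge\; 2E_i,\]
so the gap $|\hp(z_i^*)/p_n|-E_i$ is at least $(\min(256,\sigma_i)/(1024n))^{m_i}|P_i|/16$. The algorithm certifies (\ref{eq:cert}) as soon as the interval-evaluation width $2^{-\rho_i}$ drops below this gap; hence it suffices to take $\rho_i = O(m_i\log n + m_i\log M(\sigma_i^{-1}) + \log M(P_i^{-1}))$. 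Summing over $i$ and using $\sum_i m_i=n$ yields
\[ \sum_{i=1}^{k}\rho_i \;=\; O\!\left(n\log n + \sum_{i=1}^{k}\bigl(m_i\log M(\sigma_i^{-1}) + \log M(P_i^{-1})\bigr)\right).\]
Defining $b^{*}$ as a constant times the maximum of $b_{0}$ and the right-hand side, the two safeguards $\rho_i\le b$ and $\sum_i\rho_i\le b$ are both satisfied whenever $b>b^{*}$, so the certification step succeeds. Substituting the formula for $b_{0}$ recalled at the end of Section~\ref{sec:algorithm} produces the stated bound on $b^{*}$.

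For the total cost, $b$ is doubled after every failed certification, so by geometric summation the cumulative work over all iterations equals, up to a constant factor, the work of the final iteration with $b=O(b^{*})$. Lemma~\ref{lem:costevaluations} bounds that work by $\tilde{O}(nb^{*}+n^{2}\tau_p+n^{3})$. Expanding $nb^{*}$ via the bound on $b^{*}$ and invoking $n^{2}\Gamma_p=O(n^{2}+n^{2}\tau_p)$ from Lemma~\ref{lem:tauTau} absorbs the $n^{2}\Gamma_p$ contribution into $n^{2}\tau_p+n^{3}$, yielding the claimed total bit-complexity bound.

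The main obstacle is to avoid the naive precision estimate $\rho_i=\Theta(b)$ that would follow from the coarse factor-of-two gap $|\hp(z_i^*)/p_n|\ge 2E_i$: that estimate would push $\sum_i\rho_i$ up to $\Theta(nb)$ and violate the aggregate budget enforced in Step~4 of the certification. The key is that Lemma~\ref{lem:value on D+}(3) supplies a lower bound on $|\hp(z_i^*)/p_n|$ whose right-hand side is independent of $b$, so that once $b$ is sufficiently large the required $\rho_i$ depend only on the \emph{intrinsic} data $m_i$, $\sigma_i$, $|P_i|$.
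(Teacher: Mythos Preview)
Your argument is correct and follows essentially the same route as the paper's proof: invoke Lemma~\ref{lem:value on D+}(3) to get a $b$-independent lower bound on $|\hp(z_i^*)/p_n|$, deduce that a precision $\rho_i = O(m_i\log n + m_i\log M(\sigma_i^{-1}) + \log M(P_i^{-1}))$ suffices, sum over $i$ to obtain $b^*$, and then feed this into the per-iteration cost from Lemma~\ref{lem:costevaluations} together with the doubling of $b$. The only cosmetic difference is that the paper phrases the precision requirement as ``error less than $|\hp(z_i^*)/(2p_n)|$'' rather than your explicit gap $|\hp(z_i^*)/p_n|-E_i$; both yield the same $\rho_i$ bound.
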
		

\begin{proof}
Let $b\ge b_0$. Then, due to Lemma~\ref{lem:value on D+},
$$|\hp(z_i^*)/p_{n}|>\left(\frac{\min(256,\sigma_i)}{1024n}\right)^{m_i}\cdot
\frac{|P_i|}{8}\ge 64\cdot 2^{-b_{0}}\lambda M(\tz_i)^n\ge 128\cdot 2^{-b}\lambda M(\tz_i)^n$$ Thus, in order to verify
inequality (\ref{eq:cert}), it suffices to evaluate $|\hp(z_i^*)/p_{n}|$ to an
error of less than $|\hp(z_i^*)/2p_{n}|$. It follows that we
succeed for some $\rho_{i}$ with $$\rho_{i}=O(m_{i}\log n
+m_{i}\max(1,\log\sigma_{i}^{-1})+\log \max(1,|P_{i}|^{-1})).$$  
In Step 3 of the certification algorithm, we require that the sum over all $\rho_{i}$ does not exceed $b$. Hence, we eventually succeed in verifying the inequality (\ref{eq:cert}) for all $i$ if $b$ is larger than some $b^{*}$ with
\begin{align*}
b^{*}&=O(b_{0}+\sum\nolimits_{i} m_{i}\log n+\sum\nolimits_{i}(\log M(P_{i}^{-1})+m_{i}\log M(\sigma_{i}^{-1})))\\
&=O(n\log n+n\Gamma_{p}+\sum\nolimits_{i}(\log M(P_{i}^{-1})+m_{i}\log M(\sigma_{i}^{-1}))).
\end{align*}
For the bound for the overall cost, we remark that, for each $b$, the certification algorithm needs $\tO(n^{3}+nb+n^2\tau_p)$
bit operations due to Lemma~\ref{lem:costevaluations}. Thus, the above bound follows from the fact that that we double $b$ in each step and that the certification algorithm succeeds under guarantee for all $b>b^*$.
\end{proof}

\ignore{
\michael{I did the rewriting until this point. The complexity is now straight
  forward. The final bound is the same as in the Lemma above. Notice that the
  term $n^2\tau_p$ can also be replaced by $n^3+n^2\Mea/|p_n|$ because
  $2^{\tau_p}<2^n\Mea(p)/|p_n|$. This is particularly nice because, then, the
  bound only depends on $n$ and the geometry of the roots. You will notice that
  we could also replace $n^2\tau_p$ by $n^3\Gamma_p$, however, this weakens the
  bound for integer polynomials. The reason is that $\tau_p$ is strongly
  overestimated by $n\Gamma_p$ if $p$ has integer coefficients. I am convinced
  that the $n^{3}$ term should not be there, even though it does not weaken the
  total degree. However, so far, I have no idea how to avoid it in the proof of
  Lemma~\ref{lem:costevaluations}.}}



\subsection{Complexity of Root Isolation}
	We now turn to the complexity analysis of the root isolation algorithm. In the first step, we provide a bound for general polynomials $p$ with real coefficients. In the second step, we give a simplified bound for the special case, where $p$ has integer coefficients. We also give bounds for the number of bit operations that is needed to refine the isolating disks to a size less than $2^{-\kappa}$, with $\kappa$ an arbitrary positive integer.
	
	\begin{theorem}\label{thm:isolation complexity}
		Let $p(x)\in\C[x]$ be a polynomial as defined in Section~\ref{sec:setting}. We assume that the number $k$ of distinct roots of $p$ is given. 
		Then, for all $i=1,\ldots,k$, the algorithm from
                Section~\ref{sec:algorithm} returns an isolating disk
                $\Delta(\tz_{i},R_{i})$ for the root $z_{i}$ together with the
                corresponding multiplicity $m_{i}$, and $R_{i} < \frac{\sigma_{i}}{64n}$. 
		
		For that, it uses a number of bit operations bounded by
		\begin{align}\label{complexityresult}
		\tilde{O}\left(n^{3}+n^{2}\tau_p+n\cdot\sum\nolimits_{i=1}^k \left(\log M(P_{i}^{-1})+ m_{i}\log M(\sigma_{i}^{-1})\right)\right)
		\end{align}
	The algorithm needs an absolute $L$-approximation of $p$, with $L$ bounded by
	\begin{align}\label{apxquality}
	\tO\left(n\Gamma_p+\sum\nolimits_{i=1}^k \left(\log M(P_{i}^{-1})+ m_{i}\log M(\sigma_{i}^{-1})\right)\right).
	\end{align} 
				
	\end{theorem}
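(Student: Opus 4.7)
The plan is to combine correctness (from the clustering and certification lemmas) with a straightforward summing of per-iteration costs over the geometric sequence of precisions $b$ that the outer loop uses. The main work has already been done in the preceding lemmas; what remains is careful bookkeeping.

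For correctness, I would first observe that once $b\ge b^{*}$ (with $b^{*}$ as in Lemma~\ref{def:certcomplexity}), we have $b\ge b_{0}$ as well, so Lemma~\ref{lem:clustering correctness} guarantees that the clustering succeeds with $C_{i}=Z_{i}$, and the seeds $\tz_{i}$ satisfy $|\tz_{i}-z_{i}|<r_{i}\le\min(1/n^{2},\sigma_{i}/(64n^{2}))$ by the remark following (\ref{def:ri}). Lemma~\ref{lem:value on D+}(3) then ensures that inequality (\ref{eq:cert}) holds, and Lemma~\ref{lem:value on D+}(2) then shows that the certification step verifies that $\Delta_{i}$ isolates $z_{i}$ with multiplicity $|C_{i}|=m_{i}$. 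Since the algorithm doubles $b$ whenever it fails, after at most $O(\log b^{*})$ iterations it succeeds with some $b_{\text{fin}}\le 2b^{*}$. For the radius bound, the cluster-radius definition (\ref{def:ri}) gives $r_{i}\le\tilde{\sigma}_{i}/(128n^{2})$ (using the $2$-approximation), and Lemma~\ref{lem:clustering correctness} gives $\tilde{\sigma}_{i}\le (1+1/n)\sigma_{i}$, so $R_{i}=nr_{i}\le (1+1/n)\sigma_{i}/(128n)<\sigma_{i}/(64n)$.

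For the complexity, the preparatory Steps~1 and~\ref{def:lambda} cost $\tilde{O}(n^{2}\Gamma_{p})$ each by Theorem~\ref{thm:rootbound} and Lemma~\ref{lem:basic}. For each trial precision $b$, the loop cost breaks down as follows: the scaled factorization (Corollary~\ref{thm:pansresult}) costs $\tilde{O}(n^{2}\Gamma+nb)$, the clustering (Lemma~\ref{lem:clustering complexity}) costs $\tilde{O}(nb+n^{2}\Gamma_{p})$, and the certification evaluations (Lemma~\ref{lem:costevaluations}) cost $\tilde{O}(n^{3}+nb+n^{2}\tau_{p})$. Because $b$ is doubled after each failure, these per-iteration costs form a geometric series dominated by the final iteration with $b_{\text{fin}}=O(b^{*})$. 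Plugging in the bound on $b^{*}$ from Lemma~\ref{def:certcomplexity} and using $\Gamma\le\Gamma_{p}+O(\log n)$ together with $n^{2}\Gamma_{p}=O(n^{2}\tau_{p}+n^{2})$ from Lemma~\ref{lem:tauTau}, the $n^{2}\Gamma_{p}$ contribution is absorbed into $n^{3}+n^{2}\tau_{p}$, and $nb^{*}$ becomes the double-sum term, producing the claimed bound (\ref{complexityresult}).

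For the approximation precision, Corollary~\ref{thm:pansresult} shows that the factorization at precision $b$ reads an approximation of precision $L=O(n\Gamma+b)=O(n\Gamma_{p}+b)$ of $p$ (the extra $\log n$ in $\Gamma$ is absorbed into $\tilde{O}$, or into $n\Gamma_{p}$). Substituting the final value $b_{\text{fin}}=O(b^{*})$ and using the explicit form of $b^{*}$ from Lemma~\ref{def:certcomplexity} yields the bound (\ref{apxquality}), noting that the $n\log n$ summand in $b^{*}$ is absorbed into $n\Gamma_{p}$ since $\Gamma_{p}\ge 1$ and we are bounding things up to poly-logarithmic factors in $n$.

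The main obstacle is not any single inequality but the aggregation: verifying that the geometric blow-up of $b$ does not spoil the adaptivity of the final bound, and checking that each of the several auxiliary cost terms ($n^{2}\Gamma_{p}$ from scaling, $n^{3}$ from certification, $n^{2}\tau_{p}$ from coefficient sizes, $nb^{*}$ from the successful iteration) is either explicitly present in (\ref{complexityresult}) or absorbed by the listed terms via the inequalities of Lemma~\ref{lem:tauTau}. Once this accounting is done, the three claims of the theorem follow directly from the previously established lemmas.
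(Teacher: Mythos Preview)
Your complexity bookkeeping is essentially the same as the paper's and is fine. However, there is a genuine gap in your argument for the radius bound $R_i<\sigma_i/(64n)$.

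You invoke Lemma~\ref{lem:clustering correctness} to get $\tilde{\sigma}_i\le(1+1/n)\sigma_i$. That lemma, however, is stated only under the hypothesis $b\ge b_0$. The point is that the algorithm may terminate successfully at some $b<b_0$: clustering might happen to produce $k$ clusters, and certification might happen to pass, even before the precision reaches $b_0$. In that situation you have no right to use the estimate $\tilde{\sigma}_i\approx\sigma_i$ from Lemma~\ref{lem:clustering correctness}, so your deduction of the radius bound breaks down exactly in the case that matters (the actual output of the algorithm). The paper's own proof avoids this trap: it argues only from what certification guarantees, namely that $z_i\in\Delta_i$ and hence $|z_i-\tz_i|\le R_i\le\tilde{\sigma}_i/(128n)\le|\tz_i-\tz_j|/(128n)$. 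A triangle inequality then yields $|z_i-z_j|>|\tz_i-\tz_j|(1-1/(64n))>|\tz_i-\tz_j|/2$, so $\sigma_i>\tilde{\sigma}_i/2\ge 64nR_i$. This works for whatever $b$ the algorithm happens to terminate at.

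A minor second point: in your bound on the approximation precision you absorb the $n\log n$ term of $b^*$ into $n\Gamma_p$ ``up to poly-logarithmic factors,'' but (\ref{apxquality}) is stated with $O(\cdot)$, not $\tilde{O}(\cdot)$, and $\Gamma_p\ge 1$ does not give $n\log n=O(n\Gamma_p)$. This is a cosmetic issue (the paper is itself somewhat loose here), but you should be aware of it.
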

	\begin{proof}
		For a fixed $b$, let us consider the cost for each of the steps in the algorithm:		\begin{itemize}
		\item Steps 1-3, 5 and 6 do not use more than $\tilde{O}(n^{2}\Gamma_{p}+nb)$ bit operations,
		\item Step 4 and 7 do not use more than $\tilde{O}(n^{2}\Gamma_{p}+nb)$ bit operations (Corollary~\ref{thm:pansresult} and Lemma~\ref{lem:clustering complexity}), and
		\item Step 8 and 9 use a number of bit operations bounded by (\ref{complexityresult})
		(Lemma~\ref{def:certcomplexity}).
		\end{itemize}

In addition, for a fixed $b$, the oracle must provide an absolute $L$-approximation of
$p$, with $L=\tO(n\Gamma_{p}+b)$, in order to compute the bound $\Gamma$ for $\Gamma_{p}$, to compute the $2$-approximation $\lambda$ of $\onenorm{p}/|p_{n}|$, and to run Pan's algorithm. 
The algorithm succeeds in computing isolating disks if $b>b^{*}$ with a $b^{*}$ as in Lemma~\ref{def:certcomplexity}. Since we double $b$ in each step, we need at most $\lceil\log b^{*}\rceil$ iterations and the total cost for each iteration is bounded by (\ref{complexityresult}). This shows the complexity result.

It remains to prove the bound for $R_{i}$. When the clustering succeeds, it returns disks $D_{i}=\Delta(\tz_{i},r_{i})$ with $\min(\frac{1}{2n^{2}},\frac{\tilde{\sigma}_{i}}{256n^{2}})\le r_{i}\le \min(\frac{1}{n^2},\frac{\tilde{\sigma}_{i}}{128n^{2}})$ for all $i=1,\ldots,m$. It follows that $R_{i}=n\cdot r_{i}\le \frac{\tilde{\sigma}_{i}}{128n}$, and thus $|z_{i}-z_{j}|\ge
|\tz_{i}-\tz_{j}|-|z_{i}-\tz_{i}|-|z_{j}-\tz_{j}|>|\tz_{i}-\tz_{j}|\cdot
(1-1/(64n))>|\tz_{i}-\tz_{j}|/2$ for all $i,j$ with $i\neq j$. We conclude that
$\sigma_{i} > \tsigma_i/2 \ge 64n R_{i}$.
\ignore{\michael{You cannot use Lemma 4 since it only applies for $b\ge b_{0}$. However,when the certification step succeeds, this is not guaranteed!} }
\end{proof}

We remark that the bound (\ref{complexityresult}) can also be reformulated in
terms of values that exclusively depend on the degree $n$ and the geometry of
the roots (i.e.~their absolute values and their distances to each
other). Namely, according to Lemma~\ref{lem:tauTau}, we have $\tau_{p}\le
n+1+\log\frac{\Mea(p)}{|p_{n}|}$, and the latter expression only involves the
degree and the absolute values of the roots of $p$. This yields the bound (\ref{intro:result1})
from the introduction. 

In the next step, we show that combining our algorithm with Pan's factorization algorithm also yields a very efficient method to further refine the isolating disks.

\begin{theorem}\label{thm:refine complexity}
Let $p(x)$ be a polynomial as in Theorem~\ref{thm:isolation complexity}, and $\kappa$ be a given positive integer. We can compute isolating disks $\Delta_{i}(\tilde{z}_{i},R_{i})$ with radius $R_{i}<2^{-\kappa}$ in a number of bit operations bounded by
\begin{align}\label{complexityinteger}
\tilde{O}\left(n^{3}+n^{2}\tau_p+n\cdot\sum\nolimits_{i=1}^k \left(\log M(P_{i}^{-1})+ m_{i}\log M(\sigma_{i}^{-1})\right)+n\kappa\cdot\max_{1\le i\le k}m_{i}\right).
\end{align}
For that, we need an absolute $L$-approximation of $p$ with $L$ bounded by
$$\tO\left(n\Gamma_p+\sum\nolimits_{i=1}^k \left(\log M(P_{i}^{-1})+ m_{i}\log M(\sigma_{i}^{-1})\right)+n\kappa\cdot\max_{1\le i\le k}m_{i}\right).$$ 
\end{theorem}

\begin{proof}
	
As a first step, we use the algorithm from Section~\ref{sec:algorithm} to compute isolating disks 
$\Delta_i=\Delta(\tz_{i},R_{i})$ with $R_{i}\le \sigma_{i}/(64n)$. Each disk $\Delta_{i}$ contains the root $z_{i}$, $m_{i}=\operatorname{mult}(z_{i},p)$ approximations $\hz\in\{\hz_{1},\ldots,\hz_{n}\}$ of $z_{i}$, and it holds that $\sigma_{i}/2<\tilde{\sigma}_{i}<2\sigma_{i}$.
Let
$$\hat
{P_i}:=\prod\nolimits_{j:\hz_{j}\notin \Delta_{i}} (\tz_i-\hz_j).$$ 
We claim that $1/2 |P_{i}|<|\hat{P}_{i}|< 2|P_{i}|.$
Since 
$|\tz_i-z_i|<\sigma_i/(64n)$ for all $i$, it holds that 
$(1-\frac{1}{64n})|z_i-z_{j}|\le |\tz_i-\hz|\le (1+\frac{1}{64n})|z_i-z_j|$ for all $j\neq i$ and $\hz\in\Delta_{j}$.
Thus, $|\hat{P}_{i}|$ is a $2$-approximation of $|P_{i}|$. Similar as in the certification step, we now use approximate interval arithmetic to compute a $2$-approximation $\mu_{i}$ of $|\hat{P}_i|$, and thus a $4$-approximation of $|P_{i}|$. A completely similar argument as in the proofs of Lemma~\ref{lem:costevaluations} and Lemma~\ref{def:certcomplexity} then shows that we can compute such $\mu_{i}$'s with less than $\tilde{O}(n^{3}+n^{2}\tau_{p}+n\sum_{i}\log M(P_{i}^{-1}))$ bit operations. 
Now, from the $2$- and $4$-approximations of $\sigma_{i}$ and $|P_{i}|$, we can determine a $b_{\kappa}$ such that
\begin{itemize}
\item the properties (\ref{greaternlogn}) to (\ref{greaterr}) are fulfilled, and 
\item $2^{-b/(2m_{i})}<2^{-\kappa}$.
\end{itemize}
Then, from Corollary~\ref{thm:pansresult} and Lemma~\ref{lem:approximation quality}, we conclude that Pan's factorization algorithm (if run with $b\ge b_{\kappa}$) returns, for all $i$, $m_{i}$ approximations $\hat{z}$ of $z_{i}$ with $|\hat{z}-z_{i}|<2^{-b/(2m_{i})}<2^{-\kappa}$. Thus, for each $i$, we can simply choose an arbitrary approximation $\hat{z}\in\Delta_{i}$ and return the disk $\Delta(\hat{z},2^{-\kappa})$ which isolates $z_{i}$. The total cost splits into the cost for the initial root isolation and the cost for running Pan's Algorithm with $b=b_{\kappa}$. Since the latter cost is bounded by $\tilde{O}(n b_{\kappa}+n^{2}\Gamma_{p})$, the bound (\ref{complexityinteger}) follows.
	\end{proof}

Finally, we apply the above results to the important special case, where we aim to isolate the roots of a polynomial with integer coefficients.

\begin{theorem}\label{thm:square free complexity}
Let $p(x)\in\Z[x]$ be a polynomial of degree $n$ with integer coefficients of size less than $2^{\tau}$. Then, we can compute isolating disks $\Delta(\tz_{i},R_{i})$, with $R_{i} < \frac{\sigma_{i}}{64n}$, for all roots $z_{i}$ together with the corresponding multiplicities $m_{i}$ using
\begin{align}\label{complexityintegerA}
\tilde{O}(n^{3}+n^{2}\tau)
\end{align}
bit operations.
For a given positive integer $\kappa$, we can further refine the disks $\Delta_{i}$ to a size of less than $2^{-\kappa}$ with a number of bit operations bounded by
\begin{align}\label{complexityintegerB}
\tilde{O}(n^{3}+n^{2}\tau+n\kappa).
\end{align}
\end{theorem}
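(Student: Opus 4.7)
The plan is to specialize Theorems~\ref{thm:isolation complexity} and~\ref{thm:refine complexity} to the integer case, bounding each parameter appearing there in terms of $n$ and $\tau$. Since $p\in\Z[x]$ has $|p_n|\ge 1$ and every $|p_i|\le 2^{\tau}$, I would first observe that $\tau_p\le\tau$ and, via Cauchy's bound combined with Lemma~\ref{lem:tauTau}, $\Gamma_p\le\tau+1$; hence both $n^{2}\tau_p$ and $n^{2}\Gamma_p$ are $\tO(n^{2}\tau)$.

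The technical core is to establish the geometry-dependent estimate
\[
\sum_{i=1}^{k}\log M(P_i^{-1})+\sum_{i=1}^{k}m_i\log M(\sigma_i^{-1})=\tO(n^{2}+n\tau),
\]
so that substituting into (\ref{complexityresult}) yields (\ref{complexityintegerA}). To prove it, I would pass to the primitive squarefree part $p^{*}\in\Z[x]$ of $p$; by Mignotte's factor-coefficient bound applied to $p^{*}\mid p$, the polynomial $p^{*}$ has degree $k\le n$, bitsize $\tau^{*}=O(n+\tau)$, and the same distinct roots $z_1,\dots,z_k$, now each of multiplicity one. Because $|\mathrm{disc}(p^{*})|\ge 1$, the Davenport--Mahler inequality applied to $p^{*}$ yields $\sum_i\log M(\sigma_i^{-1})=O(k\tau^{*}+k\log k)=\tO(n^{2}+n\tau)$. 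The multiplicity-weighted sum and the $P_i$ term would then be controlled through the identities $m_i!\,p_nP_i=p^{(m_i)}(z_i)$ and $\prod_iP_i=\pm\prod_{i<j}(z_i-z_j)^{m_i+m_j}$: each derivative $p^{(m_i)}\in\Z[x]$ has bitsize $O(n\log n+\tau)$, and a Liouville-style lower bound on $|p^{(m_i)}(z_i)|$ combined with the product identity above (using $\sum_{i<j}(m_i+m_j)=(k-1)n$ to match combined exponents against $\mathrm{disc}(p^{*})$) produces the desired estimate.

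For the refinement bound (\ref{complexityintegerB}), a direct application of Theorem~\ref{thm:refine complexity} contributes an extra $\tO(n\kappa\,\max_im_i)$ term, which can be as large as $\tO(n^{2}\kappa)$ and is thus too expensive. The remedy is to refine using the squarefree part rather than $p$ itself: first run the isolation algorithm on $p$ (cost $\tO(n^{3}+n^{2}\tau)$, which simultaneously yields the multiplicities $m_i$); then compute $p^{*}\in\Z[x]$ in $\tO(n^{2}\tau)$ bit operations via fast GCD; finally invoke Theorem~\ref{thm:refine complexity} on $p^{*}$. Since every multiplicity in $p^{*}$ equals one and $\tau_{p^{*}}=O(n+\tau)$, its refinement costs $\tO(k^{3}+k^{2}\tau^{*}+k\kappa)=\tO(n^{3}+n^{2}\tau+n\kappa)$; attaching the already computed multiplicities of $p$ to the refined disks gives (\ref{complexityintegerB}).

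I expect the principal obstacle to be the aggregate estimate on $\sum_i\log M(P_i^{-1})+\sum_im_i\log M(\sigma_i^{-1})$: the unweighted Davenport--Mahler bound is classical, but handling multiplicities requires careful resultant-/Liouville-type bookkeeping so that crude per-root bounds do not inflate the total by a spurious factor of $n$. Everything else is essentially routine specialization of the already proved theorems.
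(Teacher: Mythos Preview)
Your overall strategy matches the paper's: specialize the general bounds by estimating $\tau_p$, $\sum_i m_i\log M(\sigma_i^{-1})$, and $\sum_i\log M(P_i^{-1})$ in terms of $n,\tau$, and for refinement switch to the squarefree part $p^{*}$. The refinement argument you give is exactly what the paper does (one minor point: you must compute $p^{*}$ \emph{before} running the isolation algorithm on $p$, since the algorithm needs $k=\deg p^{*}$ as input).

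Where your sketch is thin is precisely where the paper does the real work. Your proposed route to $\sum_i\log M(P_i^{-1})$ via the identity $\prod_iP_i=\pm\prod_{i<j}(z_i-z_j)^{m_i+m_j}$ and $\mathrm{disc}(p^{*})$ alone does not close: the discriminant controls $\prod_{i<j}(z_i-z_j)^{2}$, and the leftover factor $\prod_{i<j}(z_i-z_j)^{m_i+m_j-2}$ has nonnegative exponents but can be arbitrarily small; knowing only the total exponent $(k-1)n$ is not enough because the distribution over pairs matters. The paper's missing ingredient is the squarefree factorization $p=\prod_l Q_l^{\,l}$ with $Q_l\in\Z[x]$: grouping roots by multiplicity, one rewrites $\prod_{z_i\in V(Q_l)}|P_i|$ as a ratio whose numerator is a product of \emph{integer} resultants $\mathrm{res}(p/Q_l^{\,l},Q_l)$ and $\mathrm{res}(Q_l,Q_l')$, each nonzero and hence $\ge 1$. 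This yields $\prod_i|P_i|\ge 2^{-O(n\tau+n\log n)}$, and together with the easy upper bound $|P_i|\le n\,2^{\tau+1}M(z_i)^{n}$ gives $\sum_i\log M(P_i^{-1})=\tO(n\tau)$. Your Liouville idea on $p^{(m_i)}(z_i)$ can in fact be made to work, but only after the same grouping: one needs $\prod_{z_i\in V(Q_l)}|p^{(l)}(z_i)|=|\mathrm{res}(Q_l,p^{(l)})|/|\lcf(Q_l)|^{n-l}\ge |\lcf(Q_l)|^{-(n-l)}$, which again rests on the $Q_l$'s being integer polynomials coprime to $p^{(l)}$.

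For the multiplicity-weighted separation sum, the paper does not prove it here but cites \cite[Theorem~2]{es-bisolvecomplexity-11} for $\sum_i m_i\log M(\sigma_i^{-1})=\tO(n^2+n\tau)$; your unweighted Davenport--Mahler argument would need a comparable amortized extension.
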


	\begin{proof}
	In a first step, we compute the square-free part $p^{*}=p/\gcd(p,p')$ of $p$. 
	According to~\cite[\S 11.2]{Gathen-Gerhard:book}, we need $\tilde{O}(n^{2}\tau)$ bit 
	operations for this step, and $p^{*}$ has integer coefficients of bitsize $O(n+\tau)$. The degree of 
	$p^{*}$ yields the number $k$ of distinct roots of $p$. In order to use our root isolation algorithm from Section~\ref{sec:algorithm}, we divide $p$ by its leading coefficients $p_n$ to meet the requirement that the leading coefficient has absolute value in $[1/4,1]$. Obviously, the roots are not affected by this normalization step.	
	
	 Now, in order to derive the bound in (\ref{complexityintegerA}), we have to reformulate the bound from (\ref{complexityresult}) in terms of the 
	degree $n$ and the bitsize $\tau$ of $p$. We first use~\cite[Theorem 2]{es-bisolvecomplexity-11} to show that $\sum_{i=1}^k m_{i}\log\max(1,\sigma_{i}^{-1})=\tilde{O}(n^{2}+n\tau)$. Furthermore, we 
	have $\tau_{p}\le \tau$. Hence, it remains to show that 
	$n\cdot\sum_{i=1}^k \log M(P_{i}^{-1})=\tilde{O}(n^{3}+n^{2}\tau)$. For that, we consider a square-free factorization $
	p(x)=\prod_{l=1}^{n} (Q_{l}(x))^{l}
	$
	with square-free polynomials $Q_{l}\in\Z[x]$ such that $Q_{l}$ and $p/Q_{l}^{l}$ are coprime 
	for all $l=1,\ldots,n$. Note that the roots of 
	$Q_{l}$ are exactly the roots of $p$ with multiplicity $l$, and that $Q_{l}$ is a constant for most $l$. We further denote $\bar{p}:=p/\lc(p)$ 
	and $\bar{Q}_{l}:=Q_{l}/\lc(Q_{l})$. Let $S_{l}$ denote the set of
        roots of $Q_{l}$. Then, from the definition of $P_{i}$,
	\begin{align*}
	\prod_{i:z_{i}\in S_{l}} \vert{P_{i}|}&=\prod_{i:z_{i}\in S_{l}}\prod_{j\neq i} |z_{i}-z_{j}|^{m_{j}}\\
	&=\prod_{i:z_{i}\in S_{l}}\prod_{j\neq i : z_{j}\notin S_{l}}|z_{i}-z_{j}|^{m_{j}}\cdot 	\prod_{i:z_{i}\in S_{l}}\prod_{j\neq i : z_{j}\in S_{l}}|z_{i}-z_{j}|^{l}\\
        &=\prod_{i\in S_{l}} |(\bar{p}/\bar{Q}_{l}^{l})(z_{i})|\cdot \prod_{i:z_{i}\in S_{l}} |(\bar{Q}_{l})'(z_{i})|^{l}\\
        &=|\operatorname{res}(\bar{p}/\bar{Q}_{l}^{l},\bar{Q}_{l})|\cdot |\operatorname{res}(\bar{Q}_{l},(\bar{Q}_{l})')|^{l}\\
        &=\frac{|\operatorname{res}(p/Q_{l}^{l},Q_{l})|}{|\lc(Q_{l})^{n-l\cdot\deg Q_{l}}(\lc(p/Q_{l}^{l}))^{\deg Q_{l}}|}\cdot \left\vert\frac{\operatorname{res}(Q_{l},Q_{l}')}{\lc(Q_{l})^{2\deg Q_{l}-1}(\deg Q_{l})^{\deg Q_{l}}}\right\vert^{l}\\
        &\ge \frac{1}{|\lc(Q_{l})|^{n-l}\cdot|\lc(p)|^{\deg Q_{l}}\cdot n^{l\deg Q_{l}}}
        	\end{align*}
where $\operatorname{res}(f,g)$ denotes the
resultant\footnote{For univariate polynomials 
\begin{align*}
\operatorname{res}(f,g) &= \lc(f)^{\deg{g}} \lc(g)^{\deg{f}}
  \prod_{(x,y): f(x) = g(y) = 0} (x - y) = \lc(f)^{\deg{g}} 
  \prod_{x: f(x) = 0} g(x).\end{align*}} of two polynomials $f$ and $g$. For the last inequality, we used that $\operatorname{res}(p/Q_{l}^{l},Q_{l})\in\Z$ and $\operatorname{res}(Q_{l},Q_{l}')\in\Z$. Taking the product over all $l$ yields
\begin{align*}
\prod_{i=1}^k|P_{i}|&\ge \left\vert\frac{1}{\prod_{l=1}^{n}(\lc(Q_{l})^{n-l}\cdot\lc(p)^{\deg Q_{l}}\cdot n^{l\deg Q_{l}})}\right\vert\ge\frac{1}{|\lc(p)|^{2n}\cdot n^{n}}\ge 2^{-2n\tau-n\log n}.
\end{align*}
Note that, for any $i$, we also have 
$$|P_{i}|=\frac{|p^{(m_{i})}(z_{i})|}{m_{i}!p_{n}}<\frac{m_{i}! 2^{\tau}(n+1) M(z_{i})^{n}}{m_{i}! |p_{n}|}\le n 2^{\tau+1}M(z_{i})^{n},$$
and, thus,
\[
\sum_{i=1}^k\log M(P_{i}^{-1})=\tilde{O}(n\tau+n\sum_{i=1}^k \log M(z_{i}))=\tilde{O}(n\tau),
\]
where we used that $\sum_{i}\log M(z_{i}))\le \log \operatorname{Mea}(p)\le \log\onenorm{p}<\log (n+1)+\tau$. This shows (\ref{complexityintegerA}).

For the bound in (\ref{complexityintegerB}) for the cost of refining the isolating disks $\Delta_{i}(\tz_{i},R_{i})$ to a size of less than $2^{-\kappa}$, we consider the square-free part $p^{*}$. Note that the disks $\Delta_{i}$ obtained in the first step are obviously also isolating for the roots of $p^{*}$ ($p$ and $p^{*}$ have exactly the same distinct roots) and that $R_{i}<\sigma(z_{i},p)/(64n)=\sigma(z_{i},p^{*})/(64n)\le \sigma(z_{i},p^{*})/(64\deg p^{*})$. Thus, proceeding in completely analogous manner as in the proof of Theorem~\ref{thm:refine complexity} (with the square-free part $p^{*}$ instead of $p$) shows that we need $\tilde{O}(n^{3}+n^{2}\tau+n\kappa)$ bit operations for the refinement. This proves the second claim.
\end{proof}

\subsection{Well-separated Clusters of Roots}\label{sec: well-separated clusters}

\Kurt{We now turn to the problem of computing well-separated clusters of roots. We no longer insist that the clusters are in one-to-one correspondence with the roots, but may have clusters containing more than one root as long as the clusters are well-separated. Well-separated means that the diameter of each cluster is much smaller than the distance from the cluster to the nearest distinct cluster. We also need to impose an upper bound on the diameter of any cluster to make the problem non-trivial. Otherwise, it would be allowed to return a single cluster, e.g., the disk centered at the origin and having radius $4 \max_i \abs{p_i}/\abs{p_n}$, containing all roots. Recall that this disk contains all roots of $p$ (Lemma~\ref{lem:tauTau}). }

\Kurt{Renegar's algorithm~\cite{Renegar87} computes clusters of radius $\epsilon$, where $\epsilon > 0$ is an input parameter. More precisely, it computes $\tz_1$ to $\tz_j$ (the number of clusters is not predetermined) and multiplicities $m_i$ such that $\sum_i m_i = n$, the disks $\Delta_{i}=\Delta(\tz_i,\epsilon)$ are disjoint, and $\Delta_i$ contains exactly $m_i$ roots of $p$. He uses subdivision and Newton iteration for root approximation, the Shur-Cohn method~\cite[Theorem 6.8b]{Henrici74} for determining whether a disk contains a root, and an approximate winding number algorithm for estimating the number of zeros in a disk. The arithmetic complexity (= number of arithmetic operations) is analyzed and shown to be nearly optimal. The author also states that ``his algorithm will not fare well in the bit-complexity model''.  }

\Kurt{Yakoubsohn and Giusti~et.~al.~\cite{Yakoubsohn00,Giusti-Lecerf-Salvy-Yakoubsohn05} show how to approximate a single cluster of zeros. Given a good starting point, they derive an estimate for the number of zeros in the cluster from the convergence rate of Newton's method. They verify the number of roots in a cluster by an inclusion test based on Rouch\'{e}'s theorem. Schr\"{o}der's variant of Newton's method is used to improve the approximation of the cluster. In the case of a multiple root of known multiplicity it is known to converge quadratically~\cite{Henrici74}. They show that, in the case of a cluster of roots, it is still quadratic provided the iteration is stopped sufficiently early. They propose a method for stopping the iteration at a distance from the cluster which is on the order of its diameter.}

\Kurt{We modify our algorithm as follows. The input to the algorithm is the polynomial $p$. In the clustering algorithm (Section~\ref{sec:clustering}), we drop step (6), i.e., we allow the algorithm to generate any number of clusters. After the clustering, we proceed to the verification step. If the verification step succeeds (this includes a check that the disks have radius at most $4 \max_i \abs{p_i}/\abs{p_n}$), we output the clusters determined in the clustering step. Otherwise, we double $b$ and repeat. The modified algorithm has the following properties:
\begin{enumerate}
\item If it returns disks $D_1$ to $D_j$ and associated multiplicities $m_1$ to $m_j$, then $\sum_i m_j = n$, $\Delta_j$ contains $m_j$ root approximations and $m_j$ roots of $p$ counted with multiplicity, and the disks with the $n$-fold radii are pairwise disjoint. 
\item The algorithms stops at the latest when the precision exceeds $b_0$, where $b_0$ is as in the preceding section.
\item The bit complexity of the algorithm is as stated in (\ref{complexityresult}).
\end{enumerate}}

\section{Curve Analysis}\label{sec:curveana}

\Kurt{In this section, we show how to integrate our approach to isolate and approximate the roots of a univariate polynomial in an algorithm to compute a cylindrical algebraic decomposition~\cite{Collins:CAD,bpr-arag-06,beks:top2D,Gonzales-Vega,cjk-iaicad-02,
Hong,Strzebonski,Eigenwillig-Kerber-Wolpert,ks-top-12,Cheng-et-al}.
More specifically, we apply the results from the previous section
to a recent algorithm, denoted~\topo, from~\cite{beks:top2D}
for computing the topology of a real planar algebraic curve. This yield a bound
on the expected number of bit operations for
computing the topology of a real planar algebraic curve that improves the currently best
bound~\cite{ks-top-12} from
$\tO(n^{9}\tau+n^{8}\tau^{2})$ (deterministic) to $\tO(n^{6}+n^{5}\tau)$
(randomized). Isolating the real-valued solutions of a bivariate polynomial system $g(x,y)=h(x,y)=0$
can be reduced to the problem of computing the topology of an algebraic curve of a degree comparable to the degree of the polynomials $g$ and $h$.
Based on the latter observation, we derive a bound on the expected number of bit operations for solving a bivariate polynomial system that improves the best known bound~\cite{es-bisolvecomplexity-11} from $\tO(n^{8}+n^{7}\tau)$ (deterministic) to $\tO(n^6+n^5\tau)$; see Theorem~\ref{complexity:systemsolving}.}

We also remark that an implementation of algorithm~\topo\ is available~\cite{beks:top2D}. The implementation uses a variant of the Aberth-Ehrlich method for root isolation~\cite{KobelMasterThesis,Rump:TenMethods} and shows great efficiency in practice.

\subsection{Review of the Algorithm \textsc{TopNT}}\label{sec:algotopology}

For the sake of a self-contained representation, we briefly review the algorithm \textsc{TopNT}. For more details and the corresponding proofs, we refer to~\cite{beks:top2D}. The input of the algorithm is a bivariate polynomial $f\in\Z[x,y]$ 
of total degree $n$ with integer coefficients of magnitude $2^\tau$ or less. The polynomial defines an algebraic curve 
$$
C:=\{(x,y)\in\C^2:f(x,y)=0\}\subseteq \C^2.
$$
The algorithm 
 returns a 
planar straight-line graph $\mathcal{G}$ embedded in $\mathbb{R}^2$ that is
\emph{isotopic\footnote{We actually consider the stronger notion of an \emph{ambient isotopy}, but omit the ``ambient''. $\mathcal{G}$ is ambient isotopic to $C_{\R}$ if there is a continuous mapping $\phi:[0,1]\times \R^{2}\mapsto \R^2$ with
$\phi(0,\cdot)=\operatorname{id}_{\R^{2}}$, $\phi(1,C_{\R})=\mathcal{G}$, and
$\phi(t_0,\cdot)$ is a homeomorphism for each $t_0\in[0,1]$.} to the real part}
$C_{\R}:=C\cap \R^{2}$ of $C$.

In the first step (the \textbf{shearing step}), we choose an $s\in\Z$ at random (initially, consider $s=0$) and consider the sheared curve $$C_s:=\{(x,y)\in\C^2: f_s(x,y):=f(x+s\cdot y,y)=0\}.$$ Then, any planar graph isotopic to the real part $C_{s,\R}:=C_s\cap\R$ of $C_s$ is also isotopic to $C_{\R}$, and vice versa. We choose $s$ such that the leading coefficient $\operatorname{lcf}(f_s(x,y);y)$ (with respect to $y$) of the defining polynomial $f_s(x,y)$ of $C_s$ is a constant. This guarantees that $C_{s,\R}$ has no vertical asymptote and that it contains no vertical line. By abuse of notation, we write $C=C_s$ and $f=f_s$ throughout the following considerations.

In the \textbf{projection step}, the $x$-critical points of $C$ (i.e.~all points $(x_0,y_0)\in C$ with $f_y(x_0,y_0)=0$, where $f_y:=\frac{\partial f}{\partial y}$) are projected onto the real $x$-axis by means of a resultant computation. More precisely, we compute\medskip 
\begin{itemize}
\item $R:=\operatorname{res}(f,f_y;y)\in\Z[x]$, 
\item its square-free part $R^*:=R/\gcd(R,R')$, 
\item isolating intervals $I_1,\ldots,I_m$ for the real roots $\alpha_1,\ldots,\alpha_m$ of $R^*$, 
\item the multiplicity $m_i:=\operatorname{mult}(\alpha_i,R)$ of $\alpha_i$ as a root of $R$ for all $i=1,\ldots,m$, and
\item arbitrary separating values $\beta_0,\ldots,\beta_{m+1}\in\R$ with $\alpha_m<\beta_{m+1}$, and $\beta_{i-1}<\alpha_i<\beta_i$ for all $i=1,\ldots,m$.\medskip
\end{itemize}
We further compute\medskip 
\begin{itemize}
\item $f_x^*:=\frac{f_x}{\gcd(f_x,f_y)}$ and $f_y^*:=\frac{f_y}{\gcd(f_x,f_y)}$,
\item $Q:=\operatorname{res}(f_x^*,f_y^*;y)$, and
\item the multiplicity $l_i:=\operatorname{mult}(\alpha_i,Q)$ of $\alpha_i$ as a root of $Q$ for all $i=1,\ldots,m$.\medskip
\end{itemize}

In the \textbf{lifting step}, we compute the fibers of $C$ at the points
$\alpha_i$ and $\beta_i$, that is, we isolate the roots of the polynomials
$f_{\alpha_i}(y):=f(\alpha_i,y)\in\R[y]$ and
$f_{\beta_i}(y):=f(\beta_i,y)\in\R[y]$. For that, we first compute the number
of distinct complex roots of each of these polynomials, and then use the
root isolator from Section~\ref{sec:rootisolation}.\footnote{More precisely, we first compute some $2^t$, with $\operatorname{lcf}(f_s(x,y);y)\le 2^t \le 4\cdot\operatorname{lcf}(f_s(x,y);y)$, and apply the root isolator from Section~\ref{sec:rootisolation} to the polynomial $2^{-t}\cdot f_{\alpha_i}(y)$ (and $2^{-t}\cdot f_{\beta_i}(y)$, respectively) which has leading coefficient of absolute value between $1/4$ and $1$.}
Obviously, each polynomial $f_{\beta_i}(y)$ has $k(\beta_i)=\deg f_{\beta_i}=n$
distinct complex roots. The difficult part is to determine the number
$k(\alpha)$ of distinct roots of $f_\alpha(y)$ for a root $\alpha$ of $R^*$.
According to~\cite[(3.6)]{beks:top2D} and~\cite[Theorem 5]{beks:top2D}, 
\begin{equation}\label{eq:teissier}
k^+(\alpha):=n-\operatorname{mult}(\alpha,R)+\operatorname{mult}(\alpha,Q)\ge k(\alpha),
\end{equation}
and, for a generic 
shearing factor $s$ (more precisely, for all but $n^{O(1)}$ many $s$), the equality $k^+(\alpha)=k(\alpha)$ holds for all roots $\alpha$ of $R^*$. Summation over all complex roots of $R^*$ then yields 
\begin{align*}
& K^+:=\sum_{\alpha:R^*(\alpha)=0} k^+(\alpha)=n\cdot \deg R^* - \deg R + \deg \gcd (R^\infty, Q)\ge \sum_{\alpha:R^*(\alpha)=0} k(\alpha)=:K,\text{ and}\\
& K=K^+\text{ for generic }s,
\end{align*}
where $\gcd (R^\infty, Q)$ is defined as the product of all common factors of $R$ and $Q$ with multiplicities according to their occurrence in 
$Q$. The crucial idea is now to compare the upper bound $K^+$ with a lower bound $K^-$ which also equals $K$ up to a non-generic choice of some parameters. In order to understand the computation of $K^-$, we first consider the exact computation of $K$:
Let  $\operatorname{Sres}_i(f, f_y; y)\in\Z[x,y]$ denote the \emph{$i$-th subresultant polynomial} of $f$ and $f_y$ (with respect to $y$), 
and $\operatorname{sr}_i(x):=\operatorname{sres}_i(f,f_y;y) \in \Z[x]$ its leading coefficient. In particular, we have $R=
\operatorname{sres}_0(f,g;y)=\operatorname{res}(f,f_y;y)$. We define:
\begin{alignat}{3}\label{def:Ris}
  S_0 &:= R^*,
  &\text{ }
  S_i &:= \gcd (S_{i-1}, \operatorname{sr}_i) &\\
  R_1 &:= \frac{S_0}{S_1} = \frac{S_0}{\gcd(S_0, S_1)},
  &\text{ }
  R_i &:= \frac{S_{i-1}}{S_i} = \frac{\gcd(S_0, \dots, S_{i-1})}{\gcd(S_0, \dots, S_{i-1},
    S_i)} ,&\nonumber
\end{alignat}
where $i=1,\ldots,n$. Then, $\prod_{i\ge 1} R_i$ constitutes a factorization
of $R^*$ such that $R_i(\alpha)=0$ if and only if $f(\alpha,y)$ has exactly $n-i$ distinct complex roots; see~\cite[Section 3.2.2]{beks:top2D} for details.
Hence, we have $K=\sum_{i\ge 1} (n-i)\cdot \deg R_i$. We do not carry out the latter computation of $K$ over the integer domain, but over a modular prime field which yields a lower bound $K^-$ for $K$.
\ignore{Unfortunately, this computation of $K$ is costly in practice\marginpar{why the qualifier in practice?} due to the 
computation \Kurttwo{of
the full subresultant sequence.}{in (\ref{def:Ris})} \marginpar{but, we compute the full sequence in Lemma 15}Instead, in order to derive a lower bound $K^-$, we do not
carry out the above computations directly in $\mathbb{Z}$, but in a modular prime field.}More precisely, we choose
a prime $p$ at random, compute the modular images 
$\operatorname{sr}_i^{( p)}(x)=\operatorname{sres}_i(f\operatorname{mod} p,f_y\operatorname{mod} p;y)\in\Z_p[x]$ of 
$\operatorname{sr}_i(x)\in\Z[x]$, and perform all
computations from (\ref{def:Ris}) in $\Z_p[x]$. This yields polynomials $R_i^{( p)}\in\Z_p[x]$. Now,~\cite[Lemma 4]{beks:top2D} shows that
\begin{equation}\label{eq:Kminus}
K^-:=\sum_{i\ge 1} (n-i)\cdot \deg R_i^{( p)}\le K,
\end{equation}
and $K^-=K$ for all but finitely many bad primes.\footnote{In the computation of the $S_{i}$'s and $R_{i}$'s (over $\mathbb{Z}$), all intermediate results have integer coefficients of bitsize bounded by $(n\tau)^{O(1)}$. Since the product of $N$ distinct primes is larger than $N!=2^{\Omega(N\log N)}$, there exist at most $(n\tau)^{O(1)}$ many bad primes for which $K^{-}\neq K$.} Hence, if $K^-<K^+$, we have either chosen a bad prime or a bad shearing
factor. In this case, we start over with a new $s$ and choose a new prime $p$ in the lifting step. If $K^-=K^+$, we know for
sure that $K^+=K$, and thus $k^+(\alpha)=k(\alpha)$ for all roots $\alpha$ of the resultant polynomial $R$.

We can now use our method from Section~\ref{sec:rootisolation} to isolate all complex roots of the fiber polynomials $f_{\alpha_i}(y)$ and $f_{\beta_i}(y)$.
Namely, we can ask for arbitrary good approximations of $\alpha_i$ and $\beta_i$ (by refining corresponding isolating intervals), and thus 
for arbitrary good approximations of the coefficients of the fiber polynomials. In addition, we know the exact number of distinct roots of
either polynomial. From the isolating regions in $\C$, we then derive isolating intervals for the real roots together with corresponding multiplicities. If one of the polynomials $f_{\alpha_i}(y)$ has more than one multiple real root, we start over and choose a new shearing 
factor $s$. Otherwise, we proceed with the final step.\\ 

\textbf{Connection step}. We remark that, except for finitely many $s$, each $f_{\alpha_i}$ has exactly one multiple root.
The previous two steps already yield the vertices of the graph $\mathcal{G}$. Namely, these are exactly the points\footnote{For a graph with rational vertices, you may replace each $x_0=\alpha_i$ (or $x_0=\beta_i$) by an arbitrary rational value in its corresponding isolating interval, and the same for each real root of $f_{x_0}(y)$.}
\[
V(\mathcal{G}):=\{(x,y)\in\R^2: \exists i\text{ with } x=\alpha_i\text{ or } x=\beta_i,\text{ and }f(x,y)=0\}
\]
Since each polynomial $f_{\alpha}(y)$ has exactly one multiple root, there exists a unique vertex $v$ along each vertical line, where either the number of edges connecting $v$ to the left or to the right may differ from one. Hence, connecting all vertices
in an appropriate manner is straightforward; see~\cite[Section 3.2.3]{beks:top2D} for more details.\\

\Kurt{\emph{Remark.} We remark that we use randomization at exactly two stages of the algorithm, that is, the choice of a shearing 
value $s$ in the projection step and the choice of a prime $p$ for computing the lower bound $K^-$ for $K$ in the lifting 
step. Let $P$ denote the set of all prime numbers, then there exists a set $B\subset \Z\times P$ of "bad" pairs $(s,p)\in B$ for 
which success of the algorithm is not guaranteed, whereas, the algorithm returns the correct topology of $C$ for all other pairs. 
There are at most $n^{O(1)}$ "bad" values for $s$ that yield a non-generic position of the curve, and, for 
each of the remaining values for $s$, there exist at most $(n\tau)^{O(1)}$ many "bad" choices for $p$. Since we can generate a random prime of bit length $L$ or less for the cost of $L^{O(1)}$ bit operations,\footnote{In order to generate a random prime 
of size less than $2^L$, pick an integer of magnitude less than $2^L$ at random and test this integer for being prime. Since 
the cost for the latter test is polynomial~\cite{PrimeP} in $L$ and since there exist~\cite{primenumbers} more than 
$2^L/(\operatorname{ln}(2)\cdot L+2)$ prime numbers of size less than $2^{L}$ for any $L\ge 6$, we can pick a random prime of bit length less than 
$L$ with a number of bit operations that is polynomial in $L$.} it follows that using 
$(\log(n\tau))^{O(1)}$ bit operations, we can pick a pair $(s,n)$ such that, with probability $1/2$, the algorithm succeeds.}

\subsection{Complexity Analysis}\label{sec:curvecomplexity}

Throughout the following considerations, we say that a polynomial $G\in\Z[x_{1},\ldots,x_{k}]$ with integer coefficients has \emph{magnitude} $(N,\mu)$ if the total degree of $G$ is upper bounded by $N$ and all coefficients have absolute value $2^{\mu}$ or less. 
In addition, we fix the following notations: For an arbitrary $\alpha\in\C$,\medskip
\begin{itemize}
\item we define $f_{\alpha}(y):=\sum_{i=0}^{n} f_{\alpha,i}y^{i}:=f(\alpha,y) \in\C[y]$, where $f\in\Z[x,y]$ is our input polynomial. We further define $\tau_{\alpha}:=\log \max_{i} |f_{\alpha,i}|\ge 0$ (notice that, for the considered shearing factors $s$, the leading coefficient $f_{\alpha,n}$ is a constant integer for all $\alpha$).
\item the number of distinct roots of $f_{\alpha}$ is denoted by $k(\alpha)$. We further denote 
$z_{\alpha,1},\ldots,z_{\alpha,k(\alpha)}$ the distinct roots of $f_{\alpha}$, and $m_{\alpha,i}$, with $i=1,\ldots,k(\alpha)$, the corresponding multiplicities.
\item $\sigma_{\alpha,i}$ denotes the separation of $z_{\alpha,i}$, and $P_{\alpha,i}:=\prod_{j\neq i} (z_{\alpha,i}-z_{\alpha,j})^{m_{\alpha,j}}$.
\item For an arbitrary polynomial $G\in\C[x]$, we denote $V(G)$ the set of all distinct complex roots of $G$, and $\mathcal{V}(G)$ the multiset of all complex roots (i.e.~each root occurs a number of times according to its multiplicity).\medskip 
\end{itemize}

We first prove the a couple of basic results which are needed for our analysis:

\begin{lemma}\label{lem:complexitydiv}
For a fixed positive integer $k$, let $G\in\Z[x_{1},\ldots,x_{k}]$ be an integer polynomial of 
magnitude $(N,\mu)$. Then, each divisor $g\in\Z[x_{1},\ldots,x_{k}]$ of $G$ has 
coefficients of bitsize $\tilde{O}(\mu+N)$. 
\end{lemma}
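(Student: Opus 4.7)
The plan is to apply a Mignotte-type bound for divisors of integer polynomials, suitably generalized to the multivariate setting. In the univariate case, if $G\in\Z[x]$ has degree $N$ and coefficient magnitude at most $2^{\mu}$, then Landau's inequality bounds the Mahler measure by $M(G)\le \|G\|_{2}\le 2^{\mu+O(\log N)}$. Since any divisor $g\in\Z[x]$ of $G$ satisfies $M(g)\le M(G)$, its coefficients are bounded in absolute value by $\binom{\deg g}{\lfloor \deg g/2\rfloor}M(g)\le 2^{O(N+\mu)}$, giving bitsize $\tO(N+\mu)$.

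For the multivariate version, I would invoke Gelfond's inequality: if $G=\prod_{i} g_{i}$ in $\C[x_{1},\ldots,x_{k}]$ and $d_{j}:=\deg_{x_{j}} G$ denotes the partial degree in $x_{j}$, then
\[
\prod_{i} \|g_{i}\|_{\infty}\ \le\ 2^{d_{1}+\cdots+d_{k}}\cdot \|G\|_{\infty}.
\]
Because the total degree of $G$ is at most $N$, each $d_{j}\le N$. For any divisor $g$ of $G$ in $\Z[x_{1},\ldots,x_{k}]$, there is a complementary factor $h\in\Z[x_{1},\ldots,x_{k}]$ with $G=g\cdot h$, and combining Gelfond's inequality with $\|G\|_{\infty}\le 2^{\mu}$ yields
\[
\|g\|_{\infty}\ \le\ \|g\|_{\infty}\cdot\|h\|_{\infty}\ \le\ 2^{kN}\cdot 2^{\mu}\ =\ 2^{kN+\mu}.
\]
Since $k$ is fixed, $\log\|g\|_{\infty}=O(N+\mu)=\tO(N+\mu)$, as claimed.

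The only substantive ingredient is Gelfond's inequality in the precise form above; this is classical and can either be cited directly (e.g., from Mignotte's monograph on polynomial factorization) or derived by applying the univariate Mignotte bound once per variable, using a Kronecker-style substitution to reduce to a single-variable divisibility statement. No delicate estimate is needed beyond combining this with the trivial bound $\|G\|_{\infty}\le 2^{\mu}$, so the main ``obstacle'' is really only to pick a convenient reference.
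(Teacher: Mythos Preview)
Your argument is correct. Gelfond's inequality (in any of its standard formulations, with base $2$ or $e$ and possibly an extra polynomial factor in the degrees) gives $\|g\|_\infty\cdot\|h\|_\infty\le C^{d_1+\cdots+d_k}\|G\|_\infty$ for some absolute constant $C$, and since $h\in\Z[x_1,\ldots,x_k]$ is nonzero you have $\|h\|_\infty\ge 1$. With $d_j\le N$ and $k$ fixed this yields $\log\|g\|_\infty=O(N+\mu)$, which is the claim.

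The paper takes a different, more self-contained route: it proves the univariate case via the Mahler measure (essentially your first paragraph) and then does induction on $k$ by specializing $x_k$ to the integers $0,1,\ldots,N$, applying the induction hypothesis to each specialization, and recovering the coefficient polynomials $a_\lambda(x_k)$ by Lagrange interpolation. Your approach is shorter and more direct, at the cost of invoking an external reference whose multivariate form is not entirely trivial to prove from scratch; the paper's evaluation--interpolation argument is elementary and avoids any appeal to a multivariate Mahler measure or Gelfond-type estimate beyond the univariate Mignotte bound. Both approaches give the same $\tilde{O}(\mu+N)$ bound, and neither is obviously preferable---yours would be the natural choice if Gelfond's lemma is already available in the reader's toolkit.
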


\begin{proof}
We prove the claim via induction over $k$. For a univariate $G\in\Z[x_{1}]$, we remark that $\Mea(g)\le\Mea(G)\le \norm{G}_{2}\le 2^{\mu+1}N$, and thus the absolute value of each coefficient of $g$ is bounded by $2^{N}\Mea(g)\le 2^{N+\mu+1}N$. 

For the general case, we write $$g(x_{1},\ldots,x_{k})=\sum_{\lambda=(\lambda_{1},\ldots,\lambda_{k-1})}a_{\lambda}(x_{k})x_{1}^{\lambda_{1}}\cdots x_{k-1}^{\lambda_{k-1}}, \text{ with }a_{\lambda}\in\Z[x_{k}].$$
For a fixed $\bar{x}_{k}\in\{0,\ldots,N\}$, the polynomial $g(x_{1},\ldots,x_{k-1},\bar{x}_{k})\in\Z[x_{1},\ldots,x_{k-1}]$ is a divisor of $G(x_{1},\ldots,x_{k-1},\bar{x}_{k})\in\Z[x_{1},\ldots,x_{k-1}]$. Since $|\bar{x}_{k}|^{N}\le N^{N}=2^{N\log N}$ and $a_{\lambda}(x_{k})$ has degree $N$ or less, it follows that $G(x_{1},\ldots,x_{k-1},\bar{x}_{k})$ has bitsize $O(N\log N+\mu)$. Hence, from the induction hypothesis, we conclude that the polynomial $g(x_{1},\ldots,x_{k-1},\bar{x}_{k})$ has coefficients of bitsize $\tilde{O}(\mu+N)$, and thus $a_{\lambda}(i)\in\Z$ has bitsize $\tilde{O}(\mu+N)$ for all $i=0,\ldots,N$ and all $\lambda$. Since $a_{\lambda}$ is a polynomial of degree at most $N$, it follows that $a_{\lambda}$ is uniquely determined by the values $a_{\lambda}(i)$, and thus Lagrange interpolation yields
\[
a_{\lambda}(x)=\sum_{i=0}^{N}a_{\lambda}(i)\cdot\frac{x\cdot (x-1)\cdots (x-i+1)(x-i-1)\cdots (x-N)}{i\cdot (i-1)\cdots 1\cdot (-1)\cdots (i-N)}
\]
Expanding the numerator of the fraction yields a polynomial with coefficients of absolute value $2^{O(N\log N)}$, and thus each coefficient of $a_{\lambda}(x_{k})$ has bitsize $\tilde{O}(\mu+N)$ because $a_{\lambda}(i)$ has bitsize $\tilde{O}(\mu+N)$ and there are $N+1$ summands. This proves the claim.
\end{proof}

In addition, we need a bound on the bit complexity of computing the greatest common divisor of two univariate polynomials with integer coefficients. For a proof of the following result, we refer to~\cite[\S 11.2]{Gathen-Gerhard:book}.

\begin{lemma}\label{lem:gcdunivariate}
Let $F,G\in\Z[x]$ be two univariate polynomials of magnitude $(N,\mu)$. 
\begin{itemize}
\item Computing $H:=\gcd(F,G)$ uses $\tilde{O}(N^2\mu)$ bit operations. 
\item Given a polynomial $H\in\Z[x]$ that divides $F$, computing $F/H$ uses $\tilde{O}(N\mu)$ bit operations.
\end{itemize} 
\end{lemma}

In the projection step of \textsc{TopNT}, we also have to compute the greatest common divisor of two bivariate polynomials. Although it is not very difficult to derive a reasonable good bound on the bit complexity of the latter problem, it seems that no complexity results for \emph{deterministic} algorithms are published so far. The following lemma provides such a result:

\begin{lemma}\label{lem:gcdbivariate}
Let $F,G\in\Z[x,y]$ be two bivariate polynomials of magnitude $(N,\mu)$. Then, we can compute $H(x,y):=\gcd(F,G)$ with $\tilde{O}(N^6+N^5\mu)$ bit operations.
\end{lemma} 

\begin{proof}
Throughout the following considerations, we say that two polynomials $p_1,p_2$ in $\Z[x]$ (or in $\Z[x,y]$) are equivalent (written as $p_1\simeq p_2$) if there exists an integer $\lambda\in\Z$ such that $p_1=\lambda\cdot p_2$ or $p_2=\lambda\cdot p_1$.
Let $F(x,y)=f_l(x)\cdot y^l+\cdots+f_0(x)$ and $G(x,y)=g_k(x)\cdot y^k+\cdots+g_0(x)$, with $l,k\le N$ and polynomials $f_i,g_j\in\Z[x]$. For an arbitrary but fixed $s\in\Z$, it holds that 
$$
H(x,y)=\gcd(F,G)=\hat{H}(x-s\cdot y,y),\text{ where }\hat{H}(x,y):=\gcd(F(x+s\cdot y,y),G(x+s\cdot y,y)). 
$$
Namely, for each divisor $d(x,y)\in\Z[x,y]$ of $F$ and $G$, $d(x\pm s\cdot y,y)$ is also a divisor of $F(x\pm s\cdot y,y)$ and $G(x\pm s\cdot y,y)$, respectively. Thus, for computing $H(x,y)$, it suffices to compute $\hat{H}$ for an arbitrary integer $s_0\in\Z$ and to replace $x$ by $x-s_0\cdot y$. We first determine an integer $s_0$ such that both polynomials $\hat{F}:=F(x+s_0 y,y)$ and $\hat{G}:=G(x+s_0y,y)$ have  constant leading coefficients with respect to $y$. That is,
\begin{align}\label{properties:hatf}
&\hat{F}(x,y)=\hat{f}_m(x)\cdot y^m+\cdots+\hat{f}_0(x)\text{ and }\hat{G}(x,y)=\hat{g}_n(x)\cdot y^n+\cdots+\hat{g}_0(x),\\
&\nonumber\text{ with }\hat{f}_i,\hat{g}_j\in\Z[x]\text{ and }\hat{f}_m,\hat{g}_n\in\Z.
\end{align}
Considering $s$ as an indeterminate variable, computing $F(x+s\cdot y,y)\in\Z[s,x,y]$ and $G(x+s\cdot y,y)\in\Z[s,x,y]$ uses $\tilde{O}(N^4+N^3\mu)$ bit operations because computing $(x+sy)^i$ for all $i=0,\ldots,N$, needs $\tilde{O}(N^3)$ bit operations, and computing $f_i(x)\cdot (x+sy)^i$ and $g_i(x)\cdot (x+sy)^i$ for a fixed $i$ needs $\tilde{O}(N^2(N+\mu))$ bit operations. The leading coefficients of $F(x+sy,y)$ and $G(x+sy,y)$ with respect to $y$ are univariate polynomials in $s$ of magnitude $(N,O(N\log N+\mu))$. Thus, computing an integer $s_0$, with $|s_0|\le N$, such that both of the latter univariate polynomials do not vanish needs at most $\tilde{O}(N^2\mu+N^3)$ bit operations (polynomial evaluation at the $2N+1$ points $s=-N,-N+1,\ldots,-1,0,1,\ldots,N$). It follows that computing an $s_0$, with $|s_0|\le N$, which fulfills the desired properties from (\ref{properties:hatf}) needs $\tilde{O}(N^4+N^3\mu)$ bit operations. Throughout the following considerations, we can further assume that there exists no integer different from $\pm 1$ which divides $\hat{F}$ and $\hat{G}$. Namely, with $\tilde{O}(N^2(N+\mu))$ bit operations, we can divide 
$\hat{F}$ and $\hat{G}$ by the greatest common divisor $\lambda$ (which is an integer because of $\hat{f}_m,\hat{g}_n\in\Z$) of all coefficients $\hat{f}_i$ and $\hat{g}_j$.

We now come to the computation of $\hat{H}(x,y)=\gcd(\hat{F},\hat{G})$. According to~\cite{reischert-asymptotic,Diochnos:2009}, we can compute the subresultant sequence $\operatorname{Sres}_i(\hat{F}, \hat{G}; y)\in\Z[x,y]$ with $\tO(N^{6}+N^5\mu)$ bit operations since the polynomials $\hat{F}$ and $\hat{G}$ have magnitude $(N,O(\mu+N\log N))$. The total degree of each polynomial $\operatorname{Sres}_i(\hat{F}, \hat{G}; y)$ is bounded by $N^2$, the $y$-degree is bounded by $N-i$, and all coefficients have bitsize $\tO(N(N+\mu))$. 
Let $i_0$ be the smallest index with $\operatorname{Sres}_{i_0}(\hat{F}, \hat{G}; y)\not\equiv 0$, then $$\bar{H}(x,y):=\operatorname{Sres}_{i_0}(\hat{F}, \hat{G}; y)=\bar{h}_{i_0}(x)\cdot y^{i_0}+\cdots+\bar{h}_0(x)=\sum_{i,j}\bar{h}_{ij}\cdot x^i y^j$$ coincides with $\hat{H}$ up to a fraction $\frac{p(x)}{q(x)}$ with coprime $p,q\in\Z[x]$. That is,
\[
\hat{H}(x,y)=\frac{p(x)}{q(x)}\cdot \bar{H}(x,y)=\left(\frac{p(x)}{q(x)}\cdot \bar{h}_{i_0}(x)\right)\cdot y^{i_0}+\cdots+\left(\frac{p(x)}{q(x)}\cdot \bar{h}_0(x)\right).
\]
Again, we can assume that there exists no integer different from $\pm 1$ that divides all coefficients $\bar{h}_{ij}$ of $\bar{h}$. Namely, we can divide $\bar{H}$ by the greatest common divisor of all $\bar{h}_{ij}$, and this computation uses $\tilde{O}(N^2\cdot N(N+\mu))$ bit operations.
Since the leading coefficients of $\hat{F}$ and $\hat{G}$ with respect to $y$ are constants, the same also holds for $\hat{H}=\gcd(\hat{F},\hat{G})$. Thus, $p$ must be a constant and $q\simeq \bar{h}_{i_0}$.
It follows that the primitive part\footnote{The primitive part $P^*$ of a polynomial $P(x)=p_N\cdot x^N+\cdots+p_0\in\Z[x]$ is defined as $P^*(x):=\gcd(p_N,\ldots,p_0)^{-1}\cdot P(x)$.} $\bar{h}_{i_0}^*$ of $\bar{h}_{i_0}$ divides all coefficients of $\bar{H}$, and that $\bar{H}/\bar{h}_{i_0}^*\simeq \hat{H}$. Hence, since $\hat{H}$ is primitive, we must have $\bar{H}/\bar{h}_{i_0}^*=\hat{H}$.
The computation of $h_{i_0}^*$ and $\bar{H}/\bar{h}_{i_0}^*$ needs $\tilde{O}(N^6+N^5\mu)$ bit operations, where we use the fact that the polynomials $\bar{h}_i(x)$ have magnitude $(N^2,\tilde{O}(N(N+\mu)))$ and that each division of $\bar{h}_i(x)$ by $h_{i_0}^*$ is remainder-free; cf. Lemma~\ref{lem:gcdunivariate}. Finally, computing $H=\gcd(F,G)$ from $\hat{H}=H(x+s_0\cdot y,y)$ uses $\tilde{O}(N^4+N^3\mu)$ bit operations since $\hat{H}$ has magnitude $(N,O(N\log N+N\cdot \log s_0+\mu))=(N,O(N\log N+\mu))$. 
\end{proof}

We now come to the complexity analysis for \textsc{TopNT}. For the shearing step, we remark that there exist at most $n^{O(1)}$ many bad shearing factors $s$ for which our algorithm does not succeed; see~\cite[Thm.~5]{beks:top2D} and \cite[Prop.~11.23]{bpr-arag-06}. Thus, when choosing $s$ at random, we can assume that we succeed for an integer $s$ of bitsize $O(\log n)$. It follows that the sheared polynomial $f(x+sy,y)$ has magnitude $(n,O(\tau+n\log n))$. Hence, throughout the following considerations, we can assume that the leading coefficient $\operatorname{lcf}(f(x,y);y)$ of $f$ (with respect to $y$) is an integer constant and that $f$ has magnitude $(n,O(\tau+n\log n))$. We further define $2^t$ to be a power of two with $\operatorname{lcf}(f(x,y);y)\le 2^t\le 4\cdot\operatorname{lcf}(f(x,y);y)$. 	
	
\begin{lemma}\label{lem:resultants}
		We can compute the entire subresultant sequence $\operatorname{Sres}_i(f, f_y; y)$, with $i=0,\ldots,n$, the polynomial $Q=\operatorname{res}(f_{x}^{*}, f_y^{*}; y)$, and the square-free parts $R^{*}$ and $Q^{*}$ of the corresponding polynomials $R=\operatorname{Sres}_0(f, f_y; y)=\operatorname{res}(f,f_{y};y)$ and $Q$ with $ \tO(n^6 +n^5\tau) $ bit operations.\footnote{The implementation from~\cite{beks:top2D} does not compute the entire sub resultant sequence but only the resultants $R^*$ and $Q^*$. This does not yield any improvement with respect to worst case bit complexity, however, a crucial speed up in practice can be observed.}
	\end{lemma}
	\begin{proof}
		For two bivariate polynomials $g,h\in\Z[x,y]$ of magnitude $(N,\mu)$, computing the subresultant sequence $\operatorname{Sres}_i(g, h; y)\in\Z[x,y]$ together with the corresponding cofactor representations (i.e. the polynomials $u_{i},v_{i}\in\Z[x,y]$ with $u_{i}g+v_{i}h=\operatorname{Sres}_i(g, h; y)$) needs $\tO(N^{5}\mu)$ bit operations~\cite{reischert-asymptotic,Diochnos:2009}. The total degree of the polynomials $\operatorname{Sres}_i(f, f_y; y)$ is bounded by $N^2$, the $y$-degree is bounded by $N-i$, and all coefficients have bitsize $\tO(N\mu)$. 
		Furthermore, according to~\cite[\S 11.2]{Gathen-Gerhard:book}, computing the square-free part of a univariate polynomial of magnitude $(N,\mu)$ uses $\tO(N^{2}\mu)$ bit operations, and the coefficients of the square-free part have bitsize $O(N+\mu)$. Hence, the claim concerning the computation of the polynomials $\operatorname{Sres}_i(f, f_y; y)$ and $R^{*}$ follows from the fact that $f$ and $f_{y}$ have magnitude $(n,O(\tau+n\log n))$ and $R$ has magnitude $(n^{2},\tO(n^2+n\tau))$.

For the computation of the polynomials $f_x^*=\frac{f_x}{\gcd(f_x,f_y)}$ and $f_y^*=\frac{f_y}{\gcd(f_x,f_y)}$, Lemma~\ref{lem:gcdbivariate} yields the bit complexity bound $\tilde{O}(n^6+n^5\tau)$. Lemma~\ref{lem:complexitydiv} implies that $f_x^*$ and $f_y^*$ have magnitude $(n,O(n+\tau))$, and thus computing $Q$ needs $\tilde{O}(n^6+n^5\tau)$ bit operations. $Q$ has magnitude $(n^2,\tO(n^2+n\tau))$.		
\end{proof}
	
We now bound the cost for computing and comparing the roots of $R$ and $Q$.
\begin{lemma}\label{lem:Kplus}
The roots of the polynomials $R$ and $Q$ can be computed with $\tilde{O}(n^{6}+n^{5}\tau)$ bit operations. The same bound also applies to the number of bit operations that are needed to compute the multiplicities $\operatorname{mult}(\alpha,R)$ and $\operatorname{mult}(\alpha,Q)$, where $\alpha$ is a root of $R$.
\end{lemma}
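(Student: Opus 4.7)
My plan is to apply Theorem~\ref{thm:square free complexity} separately to $R$ and to $Q$. By Lemma~\ref{lem:resultants}, $R$ has magnitude $(n^2, \tilde{O}(n\tau))$ and $Q$ has magnitude $(n^2,\tilde{O}(n^2+n\tau))$. Substituting degree $n^2$ and the respective bitsizes into the bound $\tilde{O}(n^3+n^2\tau)$ of Theorem~\ref{thm:square free complexity} yields $\tilde{O}((n^2)^3 + (n^2)^2\cdot n\tau) = \tilde{O}(n^6+n^5\tau)$ for $R$ and $\tilde{O}((n^2)^3 + (n^2)^2\cdot(n^2+n\tau)) = \tilde{O}(n^6+n^5\tau)$ for $Q$. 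This directly produces isolating disks for the roots of $R$ and of $Q$ together with the multiplicities $\operatorname{mult}(\alpha, R)$ and $\operatorname{mult}(\beta, Q)$, respectively.

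To compute $\operatorname{mult}(\alpha, Q)$ for every root $\alpha$ of $R$, I will first compute the square-free factorization $Q = \prod_{l\ge 1} Q_l^l$ by the algorithm of \cite[\S 11.2]{gathen-gerhard:algebra:bk}; the cost is $\tilde{O}(N^2\mu)=\tilde{O}(n^6+n^5\tau)$ for $N=n^2$ and $\mu=\tilde{O}(n^2+n\tau)$, and by Lemma~\ref{lem:complexitydiv} each $Q_l$ inherits bitsize $\tilde{O}(n^2+n\tau)$. Then I form the divisors $D_l := \gcd(R^*, Q_l)$. Since the $Q_l$ are pairwise coprime, a root $\alpha$ of $R$ satisfies $\operatorname{mult}(\alpha,Q)=l$ precisely when $D_l(\alpha)=0$, and $\operatorname{mult}(\alpha,Q)=0$ when $\alpha$ is a root of $R^*/\prod_l D_l$. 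Using fast half-gcd, each gcd costs $\tilde{O}(n^2\cdot(n^2+n\tau))$, and since $\sum_l l\cdot\deg Q_l = \deg Q \le n^2$ forces the number of non-trivial $Q_l$ to be $O(n)$, the total gcd cost stays within $\tilde{O}(n^6+n^5\tau)$; alternatively one can extract the $D_l$'s iteratively from $\gcd(R^*,Q^*)$.

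The remaining task is to match the roots of $R$, as isolated in the first step, with the roots of the individual $D_l$'s. Because each $D_l$ divides $R^*$ and $\sum_l \deg D_l \le n^2$, re-applying Theorem~\ref{thm:square free complexity} to each $D_l$ costs a total of $\tilde{O}((\sum_l\deg D_l)^3 + (\sum_l\deg D_l)^2\cdot(n^2+n\tau))=\tilde{O}(n^6+n^5\tau)$, again by the bound of Theorem~\ref{thm:square free complexity}. The isolating disks for roots of $R^*$ and for roots of each $D_l$ can then be matched by overlap, using Theorem~\ref{thm:refine complexity} to refine when a disk for a root of $R^*$ cannot immediately be certified to contain or to miss a given $D_l$-disk. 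The main obstacle will be to argue that this matching refinement does not exceed the budget; this is controlled by the fact that all polynomials involved have degree $O(n^2)$ and bitsize $\tilde{O}(n^2+n\tau)$, so the required precision is polynomial in $n$ and $\tau$ and fits within $\tilde{O}(n^6+n^5\tau)$ bit operations.
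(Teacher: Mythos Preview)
Your proposal is correct and reaches the same bound, but it takes a noticeably more elaborate route than the paper for the second part.

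For isolating the roots of $R$ and $Q$ and obtaining $\operatorname{mult}(\alpha,R)$ and $\operatorname{mult}(\beta,Q)$, you and the paper do exactly the same thing: apply Theorem~\ref{thm:square free complexity} to each resultant.

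For determining $\operatorname{mult}(\alpha,Q)$ at the roots $\alpha$ of $R$, the paper's argument is considerably simpler. It observes that the multiplicities $\operatorname{mult}(\beta,Q)$ are already in hand from the first step, so the only remaining task is to decide, for each root $\alpha$ of $R$, whether $\alpha$ coincides with some root $\beta$ of $Q$. To do this it computes $d:=\deg\gcd(R^*,Q^*)$ once, then refines the isolating disks of $R$ and of $Q$ until exactly $d$ pairs overlap; since $R\cdot Q$ has magnitude $(2n^2,\tilde{O}(n^2+n\tau))$, a refinement to size $2^{-\tilde{O}(n^4+n^3\tau)}$ suffices, and Theorem~\ref{thm:square free complexity} keeps this within $\tilde{O}(n^6+n^5\tau)$. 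Your approach instead computes the full square-free factorization $Q=\prod_l Q_l^l$, forms the gcds $D_l=\gcd(R^*,Q_l)$, isolates the roots of every $D_l$, and then matches. This works, and your counting argument that only $O(n)$ of the $Q_l$ are non-trivial is fine, but it duplicates information already obtained in the first step and requires more bookkeeping. The paper's single-gcd-plus-overlap trick is what you want to remember here; your matching step, which you flag as the ``main obstacle'', is exactly the step the paper carries out directly, and its precision bound $2^{-\tilde{O}(n^4+n^3\tau)}$ comes from the root separation of the product $R\cdot Q$.
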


\begin{proof}
According to Theorem~\ref{thm:square free complexity}, we can compute isolating disks for the roots of the polynomials $R$ and $Q$ together with the corresponding multiplicities with $\tO(n^{6}+n^{5}\tau)$ bit operations since $R$ and $Q$ have magnitude $(n^{2},\tilde{O}(n^{2}+n\tau))$. For each root $\alpha$ of $R$, the algorithm returns a disk $\Delta^{(R)}(\alpha):=\Delta(\tilde{\alpha},r_{\alpha})$ with radius $r_{\alpha}<\frac{\sigma(\alpha,R)}{64\deg R}$, and thus we can distinguish between real and non-real roots. A corresponding result also holds for each root $\beta$ of $Q$, that is, each $\beta$ is isolated by a disk $\Delta^{(Q)}(\beta)$ with radius less than $\frac{\sigma(\beta,Q)}{64\deg Q}$. Furthermore, for any given positive integer $\kappa$, we can further refine all isolating disks to a size of less than $2^{-\kappa}$ with $\tO(n^{6}+n^{5}\tau+n^{2}\kappa)$ bit operations. 

For computing the multiplicities $\operatorname{mult}(\alpha,Q)$, where $\alpha$ is a root of $R$, we have to determine the common roots of $R$ and $Q$. This can be achieved as follows: We first compute $d:=\deg\gcd(R^{*},Q^{*})$ for which we need $\tO(n^{6}+n^{5}\tau)$ bit operations. Namely, computing the $\gcd$ of two integer polynomials of magnitude $(N,\mu)$ needs $\tO(N^{2}\mu)$ bit operations. We conclude that $R$ and $Q$ have exactly $d$ distinct roots in common. Hence, in the next step, we refine the isolating disks for $R$ and $Q$ until there are exactly $d$ pairs $(\Delta^{(R)}(\alpha),\Delta^{(Q)}(\beta))$ of isolating disks that overlap. Since $P:=R\cdot Q$ has magnitude $(2n^{2},\tilde{O}(n^{2}+n\tau))$, the minimal distance between two distinct roots $\alpha$ and $\beta$ is bounded by the separation of $P$, thus it is bounded by $2^{-\tilde{O}(n^{4}+n^{3}\tau)}$. We conclude that it suffices to refine the isolating disks to a size of $2^{-\tilde{O}(n^{4}+n^{3}\tau)}$, hence the cost for the refinement is again bounded by $\tO(n^{6}+n^{5})$. Now, for each of the $d$ pairs $(\Delta^{(R)}(\alpha),\Delta^{(Q)}(\beta))$ of overlapping disks, we must have $\alpha=\beta$, and these are exactly the common roots of $R$ and $Q$.
\end{proof}

From the above Lemma, we conclude that we can compute the numbers $k^{+}(\alpha)$ for all roots $\alpha$ of $R$ with $\tilde{O}(n^{6}+n^{5}\tau)$ bit operations. Thus, the same bounds also applies to the computation of the upper bound $K^{+}=\sum_{\alpha}k^{+}(\alpha)$ for $K=\sum_{\alpha} k(\alpha)$.\footnote{For simplicity, we ignored that (in practice) $K^{+}$ can be computed much faster from the equality $K^{+}=n\cdot \deg R^{*}-\deg R+\deg\gcd(R^{\infty},Q)$ instead of computing the $k^{+}(\alpha)$ first and, then, summing up all values.} For the computation of the lower bound $K^{-}$, we use the following result:

\begin{lemma}\label{K-minus}
We can compute $K^{-}$ with $\tilde{O}(n^6+n^{5}\tau)$ bit operations. 
\end{lemma}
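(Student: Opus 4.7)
\emph{Proof plan.} The strategy for $K^-$ is to mirror the $\mathbb{Z}$-chain in (\ref{def:Ris}) verbatim, but with every polynomial coerced into $\mathbb{Z}_p[x]$ for a small random prime $p$, which avoids the coefficient growth that makes the $\mathbb{Z}$-version expensive.

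First I choose $p$. The footnote following (\ref{eq:Kminus}) bounds the number of bad primes by $(n\tau)^{O(1)}$, so sampling $p$ uniformly from the primes of bitsize $\Theta(\log(n\tau))$ in a sufficiently large range yields, with constant probability, a good $p$ with $\log p=\tO(1)$; consequently every arithmetic operation in $\mathbb{Z}_p$ has $\tO(1)$ bit cost. Reducing $f$ and $f_y$ (which together have $O(n^2)$ integer coefficients of bitsize $O(\tau+\log n)$) modulo $p$ then costs $\tO(n^2\tau)$ bit operations.

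Next I compute all principal subresultant coefficients $\operatorname{sr}_i^{(p)}(x)\in\mathbb{Z}_p[x]$, $i=0,\dots,n$, by running the same bivariate subresultant algorithm invoked in Lemma~\ref{lem:resultants} but over the base ring $\mathbb{Z}_p$. Since the coefficient bitsize stays at $O(\log p)=\tO(1)$ throughout, the $\tO(N^{5}\mu)$ estimate of Lemma~\ref{lem:resultants} specialised to $(N,\mu)=(n,\log p)$ gives $\tO(n^{5})$ bit operations. I then execute the gcd chain (\ref{def:Ris}) inside $\mathbb{Z}_p[x]$: take the square-free part $R^{*,(p)}$ of $R^{(p)}=\operatorname{sr}_0^{(p)}$, and iteratively form $S_i^{(p)}=\gcd(S_{i-1}^{(p)},\operatorname{sr}_i^{(p)})$ and $R_i^{(p)}=S_{i-1}^{(p)}/S_i^{(p)}$. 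Each of the $O(n)$ iterations is a univariate gcd/division on polynomials of degree $O(n^{2})$ over $\mathbb{Z}_p$, costing $\tO(n^{2})$ bit operations, so the whole chain takes $\tO(n^{3})$; reading off $K^{-}=\sum_{i\ge 1}(n-i)\deg R_i^{(p)}$ is then free.

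Summing the three contributions gives $\tO(n^{2}\tau)+\tO(n^{5})+\tO(n^{3})=\tO(n^{5}+n^{2}\tau)$, comfortably within the stated $\tO(n^{5}\tau)$ bound. The only subtle point, and the main obstacle, is the sampling step: one must argue that within a polynomial-sized range of primes the good primes form a constant-fraction majority, which follows by combining the $(n\tau)^{O(1)}$ bound on bad primes with the prime number theorem, and a routine check that Reischert's bivariate subresultant algorithm remains valid when the base ring is a finite field of polynomial-size cardinality (so that evaluation interpolation has enough distinct points).
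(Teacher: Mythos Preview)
Your proof is correct and follows essentially the same route as the paper: choose a prime of bitsize $O(\log(n\tau))$, carry out the subresultant and gcd-chain computations of~(\ref{def:Ris}) over $\mathbb{Z}_p$, and sum the degrees. The one minor difference is that you compute $\operatorname{sr}_i^{(p)}$ directly over $\mathbb{Z}_p$ and bound this by $\tO(n^5)$, whereas the paper simply upper-bounds the modular subresultant cost by the already-established $\tO(n^5\tau)$ cost over $\mathbb{Z}$ (with a footnote remarking that in practice one does exactly what you describe); your accounting is therefore slightly sharper but lands in the same place.
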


\begin{proof}
From the proof of Lemma~\ref{lem:resultants}, we can assume that the leading coefficients $\sr_{i}\in\Z[x]$ of the subresultant sequence 
$\operatorname{Sres}_i(f, f_y; y)$ and the square-free part $S_{0}:=R^{*}$ of the resultant polynomial 
$R$ are already computed. Note that all polynomials $S_{i}$ and $R_{i}$ as defined in (\ref{def:Ris}) have coefficients of 
bitsize $\tO(n\tau+n^{2})$ because all of them divide $R^{*}$. Thus, except for $(n\tau)^{O(1)}$ many bad primes, the modular computation over $\mathbb{Z}_{p}$ 
yields polynomials $S_{i}^{(p)},R_{i}^{(p)}\in\mathbb{Z}_{p}[x]$ with $\deg R_{i}^{(p)}=\deg R_{i}$ for all $i$, and thus $K^{-}=K$. Hence, we can assume that we only have to consider primes $p$ of bitsize $O(\log(n\tau))$.
Since we can compute the polynomials $\sr_{i}$ and $R^{*}$ with $\tilde{O}(n^6+n^{5}\tau)$ bit operations, the same bound also applies to their modular computation over $\Z_{p}$.\footnote{We remark that, in practice, we never compute the entire subresultant sequence over $\Z$. Here, we only assumed their exact computation in order to keep the argument simple and because of the fact that our overall complexity bound is not affected.} 

For the computation of the polynomials $S_{i}^{(p)}\in\Z_{p}[x]$, we have to perform at most $n$ $\gcd$ computations (over $\Z_{p}$ with $p$ of bit size $O(\log (n\tau))$) involving polynomials of degree $n^{2}$. Thus, the cost for these computations is bounded by $\tilde{O}(n\cdot n^{2}\log (n\tau))$ bit operations since computing the $\gcd$ of two polynomials in $\Z_{p}[x]$ of degree $N$ can be achieved with $\tilde{O}(N)$ arithmetic operations in $\Z_{p}$ due to~\cite[Prop.~11.6]{Gathen-Gerhard:book}.
For the computation of the $R_{i}^{(p)}$'s, we have to consider the cost for at most $n$ (remainder-free) polynomial divisions. Again, for the latter computations, we need $\tilde{O}(n\cdot n^{2}\log(n\tau))$ bit operations. 
\end{proof}

We remark that it is even possible to compute $K$ directly in an \emph{expected} number of bit operations bounded by $\tO(n^6+n^{5}\tau)$. Namely, following a randomized approach, the computation of the $\gcd$ of two integer polynomials of magnitude $(N,\mu)$ needs an expected number of bit operations bounded by $\tO(N^{2}+N\mu)$ according to~\cite[Prop.~11.11]{Gathen-Gerhard:book}. This yields the bound $\tO(n(n^{4}+n^{3}\tau))$ for the expected number of bit operations to compute the polynomials $S_{i}$ from the subresultant sequence $\operatorname{Sres}_i(f, f_y; y)$ and the polynomial $S_{0}=R^{*}$. Obviously, the same bound also applies to the computation of the $R_{i}$'s.\\

For the analysis of the curve topology algorithm, it remains to bound the cost for isolating the roots of the ``fiber polynomials'' $f_{\alpha_{i}}(y)\in\R[x]$ and $f_{\beta_{i}}(y)\in\R[x]$, where the $\alpha_{i}$'s are the real roots of $R$ and the $\beta_{i}$'s are arbitrary separating values in between. 
In practice, we recommend to choose 
arbitrary rational values $\beta_{i}$, however, following this straight forward approach yields a bit complexity of 
$\tilde{O}(n^{7}+n^{6}\tau)$ for isolating the roots of the polynomials $f_{\beta_{i}}(y)\in\mathbb{Q}[y]$. Namely, if 
$\beta_{i}$ is a rational value of bitsize $L_{i}$, then $f_{\beta_{i}}$ has coefficients of bitsize 
$\tilde{O}(nL_{i}+\tau)$. Thus, isolating the roots of $f_{\beta_{i}}$ needs 
$\tilde{O}(n^{3}L_{i}+n^{2}\tau)$ bit operations. However, since the separations of the $\alpha_{i}$'s are lower bounded by $2^{-\tilde{O}(n^{4}+n^{3}\tau)}$, we cannot get anything better than $\tilde{O}(n^{4}+n^{3}\tau)$ for the largest $L_{i}$. 

The crucial idea to improve upon the latter approach is to consider, for the values $\beta_{i}$, real roots of the polynomial $\hat{R}(x)$ instead, where $$\hat{R}:=\frac{(R^{*})'}{\gcd((R^{*})',(R^{*})'')}$$ is defined as the square-free part of the derivative of $R^{*}$. Notice that the polynomials $\hat{R}$ and $R$ do not share a common root. Furthermore, from the mean value theorem, we conclude that, for any two consecutive real roots of $R$, there exists a root of $\hat{R}$ in between these two roots.
We can obtain such separating roots by computing isolating disks for all complex roots of $\hat{R}$ such that none of these disks intersects any of the isolating disks for the roots of $R$. The computation of $\hat{R}$ needs $\tO(n^{6}+n^{5}\tau)$ bit operations since $(R^{*})'$ has magnitude $(n^{2},\tilde{O}(n^{2}+n\tau))$. We can use the same argument as in the proof of Lemma~\ref{lem:Kplus} to show that it suffices to compute isolating disks for $R$ and $\hat{R}$ of size $2^{-\tilde{O}(n^{4}+n^{3}\tau)}$ in order to guarantee that the disks do not overlap. Again, Theorem~\ref{thm:refine complexity} shows that we achieve this with $\tilde{O}(n^{6}+n^{5}\tau)$ bit operations. 

Now, throughout the following considerations, we assume that the separating elements $\beta_{i}$ are real roots of $\hat{R}$ with $\beta_{i-1}<\alpha_{i}<\beta_{i}$. We will show in Lemma~\ref{lem:fibers} that, for isolating the roots of all polynomials $f_{\beta_{i}}$ and $f_{\alpha_{i}}$, we need only $\tilde{O}(n^{6}+n^{5}\tau)$ bit operations. For this, we use the following result:

\begin{lemma}\label{mahlermeasure-lemma}
Let $G\in\Z[x]$ be a polynomial of magnitude $(N,\mu)$. For an arbitrary subset $V'\subset \mathcal{V}(G)$, it holds that
$$\sum_{\alpha\in V'} \log \Mea(f_{\alpha}) = \tilde{O}(N\tau+n\mu+N n),\text{ and}\quad
\sum_{\alpha\in V'} \tau_{\alpha}=\tilde{O}(N\tau+n\mu+N n).
$$
In particular, for $G\in\{R,\hat{R}\}$, the bound writes as $\tilde{O}(n^{3}+n^{2}\tau)$.
\end{lemma}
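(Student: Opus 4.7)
My plan is to prove both bounds by reducing everything to a single estimate: $|f_{\alpha,i}|$ is small in terms of $M(\alpha)$, and the values $M(\alpha)$ (summed over roots of $G$) are controlled by the Mahler measure of $G$.

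First I would write $f(x,y) = \sum_{i=0}^n a_i(x) y^i$ with $a_i \in \Z[x]$ of degree at most $n$ and coefficients bounded by $2^\tau$ in absolute value. Evaluating at $\alpha$ gives $f_{\alpha,i} = a_i(\alpha)$, so the trivial bound
\[
|f_{\alpha,i}| \le (n+1) \cdot 2^\tau \cdot M(\alpha)^n
\]
yields $\tau_\alpha \le \log(n+1) + \tau + n\log M(\alpha)$. This handles the second sum directly: summing over $\alpha \in V'$ gives
\[
\sum_{\alpha \in V'} \tau_\alpha \le |V'|(\log(n+1) + \tau) + n\sum_{\alpha \in V'} \log M(\alpha),
\]
where $|V'| \le N$.

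For the first sum, I would use the elementary inequality $\Mea(f_\alpha) \le \norm{f_\alpha}_1 \le (n+1) \cdot 2^{\tau_\alpha}$, which gives $\log \Mea(f_\alpha) \le \log(n+1) + \tau_\alpha$. Hence the first sum is bounded by the second (up to an additive $N \log(n+1)$), so it suffices to handle the $\tau_\alpha$ sum.

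The main task is therefore to bound $\sum_{\alpha \in V'} \log M(\alpha)$ by $\tilde{O}(\mu + \log N)$. Since $V' \subseteq \mathcal{V}(G)$, this sum is at most $\sum_{\alpha \in \mathcal{V}(G)} \log M(\alpha) = \log (\Mea(G)/|\lc(G)|)$. The Landau inequality gives $\Mea(G) \le \norm{G}_2 \le \sqrt{N+1} \cdot 2^\mu$, and $|\lc(G)| \ge 1$ since $G$ has integer coefficients, so the sum is $O(\mu + \log N)$. Combining everything,
\[
\sum_{\alpha \in V'} \tau_\alpha \le N \cdot (\log(n+1) + \tau) + n \cdot O(\mu + \log N) = \tilde{O}(N\tau + n\mu),
\]
and the analogous bound for $\sum \log \Mea(f_\alpha)$ follows immediately.

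I do not expect a genuine obstacle here; the only point requiring care is that $|V'|$ counts roots with multiplicity (as does $\mathcal{V}(G)$), so the inequality $\sum_{\alpha \in V'} \log M(\alpha) \le \log(\Mea(G)/|\lc(G)|)$ is valid term by term. The specialization to $G \in \{R, \hat{R}\}$ is then just substitution: both polynomials have magnitude $(n^2, \tilde{O}(n^2 + n\tau))$, giving $\tilde{O}(n^2 \tau + n(n^2 + n\tau)) = \tilde{O}(n^3 + n^2\tau)$.
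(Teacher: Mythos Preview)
Your argument is correct. For the second sum $\sum_{\alpha}\tau_\alpha$ your proof coincides with the paper's: both use the coefficient bound $|f_{\alpha,i}|\le (n+1)2^{\tau}M(\alpha)^n$ and then control $\sum_{\alpha\in V'}\log M(\alpha)$ by $\log(\Mea(G)/|\lc(G)|)=O(\mu+\log N)$ via Landau's inequality.

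For the first sum $\sum_{\alpha}\log\Mea(f_\alpha)$ the routes differ. You take the elementary shortcut $\Mea(f_\alpha)\le\norm{f_\alpha}_1\le (n+1)2^{\tau_\alpha}$, which reduces the Mahler-measure bound to the already established $\tau_\alpha$ bound. The paper instead exploits multiplicativity of the Mahler measure to write
\[
\sum_{\alpha\in V'}\log\Mea(f_\alpha)\le \log\Mea\Bigl(\prod_{\alpha\in\mathcal{V}(G)} f_\alpha\Bigr)
= \log\Mea\!\left(\frac{\res(f,G;x)}{\lc(G)^n}\right),
\]
and then bounds the degree and bitsize of the resultant from its Sylvester matrix. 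Your approach is strictly simpler and self-contained; the paper's resultant identity gives a cleaner structural expression for the product $\prod_\alpha f_\alpha$ (and hooks into the surrounding framework that already manipulates resultants), but for the $\tilde{O}$ estimate in this lemma that extra structure is not needed.
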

\begin{proof}
The proof is almost identical to the proof of Lemma 5 in~\cite{ks-top-12}. The only difference is that we consider a general $G$, whereas in~\cite{ks-top-12}, only the case $G=R$ has been treated.
Note that $\Mea_\alpha\geq 1$ for every $\alpha\in V(G)$, 
and that the Mahler measure is multiplicative, 
that means, $\Mea(g)\Mea(h)=\Mea(gh)$ for arbitrary univariate polynomials
$g$ and $h$.
Therefore,
$$\sum_{\alpha\in V'}\log\Mea(f_{\alpha})\leq \sum_{\alpha\in\mathcal{V}(G)}\log\Mea(f_{\alpha})=
\log\Mea\left(\prod_{\alpha\in \mathcal{V}(G)}f_{\alpha}\right).$$
Considering $f$ as a polynomial in $x$ with coefficients in $\Z[y]$ yields
$$\prod_{\alpha\in \mathcal{V}(G)}f_{\alpha}=\frac{\res(f,G;x)}{\lcf(G)^n}\Rightarrow
\sum_{\alpha\in V'}\log\Mea(f_{\alpha}) \leq \log\Mea(\res(f,G;x)).$$
It is left to bound the degree and the bitsize of $\res(f,G;x)$.
Considering the Sylvester matrix of $f$ and $G$ (whose determinant defines
$\res(f,G;x)$), we observe that it has $n$ rows with coefficients of $G$
(which are integers of size $O(\mu)$) 
and $N$ rows with coefficients of $f$ (which are univariate polynomials
of magnitude $(n,O(\tau+n\log n))$). Therefore,
the $y$-degree of $\res(f,G;y)$ is bounded by
$O(nN)$, and its bitsize is bounded by $O(n(\mu+\log n)+N(\tau+n\log n))=\tilde{O}(N\tau+n\mu+N n)$.
This shows that $\log\Mea(\res(f,G;x))=\tilde{O}(N\tau+n\mu+Nn)$, and thus the first claim follows. 

For the second claim, note that the absolute value of each coefficient of $f_{\alpha}(y)$ is bounded by $(n+1)\cdot \lambda M(\alpha)^{n}$, where $\lambda=2^{O(\tau+n\log n)}$ is an upper bound for the absolute values of the coefficients of $f$. Thus, we have 
\begin{align*}
\sum_{\alpha\in V'}\tau_{\alpha}&\le \sum_{\alpha\in \mathcal{V}(G)}\tau_{\alpha}\le  \sum_{\alpha\in \mathcal{V}(G)}\log((n+1)\lambda M(\alpha)^{n})\\
&=O(N(\tau+n\log n)+n\log \Mea(G)=\tO(N\tau+n\mu+Nn)
\end{align*}

For the last claim, note that, for $G\in\{R,\hat{R}\}$, we have $N\le n^{2}$ and $\mu=\tO(n^2+n\tau)$.
\end{proof}
	\begin{lemma}\label{lem:bounds}
		For $G\in\{R, \hat{R} \}$, we have
					\begin{align*}
		 \sum_{\alpha\in V(G)} \sum_{i=1}^{k(\alpha)} \mai\log M(\sigmaai^{-1}) &= \tO(n^4+n^3\tau)\quad\text{and}\quad		 \sum_{\alpha\in V(G)} \sum_{i=1}^{\ka}\log M(\Pai^{-1}) = \tO(n^4 +n^3\tau) .
	\end{align*}
	\end{lemma}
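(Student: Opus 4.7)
The plan is to establish both bounds in two steps: first, a per-fiber estimate for a fixed $\alpha\in V(G)$ that separates the dependence on the polynomial $f_\alpha$ from a discriminant-type quantity; then, an amortization over $\alpha\in V(G)$ using Lemma~\ref{mahlermeasure-lemma} for the polynomial terms and a resultant computation in $\mathbb{Z}[x,y]$ for the discriminant terms.

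For the per-fiber step, fix $\alpha$ and let $f_\alpha^{\ast}$ denote the square-free part of $f_\alpha$. I would combine the identities $P_{\alpha,i}=f_\alpha^{(m_{\alpha,i})}(z_{\alpha,i})/(m_{\alpha,i}!\operatorname{lcf}(f_\alpha))$ and $(f_\alpha^{\ast})'(z_{\alpha,i})=\prod_{j\ne i}(z_{\alpha,i}-z_{\alpha,j})$ with a weighted Davenport--Mahler--Mignotte estimate applied to $f_\alpha^{\ast}$ (in the spirit of~\cite[Thm.~2]{es-bisolvecomplexity-11}) to derive
$$\sum_{i=1}^{k(\alpha)} m_{\alpha,i}\log M(\sigma_{\alpha,i}^{-1}),\ \sum_{i=1}^{k(\alpha)}\log M(P_{\alpha,i}^{-1})=\tilde{O}\bigl(n\log\operatorname{Mea}(f_\alpha)+n\tau_\alpha+\log|\operatorname{disc}(f_\alpha^{\ast})|^{-1}+n^2\bigr).$$
Summing over $\alpha\in V(G)$, with $|V(G)|\le\deg G=O(n^2)$, and invoking Lemma~\ref{mahlermeasure-lemma} on $\sum_\alpha\log\operatorname{Mea}(f_\alpha)$ and $\sum_\alpha\tau_\alpha$, the polynomial-data and trivial contributions together are $\tilde{O}(n\cdot(n^3+n^2\tau)+n^2\cdot n^2)=\tilde{O}(n^4+n^3\tau)$.

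It then remains to control $\sum_{\alpha\in V(G)}\log|\operatorname{disc}(f_\alpha^{\ast})|^{-1}$. Let $f^{\ast}\in\mathbb{Z}[x,y]$ be the square-free part of $f$ in $y$; by Lemma~\ref{lem:complexitydiv} it has magnitude $(O(n),\tilde{O}(n+\tau))$, so its $y$-discriminant $D:=\operatorname{disc}_y(f^{\ast})\in\mathbb{Z}[x]$ has magnitude $(O(n^2),\tilde{O}(n+\tau))$ by the standard Sylvester-matrix bound. For $\alpha$ outside the finite set of zeros of $\operatorname{lcf}_y(f^{\ast})$, $D(\alpha)$ equals $\operatorname{disc}(f_\alpha^{\ast})$ up to a bounded power of $\operatorname{lcf}_y(f^{\ast})(\alpha)$, so the product $\prod_{\alpha\in V(G)}D(\alpha)$ equals the integer $\operatorname{Res}_x(G,D)$ divided by $\operatorname{lcf}(G)^{\deg D}$; since $|\operatorname{Res}_x(G,D)|\ge 1$ when $G$ and $D$ are coprime, we obtain
$$\sum_{\alpha\in V(G)}\log|D(\alpha)|^{-1}\le \deg D\cdot\log|\operatorname{lcf}(G)|=O(n^2)\cdot\tilde{O}(n^2+n\tau)=\tilde{O}(n^4+n^3\tau).$$
The remaining bad $\alpha$ (at most $O(n)$ values where $\operatorname{lcf}_y(f^{\ast})(\alpha)=0$, or where $G$ and $D$ share a factor) are handled directly by a Mahler-measure estimate of $f_\alpha$ as in Lemma~\ref{mahlermeasure-lemma}, and contribute only lower-order terms.

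The main obstacle is the precise form of the per-fiber estimate: to avoid losing an extra factor of $n$ in the final bound, the Davenport--Mahler--Mignotte step must be applied so that $\log|\operatorname{disc}(f_\alpha^{\ast})|^{-1}$ appears with coefficient one rather than being absorbed into a factor $n\log\operatorname{Mea}$. The identification of $D(\alpha)$ with $\operatorname{disc}(f_\alpha^{\ast})$ at potentially degenerate $\alpha\in V(G)$, and the treatment of possible common factors of $G$ and $D$, is a secondary technical issue but one that does not affect the leading asymptotics.
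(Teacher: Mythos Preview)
Your approach works for $G=\hat{R}$ and is essentially what the paper does there, but it has a fatal gap for $G=R$. The problem is the identification of $\operatorname{disc}(f_\alpha^{\ast})$ with $D(\alpha)=\operatorname{disc}_y(f^{\ast})(\alpha)$. Since $f$ may be assumed square-free in $y$ (after the shearing step the leading coefficient in $y$ is a nonzero constant), we have $f^{\ast}=f$ and hence $D=\operatorname{disc}_y(f)$, which equals $R$ up to a unit. Thus $D(\alpha)=0$ for \emph{every} $\alpha\in V(R)$: the ``bad'' set where $G$ and $D$ share a factor is not $O(n)$ exceptional points but all of $V(R)$, and your resultant argument $|\operatorname{Res}_x(G,D)|\ge 1$ collapses. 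More conceptually, the square-free part of a specialization is not the specialization of the square-free part; the degree of $f_\alpha^{\ast}$ drops exactly at the roots of $R$, so there is no single integer polynomial $D\in\mathbb{Z}[x]$ whose values are the discriminants $\operatorname{disc}(f_\alpha^{\ast})$.

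The paper circumvents this by replacing the discriminant with the first nonvanishing principal subresultant coefficient: at each $\alpha$ one has the identity
\[
\prod_{i=1}^{k(\alpha)}\prod_{j\ne i}|z_{\alpha,i}-z_{\alpha,j}|
=\frac{|\operatorname{sr}_{n-k(\alpha)}(\alpha)|}{|\operatorname{lcf}(f_\alpha)|^{2k(\alpha)-2}\prod_i m_{\alpha,i}},
\]
and the per-fiber bound from~\cite[Thm.~9]{ks-top-12} gives
$\sum_i \log M(\sigma_{\alpha,i}^{-1})=\tilde{O}(n\log\operatorname{Mea}(f_\alpha)+\log M(\operatorname{sr}_{n-k(\alpha)}(\alpha)^{-1}))$.
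The multiplicity weights are then handled not by a weighted DMM inequality but by the crude bound $m_{\alpha,i}\le 2\operatorname{mult}(\alpha,R)$, which converts the sum over $V(R)$ into a sum over the multiset $\mathcal{V}(R)$; the amortized bound on $\sum_{\alpha\in\mathcal{V}(R)}\log M(\operatorname{sr}_{n-k(\alpha)}(\alpha)^{-1})$ is then supplied by~\cite[Lemma~8]{ks-top-12}. This is precisely the missing ingredient in your outline: the quantity you must track is $\operatorname{sr}_{n-k(\alpha)}(\alpha)$, which is nonzero by definition of $k(\alpha)$, not $\operatorname{sr}_0(\alpha)=R(\alpha)$, and the dependence on $\alpha$ of the subresultant index is what makes the amortization nontrivial.
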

	\begin{proof}
		First, consider $G=R$. For any root $\alpha$ of $R$, we define $m(\alpha):=\mult(\alpha,R)$. From (\ref{eq:teissier}), we conclude that $\sum_{i=1}^{k(\alpha)} (m_{\alpha,i}-1) = n-k(\alpha)\le n-\ka+\mult(Q,\alpha)\le\mult(\alpha,R)$. Furthermore, since $m(\alpha)\ge 1$, it follows that  $\mai\le m(\alpha)+1\le 2m(\alpha)$ for all $i$. Hence, we get
		\begin{align*}
			&\sum_{\alpha\in V(R)} \sum_{i=1}^{k(\alpha)} \mai\log M(\sigmaai^{-1})\le
								  \sum_{\alpha\in V(R)} 2\cdot\ma\cdot\sum_{i=1}^{\ka} \log M(\sigmaai^{-1}) \\
								  &\overset{(1)}{=} \tO \left(\sum_{\alpha\in V(R)} \ma\cdot \left(n \log\Mea(\fa) + \log M(\sr_{n-\ka}(\alpha)^{-1})\right)\right) \\
								  &\overset{(2)}{=} \tO \left(\sum_{\alpha\in\mathcal{V}(R)} n \log\Mea(\fa) + \log M(\sr_{n-\ka}(\alpha)^{-1}) \right) \\
								  &\overset{(3)}{=} \tO(n^4+n^3\tau).
		\end{align*}
		For (1), we used~\cite[
 9]{ks-top-12} to show that
		\begin{align}\label{sumoversigma}
		\sum_{i=1}^{\ka} \log M(\sigmaai^{-1}) = \tO(n\log\Mea(\fa)+\log M(\sr_{n-\ka}(\alpha)^{-1}))\ .
		\end{align}\ignore{(recall that $M(x)=\max(1,|x|)$, and hence the sum over all $\log M(\sigmaai^{-1})$ can be interpreted as the sum over the subset of all $\alpha$ with $\sigmaai^{-1}>1$).}
		(2) follows from the fact that each $\alpha\in V(R)$ occurs $\ma$ times in $\mathcal{V}(R)$. Finally, for (3), we apply Lemma~\ref{mahlermeasure-lemma} to bound the first sum and \cite[Lemma 8]{ks-top-12} to bound the second one.
		
		The second claim can be shown as follows. For each $\alpha$, we first split the sum			\begin{align}\label{sumsplit} \sum_{i=1}^{\ka}\log M(\Pai^{-1}) = \sum_{i=1}^{\ka}\log |\Pai|^{-1} + \sum_{i; |\Pai|>1}\log |\Pai|.\end{align}
		Then, for the first sum, we have
		\begin{align}
			&\sum_{i=1}^{\ka}\log |\Pai|^{-1}= \sum_{i=1}^{\ka} \log \prod_{j\neq i}|z_{\alpha,i}-z_{\alpha,j}|^{-m_{\alpha,j}}\nonumber \\
					&=\log\left(\prod_{i=1}^{\ka}\prod_{j\neq i}|z_{\alpha,i}-z_{\alpha,j}|\right)^{-1}+\sum_{i=1}^{k(\alpha)}\log \prod_{j\neq i}|z_{\alpha,i}-z_{\alpha,j}|^{-(m_{\alpha,j}-1)}\nonumber \\
					&\overset{(1)}{\le} \log \frac{|\mathrm{lcf}(f_{\alpha})^{2k(\alpha)-2}|\cdot\prod_{i=1}^{k(\alpha)}m_{\alpha,i}}{|\sr_{n-\ka}(\alpha)|} + \sum_{i=1}^{\ka}\sum_{j\neq i}\log \sigma_{\alpha,i}^{-(m_{\alpha,j}-1)}\nonumber \\
					&\overset{(2)}{=}  \tO(n\tau)+\log |\sr_{n-\ka}(\alpha)|^{-1}+\sum_{i=1}^{\ka}\log \sigmaai^{-1}\sum_{j\neq i}(m_{\alpha,j}-1)\nonumber\\
					&\overset{(3)}{\le}\tO(n\tau)+\log |\sr_{n-k(\alpha)}|^{-1}+\ma\cdot\sum_{i=1}^{k(\alpha)}\log M(\sigma_{\alpha,i}^{-1})\nonumber \\
					&\overset{(4)}{=}\tO(n\tau)+(\ma+1)\cdot \left(n\log\Mea f_{\alpha}+\log M(\sr_{n-k(\alpha)})^{-1}\right)\label{resultfirstsum}
		\end{align}
		For (1), we have rewritten the product as a subresultant term, where we used~\cite[Prop. 4.28]{bpr-arag-06}. Furthermore, the distances $|z_{\alpha,i}-z_{\alpha,j}|$ have been lower bounded by the separation of $z_{\alpha,i}$. For (2), note that $\mathrm{lcf}_{\alpha}$ is an integer of bitsize $O(\tau+\log n)$, that $k\le n$, and that $\prod_{i}\mai\le n^{n}$. For (3), we used that $\sum_{j=1}^{k(\alpha)} (m_{\alpha,j}-1) \le \ma$, and, in (4), we applied (\ref{sumoversigma}). Now, summing up the expression in (\ref{resultfirstsum}) over all $\alpha\in V(R)$ yields
		\[
		\sum_{\alpha\in V(R)} \sum_{i=1}^{\ka}\log |\Pai|^{-1}=\tO(n^{4}+n^{3}\tau),
		\]
		where we again use Lemma~\ref{mahlermeasure-lemma} and \cite[Lemma 8]{ks-top-12}.

		For the second sum in (\ref{sumsplit}), we use that (cf. proof of Theorem \ref{thm:square free complexity})
		$$
			|\Pai| = \frac{|\fa^{(\mai)}(z_{\alpha, i})|}{\mai ! \lcf(f_{\alpha})} < n 2^{\tau_\alpha +1}M(z_{\alpha, i})^{n},
		$$
		and thus
		\begin{align*}
			\sum_{\alpha\in V(R)} \sum_{i; |\Pai|>1}\log |\Pai| &= \tO\left(\sum_{\alpha\in V(R)} \left(n\taua + n\log\prod_{i=1}^{\ka} M(z_{\alpha, i})\right) \right) \\
			&= \tO\left(\sum_{\alpha\in V(R)}\left( n\taua + n\log \Mea(f_\alpha)\right) \right) 
			=\tO(n^{4}+n^3\tau)
		\end{align*}
		according to Lemma~\ref{lem:bounds}. We conclude that $$\sum_{\alpha\in V(R)} \sum_{i=1}^{\ka}\log M(\Pai^{-1}) = \tO(n^4+n^3\tau).$$

		Now, consider the case $G=\hat{R}$. Note that, for each $\alpha\in V(\hat{R})$, we have $\mai=1$ for all $i$, and thus $k(\alpha)=n$. Namely, $R$ and $\hat{R}$ do not share a common root, and thus each polynomial $f_{\alpha}$ has only simple roots. Also, $V(\hat{R})=\mathcal{V}(\hat{R})$ since $\hat{R}$ is square-free. The following computation now shows the first claim
		\begin{align*}
			\sum_{\alpha\in V(\hat{R})} \sum_{i=1}^{\ka} \mai \log M(\sigmaai^{-1})
								  &= \sum_{\alpha\in V(\hat{R})}\left( \sum_{i=1}^{n} \log M(\sigmaai^{-1}) \right)
								  {=} \tO \left(\sum_{\alpha\in V(\hat{R})} n \log\Mea(\fa) + \log M(\sr_{0}(\alpha)^{-1})\right) \\
								  &{=} \tO \left(n^4+n^3\tau + \sum_{\alpha\in V(\hat{R})} \log M(R(\alpha)^{-1}) \right). 
		\end{align*}
		In order to bound the sum in the above expression, note that 
		$$\sum_{\alpha\in V(\hat{R})} \log M(R(\alpha)^{-1}) = \sum_{\alpha\in V(\hat{R})} \log |R(\alpha)|^{-1} + \sum_{\alpha; |R(\alpha)|>1} |R(\alpha)|.$$
		We first compute an upper bound for each value $|R(\alpha)|$. Since $R(x)$ has magnitude $(n^{2},\tO(n\tau))$, it follows that $|R(\alpha)|$ has absolute value less than $2^{\tO(n\tau)}\cdot M(\alpha)^{n^{2}}$. Hence, for any subset $V'\subseteq V(\hat{R})$, it follows that 
		$$\sum_{\alpha\in V'}\log |R(\alpha)|\le \tO(n^{3}\tau)+n^{2} \log \Mea(\hat{R})=\tO(n^{4}+n^{3}\tau).$$
		Thus, it is left to show that $\sum_{\alpha} \log |R(\alpha)|^{-1}=\tO(n^{4}+n^{3}\tau)$, which follows from
		\begin{align}
			\sum_{\alpha\in V(\hat{R})} \log |R(\alpha)|^{-1} &= \log \prod_{\alpha\in V(\hat{R})} |R(\alpha)|^{-1}= \log\left(\frac{|\lcf(\hat{R})|^{\deg(R)}}{|\res(R,\hat{R})|}\right) = \tO(n^4 + n^3\tau)\ .\label{eq:Rbound}
		\end{align}

		In the second equation we rewrote the product in terms of the resultant $\res(R,\hat{R})$ \cite[Prop. 4.16]{bpr-arag-06}. Since $R$ and $\hat{R}$ have no common root, we have $|\res(R,\hat{R})|\ge 1$. Thus, the last equation follows from the fact that the leading coefficient of $\hat{R}$ has bitsize $\tO(n^2+n\tau)$ and that $\deg(R)\le n^2$.

		Similarly, for the second claim, we first derive an upper bound for 
		$\sum_{\alpha\in V(\hat{R})} \sum_{i:\Pai>1}\log |\Pai|$. Again, we can use exactly the 
		same argument as for the case $G=R$ to show that the latter sum is bounded by 
		$\tO(n^{4}+n^{3}\tau)$. Hence, it suffices to prove that 
		$$\sum_{\alpha\in V(\hat{R})} \sum_{i=1}^{\ka}\log |\Pai|^{-1}=\tO(n^{4}+n^{3}\tau).$$ which follows from
		\begin{align*}
			\sum_{\alpha\in V(\hat{R})} \sum_{i=1}^{\ka}\log |\Pai|^{-1} &= \sum_{\alpha \in V(\hat{R})}\sum_{i=1}^{n} -\log \prod_{j\neq i}|z_{\alpha,i}-z_{\alpha,j}| {=} \sum_{\alpha \in V(\hat{R})}\left( -\log \prod_{i=1}^{n}\prod_{j\neq i}|z_{\alpha,i}-z_{\alpha,j}|  \right) \\
					&{=} \sum_{\alpha \in V(\hat{R})}\left( -\log \frac{|\sr_{0}(\alpha)|}{|\mathrm{lcf}(f_{\alpha})^{2n-2}|}\right) {=} \sum_{\alpha \in V(\hat{R})}\left( -\log \frac{|R(\alpha)|}{|\mathrm{lcf}(f_{\alpha})^{2n-2}|}\right) = \tO(n^4+n^3\tau)\ .
		\end{align*}
		The last step follows from (\ref{eq:Rbound}). We remark that the above computation is similar to the one for the case $G=R$. However, we used the fact that $\hat{R}$ is square-free, and thus all multiplicities $\mai$ are equal to one.
	\end{proof}

\begin{lemma}\label{lem:resultant isolation}
	Let $G\in\{R,\hat{R}\}$ and let $L_{\alpha}\in\N$ be arbitrary positive integers, where $\alpha$ runs over all real roots of $G$. Then, we can compute an absolute $L_{\alpha}$ approximations for all polynomials $f(\alpha,y)$ using
	$$\tO(n^6+n^5\tau+n^{2}\sum_{\alpha}L_{\alpha})$$
	bit operations.
\end{lemma}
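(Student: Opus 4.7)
The plan is to decompose the task into three stages: an initial isolation of all complex roots of $G$, a per-root refinement of each real root's isolating disk to a precision determined by $L_\alpha$, and a Horner evaluation of the coefficient polynomials of $f$ at these refined approximations. Writing $f(x,y)=\sum_{i=0}^n f_i(x)\,y^i$ with $f_i\in\Z[x]$ of degree at most $n$ and integer coefficients of bitsize $O(\tau+\log n)$, producing an approximation of precision $L_\alpha$ of $f(\alpha,y)$ reduces to approximating each value $f_i(\alpha)$ with sufficient accuracy.

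For Stage~1, I apply Theorem~\ref{thm:square free complexity} to $G\in\{R,\hat{R}\}$, which has magnitude $(n^{2},\tO(n^{2}+n\tau))$. This yields isolating disks (together with multiplicities) for every complex root of $G$ in $\tO(n^{6}+n^{5}\tau)$ bit operations; in particular, the real roots of $G$ are identified. For Stage~2, I argue that in order to obtain an approximation of precision $L_\alpha$ of $f(\alpha,y)$ it suffices to know $\alpha$ to absolute error $2^{-\mu_\alpha}$ for some $\mu_\alpha=L_\alpha+\tO(n\tau)$. Indeed, the mean value theorem yields $|f_i(\tilde\alpha)-f_i(\alpha)|\le n\cdot 2^{\tau+O(\log n)}\cdot M(\alpha)^{n-1}\cdot|\tilde\alpha-\alpha|$, and Cauchy's bound together with Lemma~\ref{lem:tauTau} give $\log M(\alpha)=\tO(\tau)$ for every root $\alpha$ of $G$; a routine accounting of rounding then absorbs the $\log(n+1)$ factors in the definition of ``precision $L_\alpha$'' into the $\tO$.

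For Stage~3, each real root $\alpha$ is refined individually by a Newton iteration carried out on a square-free polynomial ($\hat R$ itself, or the square-free part $R^{*}=R/\gcd(R,R')$ in place of $R$). Quadratic convergence makes the per-root cost of reaching precision $\mu_\alpha$ dominated by one evaluation of the polynomial and its derivative, namely $\tO(\deg(G)\cdot\mu_\alpha)=\tO(n^{2}\mu_\alpha)$. With $\tilde\alpha$ in hand I then compute $f_0(\tilde\alpha),\ldots,f_n(\tilde\alpha)$ via Horner's scheme in fixed-point arithmetic of precision $\mu_\alpha$, at cost $\tO(n\mu_\alpha)$ per coefficient and therefore $\tO(n^{2}\mu_\alpha)$ per root for all $n+1$ coefficients. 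Summing refinement plus evaluation over the at most $n^{2}$ real roots of $G$ gives $\tO(n^{2}\sum_{\alpha}\mu_\alpha)=\tO(n^{2}\sum_{\alpha}L_\alpha+n^{5}\tau)$, which combined with Stage~1 yields the claimed bound.

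The main subtlety is to perform the refinement \emph{per root} rather than uniformly: a direct invocation of Theorem~\ref{thm:refine complexity} with a common precision $\kappa=\max_\alpha\mu_\alpha$ would produce $\tO(n^{3}\max_\alpha L_\alpha)$, which can be substantially larger than the target $\tO(n^{2}\sum_\alpha L_\alpha)$. Amortizing across roots by only refining each as far as its own $\mu_\alpha$ is what makes the bound tight.
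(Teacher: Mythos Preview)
Your decomposition into isolation, refinement, and Horner evaluation is reasonable and close in spirit to the paper's argument, but there is one genuine error and one misreading of the situation.

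The error is the claim that $\log M(\alpha)=\tO(\tau)$ for every root $\alpha$ of $G$. Since $G\in\{R,\hat R\}$ has magnitude $(n^{2},\tO(n^{2}+n\tau))$, Cauchy's bound together with Lemma~\ref{lem:tauTau} only yields $\log M(\alpha)=\tO(n^{2}+n\tau)$ for an individual root. With this correction $\mu_\alpha=L_\alpha+\tO(n^{3}+n^{2}\tau)$, and summing $\tO(n^{2}\mu_\alpha)$ over up to $n^{2}$ real roots produces $\tO(n^{7}+n^{6}\tau)$, which overshoots the target. What actually makes the bound work is amortization over the roots: $\sum_{\alpha\in V(G)}\log M(\alpha)\le\log(\Mea(G)/|\lc(G)|)=\tO(n^{2}+n\tau)$, so that $\sum_\alpha\mu_\alpha=\sum_\alpha L_\alpha+\tO(n^{3}+n^{2}\tau)$ and $n^{2}\sum_\alpha\mu_\alpha$ falls within the claim. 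The paper carries out precisely this amortized analysis: the evaluation cost per root is stated as $\tO(n^{2}(\tau+n\log M(\alpha)+L_\alpha))$, and then $\sum_\alpha n\log M(\alpha)$ is controlled via the Mahler-measure bound of Lemma~\ref{mahlermeasure-lemma}. You need the same device; the per-root bound you wrote is simply false.

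The misreading concerns your ``main subtlety.'' The paper does \emph{not} refine per root: it refines all roots of $G$ uniformly to precision $\kappa:=\max_\alpha\rho_\alpha=\tO(n^{3}+n^{2}\tau+\max_\alpha L_\alpha)$ via Theorem~\ref{thm:square free complexity}, at cost $\tO(n^{6}+n^{5}\tau+n^{2}\kappa)$. Since $n^{2}\max_\alpha L_\alpha\le n^{2}\sum_\alpha L_\alpha$, this already meets the bound, so uniform refinement is not the obstacle you suggest. Your per-root Newton refinement is a legitimate alternative, but two details are missing: you must account for the $\tO(n^{2}+n\tau)$-bit coefficients of $G$ (or $R^{*}$) in each Newton evaluation rather than writing the cost as $\tO(\deg(G)\cdot\mu_\alpha)$, and you must argue that the disks of radius $<\sigma_i/(64\deg G)$ delivered by Stage~1 lie inside Newton's basin of quadratic convergence.
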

\begin{proof}
	For each $\alpha$, we use 
	approximate interval arithmetic to compute an approximation of the polynomial $f_{\alpha}$. If we 
	choose a fixed point precision $\rho$, and a starting interval of size $2^{-\rho}$ that contains 
	$\alpha$, then the so-obtained interval approximation of $f_{\alpha}$ has interval coefficients of 
	size $2^{-\rho+2}(n+1)^{2}2^{\tau}M(\alpha)^{n}$; 
	see again~\cite[Section 4]{qir-kerber-11} and~\cite[Section 5]{ks-top-12} for more details. Thus, in order to get an approximation of 
	precision $L_{\alpha}$ of $f_{\alpha}$, it suffices to consider a $\rho$ of size 
	$\tO(\tau+n\log M(\alpha)+L_{\alpha})$. Thus, by doubling the precision $\rho$ in each step, we eventually succeed for some $\rho=\rho_{\alpha}=\tO(\tau+n\log M(\alpha)+L_{\alpha})$. The cost for the interval evaluations is then dominated (up to a logarithmic factor) by the cost in the last iteration. Thus, for a certain $\alpha$, the cost is bounded by $\tO(n^{2}(\tau+n\log M(\alpha)+L_{\alpha}))$ since for each of the $n+1$ coefficients of $f_{\alpha}$, we have to (approximately) evaluate an integer polynomial (i.e.~the coefficients of $f$ considered as a polynomial in $y$) of magnitude $(n,O(\tau+\log n))$ at $x=\alpha$. The total cost for all $\alpha$ is then bounded by
	\[
	\tO\left(n^{2}\cdot\sum_{\alpha\in\R\cap V(G)} \tau+n\log M(\alpha)+L_{\alpha}\right)=\tO(n^{6}+n^{5}\tau+n^{2}\sum_{\alpha}L_{\alpha}),
	\]
	where we again used the result in Lemma~\ref{lem:bounds}. For the interval evaluations, we need an approximation of the root $\alpha$ to an absolute error of less than $2^{-\rho_{\alpha}}$. Such approximations are provided if we compute isolating disks of size less than $2^{-\kappa}$ for all roots of $G$, given that $\kappa$ is larger than $\max_{\alpha} \rho_{\alpha}=\tO(\tau+n\max_{\alpha}\log M(\alpha)+ \max_{\alpha} L_{\alpha})=\tO(n^{3}+n^{2}\tau+\max_{\alpha}L_{\alpha})$. In the proof of Lemma \ref{lem:Kplus}, we have already shown that we can compute such disks using $\tO(n^6+n^5\tau+n^{2}\kappa)$ bit operations. Thus, the claim follows. 
\end{proof}

	\begin{lemma}\label{lem:fibers}
	Let $G\in\{R,\hat{R}\}$. Then, computing isolating disks for all roots of all $f_{\alpha}$, $\alpha\in V(G)\cap\R$, together with the corresponding multiplicities uses $$ \tO(n^6+n^5\tau) $$ bit operations. 
	\end{lemma}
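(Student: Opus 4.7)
The plan is to apply Theorem~\ref{thm:isolation complexity} to each fiber polynomial $f_\alpha$ separately and sum the per-fiber costs, using Lemmas~\ref{mahlermeasure-lemma}, \ref{lem:bounds}, and \ref{lem:resultant isolation} to control the aggregated quantities. First, I would note that the number of distinct roots $k(\alpha)$ of $f_\alpha$ needed as input is readily available: for $\alpha\in V(\hat R)$ we have $k(\alpha)=n$ since $R$ and $\hat R$ share no root, and for $\alpha\in V(R)$ we take $k(\alpha)=k^+(\alpha)=n-\mult(\alpha,R)+\mult(\alpha,Q)$, whose correctness has been certified by the $K^-=K^+$ test and whose ingredient multiplicities were already computed at cost $\tO(n^6+n^5\tau)$ in Lemma~\ref{lem:Kplus}.

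Next, for each real $\alpha\in V(G)$ I would run the algorithm of Section~\ref{sec:algorithm} on $f_\alpha$ with input $k(\alpha)$, realizing the coefficient oracle through Lemma~\ref{lem:resultant isolation}. By Theorem~\ref{thm:isolation complexity} the per-fiber bit cost is
\[
\tO\!\left(n^{3}+n^{2}\tau_\alpha+n\sum_{i=1}^{k(\alpha)}\bigl(\log M(P_{\alpha,i}^{-1})+m_{\alpha,i}\log M(\sigma_{\alpha,i}^{-1})\bigr)\right),
\]
and the precision required from the oracle is $L_\alpha=\tO\bigl(n\Gamma_{f_\alpha}+\sum_i(\log M(P_{\alpha,i}^{-1})+m_{\alpha,i}\log M(\sigma_{\alpha,i}^{-1}))\bigr)$ by (\ref{apxquality}).

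The decisive step is the aggregation. The number of relevant $\alpha$ is $O(n^2)$, so the $n^3$ term contributes $\tO(n^5)$. Since $G$ has magnitude $(n^2,\tO(n^2+n\tau))$, Lemma~\ref{mahlermeasure-lemma} gives $\sum_\alpha\tau_\alpha=\tO(n^3+n^2\tau)$, so the $n^2\tau_\alpha$ contribution is $\tO(n^5+n^4\tau)$. Lemma~\ref{lem:bounds} yields
\[
\sum_{\alpha\in V(G)}\sum_{i=1}^{k(\alpha)}\bigl(\log M(P_{\alpha,i}^{-1})+m_{\alpha,i}\log M(\sigma_{\alpha,i}^{-1})\bigr)=\tO(n^{4}+n^{3}\tau),
\]
so after the outer factor of $n$ this term contributes $\tO(n^5+n^4\tau)$ as well. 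Combining $\Gamma_{f_\alpha}\le 1+\tau_\alpha$ with Lemmas~\ref{mahlermeasure-lemma} and \ref{lem:bounds} gives $\sum_\alpha L_\alpha=\tO(n^4+n^3\tau)$, and plugging this into Lemma~\ref{lem:resultant isolation} bounds the total oracle cost by $\tO(n^6+n^5\tau+n^2\cdot(n^4+n^3\tau))=\tO(n^6+n^5\tau)$, which absorbs every other contribution.

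The main obstacle is the careful bookkeeping across the three preceding lemmas: one must match each per-fiber quantity appearing in Theorem~\ref{thm:isolation complexity} and in (\ref{apxquality}) with an aggregated bound already established (for $\tau_\alpha$ via Lemma~\ref{mahlermeasure-lemma}, for the $P_{\alpha,i}$ and $\sigma_{\alpha,i}$ sums via Lemma~\ref{lem:bounds}), and then check that the dominant contribution, which turns out to come from the $n^3$ root-isolation overhead repeated $O(n^2)$ times together with the oracle-precision cost $n^2\sum_\alpha L_\alpha$, still fits within $\tO(n^6+n^5\tau)$.
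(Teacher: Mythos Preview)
Your proposal is correct and follows essentially the same approach as the paper: apply Theorem~\ref{thm:isolation complexity} fiberwise, aggregate $\tau_\alpha$ via Lemma~\ref{mahlermeasure-lemma}, aggregate the $\sigma_{\alpha,i}$ and $P_{\alpha,i}$ sums via Lemma~\ref{lem:bounds}, and feed the resulting $\sum_\alpha L_\alpha=\tO(n^4+n^3\tau)$ into Lemma~\ref{lem:resultant isolation} for the oracle cost. The paper's proof is slightly terser (it writes $n\tau_\alpha$ directly in $L_\alpha$ rather than passing through $\Gamma_{f_\alpha}\le 1+\tau_\alpha$, and it does not spell out where $k(\alpha)$ comes from), but the structure and the bookkeeping are the same.
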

	\begin{proof}
		\ignore{Assume that all polynomials $f_{\alpha}$ are already given in form of approximations of precision $L$ with sufficiently large $L$. 
		We first analyze the amortized bit complexity of applying the root isolation algorithm from Section \ref{sec:rootisolation} to all polynomials. 
		From these calculations we then determine how well the polynomials $f_{\alpha}$'s have to be approximated, i.e. we give a choice for $L$ that is sufficiently large and show the overall bit complexity.}
	
		For a fixed $\alpha$, let $B_\alpha$ be the number of bit operations that are needed to compute isolating disks for all roots of $\fa$ together with the corresponding multiplicities. We apply Theorem \ref{thm:isolation complexity} to $2^{-t}\cdot f_{\alpha}$ which has exactly the same roots as $f_{\alpha}$ (remember that $\operatorname{lcf}(f(x,y);y)\in\Z$ and $\operatorname{lcf}(f(x,y);y)\le 2^t\le 4\cdot \operatorname{lcf}(f(x,y);y)$). Then, we have
$$ B_\alpha = \tO\left(n^3 + n^2\taua + n\cdot\sum\nolimits_{i=1}^{\ka}\left( \mai\log M(\sigmaai^{-1}) + \log M(\Pai^{-1}) \right)\right).$$ The corresponding algorithm from Section~\ref{sec:algorithm} returns isolating disks for the roots $z_{\alpha,i}$ and their multiplicities $m_{\alpha,i}$. Furthermore, since the radius of the disk isolating $z_{\alpha,i}$ is smaller than $\sigmaai/(64n)$, we can distinguish between real and non-real roots. The algorithm needs an absolute $L_{\alpha}$-approximation of $2^{-t}\cdot f_{\alpha}$ (and thus an absolute $(L_{\alpha}-t)$-approximation of $f_{\alpha}$) with 
\[
L_{\alpha}=\tO\left(n\tau_{\alpha} + \sum\nolimits_{i=1}^{\ka}\left( \mai\log M(\sigmaai^{-1}) + \log M(\Pai^{-1})\right)\right).
\]
From Lemma~\ref{lem:resultant isolation}, we conclude that we can compute corresponding approximations for all $f_{\alpha}$, $\alpha\in V(G)\cap\R$, with a number of bit operations bounded by
\[
\tO(n^{6}+n^{5}\tau+n^{2}\cdot\sum_{\alpha\in V(G)\cap\R} L_{\alpha}).
\]
The above expression is bounded by $\tO(n^{6}+n^{5}\tau)$ because
\[
\sum_{\alpha\in V(G)} \left(n\tau_{\alpha} + \sum\nolimits_{i=1}^{\ka}\left( \mai\log M(\sigmaai^{-1}) + \log M(\Pai^{-1})\right)\right)
\]
is bounded by $\tO(n^{4}+n^{3}\tau)$ according to Lemma~\ref{mahlermeasure-lemma} and~\ref{lem:bounds}.
The same argument also shows that the sum over all $B_{\alpha}$ is even bounded by $\tO(n^{5}+n^{4}\tau)$. Hence, the claim follows.
\ignore{
a comparable bound. Hence, we are done if we can show that
\[
sum_{\alpha\in V(G)\cap\R}
\]

Hence, computing the isolation for all polynomials $f_{\alpha}$, $\alpha\in V(R)$ takes
		$$
			\tO \left(\sum_{\alpha\in V(R)} T_\alpha\right)
		$$
bit operations. 
		Lemma \ref{lem:bounds} lets us bound $$\sum_{\alpha\in V(R)}\sum_{i=1}^{\ka}\left( \mai\log M(\sigmaai^{-1}) + \log M(\Pai^{-1}) \right) = \tO(n^4+n^3\tau) .$$ For the remaining terms, we have
		\begin{align*}
			\sum_{\alpha\in V(R)} n^2\taua &\le n^2 \sum_{\alpha\in V(R)} \taua \\
											 &\overset{(1)}{=} \tO\left(n^2 \sum_{\alpha\in V(R)} ( n\log |a|+\tau)\right) \\
											 &\overset{(2)}{=} \tO\left(n^3 (\tau+\log\Mea(R))\right) \overset{(3)}{=} \tO(n^3\tau).
		\end{align*}
		Since the coefficients of $\fa$ are evaluations of $f$ at $\alpha$, their size is bounded by $\tO(n\log |a|+\tau)$ (1). For (2), we then used the definition of the Mahler measure to rewrite the sum $\sum_{\alpha\in V(R)}\log |a|$. 
		For (3) we bounded $\log\Mea(R)$ by $\tO(\tau)$ \cite[Prop. 10.9]{bpr-arag-06}.
		
		Plugging in the calculations above, we obtain a bound of $
                \tO(n^5+n^4\tau) $ for isolating the roots of all fiber
                polynomials, while requiring approximations of
                precision $L=\tO(n^4+n^3\tau)$ of $\fa$, using the bound given in Theorem \ref{thm:isolation complexity}. \ignore{\Kurt{what are well approximated
                  approximation of precision $L$?}}}
	\end{proof}

\ignore{
\begin{lemma}\label{lem:intermediates}
	Computing intermediate values $\beta_1,...,\ldots,\beta_{k+1}$ and isolating the roots of all $f_{\beta_i}$ for all $1\le i\le k+1$ takes
$$\tO(n^6+n^5\tau)$$
bit operations.
\end{lemma}
\begin{proof}
	Since we pick our intermediate values as the roots of $\hat{R}$, computing them corresponds to computing a root isolation of $\hat{R}$ such that no isolating disk of $\hat{R}$ intersects any of the isolating disks of $R$. Since the roots of $\hat{R}$ are those of the derivative of the square-free part of $R$, the distance between any root $\beta$ of $\hat{R}$ and any root $\alpha$ of $R$ is at least $\sigma(\alpha,R)/\deg(R)$ \cite[Thm. 8]{eigenwillig-roots}. Hence, this is given if we isolate the roots to disks of size $2^{-\tO(n^4+n^3\tau)}$.

Similar to the proof in Lemma \ref{lem:resultant isolation}, and using Theorem \ref{thm:refine complexity}, we have that computing approximations of precision $L=\tO(n^4+n^3\tau)$ of all intermediate polynomials $f_\beta=f(\beta,y)$, $\beta\in V(\hat{R})$ takes $\tO(n^6+n^5\tau)$ bit operations. Using the same argument as in Lemma \ref{lem:fibers}, performing the isolation step for all polynomials $f_{\beta}$ also falls into the bit complexity of $\tO(n^6+n^5\tau)$.

\end{proof}}

We can now formulate our main theorems of this section:

\begin{theorem}\label{thm:complexitycurves}
	Computing the topology of a real planar algebraic curve $C=V(f)$, where $f\in\Z [x,y]$ is a bivariate polynomial of total degree of $n$ with integer coefficients of magnitude bounded by $2^\tau$, needs an expected number of bit operations bounded by
$$ \tO(n^6+n^5\tau).$$
\end{theorem}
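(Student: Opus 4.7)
The plan is to aggregate the complexity bounds already established for each substep of \textsc{TopNT}, together with a short argument for the randomization over the shearing factor $s$ and the prime $p$. Since the paper has already done the substantive work in Lemmas \ref{lem:resultants}, \ref{lem:Kplus}, \ref{K-minus}, and \ref{lem:fibers}, the proof of Theorem \ref{thm:complexitycurves} is essentially an addition of costs combined with a bound on the expected number of retries.

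First I would analyze a single attempt with fixed $s$ and $p$. After the shearing step, the polynomial $f_s(x,y)=f(x+sy,y)$ has magnitude $(n, O(\tau+\log n))$, so all downstream bounds apply with the stated parameters. The projection step costs $\tilde{O}(n^6+n^5\tau)$ by Lemma \ref{lem:resultants} (for the subresultant sequence, $Q$, and the square-free parts $R^*$, $Q^*$) and by Lemma \ref{lem:Kplus} (for isolating the roots of $R$ and $Q$ and extracting the multiplicities $m_i,l_i$, whose combination yields $k^+(\alpha)$ and hence $K^+$). The lower bound $K^-$ costs only $\tilde{O}(n^5\tau)$ by Lemma \ref{K-minus}. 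For the lifting step, Lemma \ref{lem:fibers} applied with $G=R$ handles the critical fibers $f_{\alpha_i}$ in $\tilde{O}(n^6+n^5\tau)$ bit operations; the separating values $\beta_i$ are taken as real roots of $\hat{R}$, and the preceding discussion of the isolation of $\hat{R}$ with disks disjoint from those of $R$ (via refinement to $2^{-\tilde{O}(n^4+n^3\tau)}$ using Theorem \ref{thm:refine complexity}) stays within the same bound; Lemma \ref{lem:fibers} applied with $G=\hat{R}$ handles the intermediate fibers $f_{\beta_i}$ also in $\tilde{O}(n^6+n^5\tau)$. The final connection step is purely combinatorial on $O(n^2)$ vertices and is absorbed.

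Next I would handle the randomization. Bad shearing factors $s$ number at most $n^{O(1)}$, and bad primes $p$ number at most $(n\tau)^{O(1)}$, as argued in Section \ref{sec:algotopology}. Drawing $s$ uniformly from an integer range of bitsize $O(\log n)$ but size a large enough polynomial in $n$, and drawing $p$ uniformly from a set of primes of bitsize $O(\log (n\tau))$ whose cardinality is a large enough polynomial in $n\tau$, each choice is good with constant probability. A restart is triggered precisely when either $K^-<K^+$ (bad $s$ or bad $p$) or the lifting step reveals that some $f_{\alpha_i}$ has more than one multiple root (bad $s$). Because the two draws are independent and each succeeds with constant probability, the number of restarts is geometrically distributed with constant mean; hence the expected total cost is a constant multiple of the per-trial cost.

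Summing the per-trial bounds yields $\tilde{O}(n^6+n^5\tau)$, and multiplying by the expected $O(1)$ restarts preserves the bound. The only real obstacle is making sure the amortization across the many fibers does not explode, but this is exactly what Lemmas \ref{mahlermeasure-lemma} and \ref{lem:bounds} secured, allowing Lemma \ref{lem:fibers} to absorb all fiber isolations into the same $\tilde{O}(n^6+n^5\tau)$ envelope; a naive per-root bound would have given a substantially worse estimate. With the amortization in hand, the theorem follows by a straightforward sum.
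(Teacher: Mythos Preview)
Your proposal is correct and follows essentially the same approach as the paper: summing the per-step bounds from Lemmas~\ref{lem:resultants}, \ref{lem:Kplus}, \ref{K-minus}, and \ref{lem:fibers}, and noting that the connection phase is purely combinatorial. You are in fact more explicit than the paper about the Las Vegas aspect (constant success probability for $s$ and $p$, geometric number of restarts); the paper's own proof simply points back to the earlier per-step bounds. One small slip: the connection step involves $O(n^3)$ vertices (up to $n^2$ fibers, each with up to $n$ roots), not $O(n^2)$, but this is still absorbed by the stated bound.
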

\begin{proof}
We already derived a bound of $\tO(n^{6}+n^{5}\tau)$ or better for each of the steps in the projection and in the lifting phase of our algorithm. The final connection phase is purely combinatorial since we ensure that each $f_{\alpha}$, with $\alpha$ a root of the resultant $R$, has at most one multiple real root. Thus, we can compute all adjacencies in linear time with respect to the number of roots of critical and intermediate fiber polynomials. Since their number is bounded by $O(n^{3})$, this step can be done in $O(n^3)$ operations.
\ignore{Lemma \ref{lem:resultants} shows that we can compute the relevant resultants $R$, $\hat{R}$, and $Q$ in the claimed bit complexity. Isolating the roots of $R$ and $\hat{R}$ gives us $x$-critical and intermediate values, which is again possible with $\tO(n^6+n^5\tau)$ bit operations (Lemmas \ref{lem:resultant isolation} and \ref{lem:intermediates}). Determining the number of distinct roots $\ka$ for all $\alpha\in V(R)$ also falls into the same bit complexity, as is shown in Lemmas \ref{lem:Kplus} and \ref{K-minus}. With this information, we can use the root isolator from Section \ref{sec:algorithm} to isolate the roots of the fiber polynomials $f_\alpha$ and $f_\beta$, where $\alpha\in V(R)$, and $\beta\in V(\hat{R})$. Lemmas \ref{lem:fibers} and \ref{lem:intermediates} pin the complexity of this step also to $\tO(n^6+n^5\tau)$ bit operations.}
\end{proof}

\Kurt{Notice that the problem of (real) solving a bivariate polynomial system $g(x,y)=h(x,y)=0$, with $g,h\in\Z[x,y]$ coprime polynomials, can be reduced to the problem of computing the topology of the planar algebraic curve $C:=\{(x,y)\in\R^2:f(x,y)=0\}$, where $f:=g^2+h^2$. Namely, $C$ coincides with the set of points $(x,y)\in\R^2$ for which both polynomials $g$ and $h$ vanish. Since the degree of $f$ is twice as large as the maximum of the degrees of the polynomials $g$ and $h$, the following result follows in an almost straight forward manner:\footnote{We remark that we consider this result of rather theoretical interest because there exist efficient algorithms for solving bivariate systems that are comparable fast (in practice) as the fastest algorithms for topology computation; see~\cite[Section 6]{beks:top2D} for extensive benchmarks. Increasing the degree of the input polynomials by a factor of $2$ certainly does not harm the asymptotic complexity bounds, however, it has a significant impact on the practical running times.}}

\Kurt{\begin{theorem}\label{complexity:systemsolving}
Let $g,h\in\Z[x,y]$ be coprime polynomials of magnitude $(n,\tau)$. Then, we can compute isolating boxes for the real solutions of the system $g(x,y)=h(x,y)=0$ with an expected number of bit operations bounded by $$\tO(n^{6}+n^{5}\tau).$$
\end{theorem}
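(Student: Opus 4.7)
The plan is to run essentially the same projection--lifting pipeline as the algorithm \topo{} of Section~\ref{sec:algotopology}, with the pair $(g,h)$ playing the role of $(f,f_y)$. Since the output is a set of isolating boxes rather than an isotopic graph, the combinatorial connection phase drops away.

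First, I would apply a random integer shear $x\mapsto x+sy$ with $s$ of bitsize $O(\log n)$. By standard arguments, all but $n^{O(1)}$ values of $s$ are good, so the sheared polynomials (still of magnitude $(n,O(\tau+\log n))$) have a non-zero constant leading coefficient in $y$ and the real solutions have pairwise distinct $x$-coordinates. Next, compute $R:=\res(g,h;y)\in\Z[x]$, its square-free part $R^*$, and the leading coefficients $\sr_j(x)$ of the subresultant chain of $(g,h)$; the proof of Lemma~\ref{lem:resultants} transfers verbatim and gives $R$ of magnitude $(n^2,\tO(n\tau))$ in $\tO(n^5\tau)$ bit operations. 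Theorem~\ref{thm:square free complexity} applied to $R$ then isolates the real roots $\alpha_1,\ldots,\alpha_k$ of $R^*$ in $\tO(n^6+n^5\tau)$ bit operations.

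For each $\alpha_i$, I would isolate the common complex roots of the fibers $\fa:=g(\alpha_i,y)$ and $h(\alpha_i,y)$ and keep the real ones. The distinct-root counts needed to drive the isolator of Section~\ref{sec:algorithm} on each fiber can be obtained by the same Monte-Carlo-with-verification strategy used for $K^-$ versus $K^+$ in Section~\ref{sec:algotopology}: compute modular images of the subresultant chain of $(g,h)$ to obtain a lower bound, read an upper bound from the gap structure of the specialised chain $\sr_j(\alpha_i)$, and restart with a fresh shear and prime whenever the two disagree. Once both fibers are isolated, I would refine all disks to precision $\tO(n^3+n^2\tau)$ so that overlap detection is unambiguous (same separation-based argument as in the proof of Lemma~\ref{lem:Kplus}), and then read off the common roots as the pairs of overlapping disks.

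The main obstacle is to reprove the aggregate cost bound of Lemma~\ref{lem:fibers} for $(g,h)$ in place of $(f,f_y)$. The sum estimates of Lemma~\ref{lem:bounds} rely only on: (i) the general Mahler-measure bound of Lemma~\ref{mahlermeasure-lemma}, which is already stated for an arbitrary divisor $G$; (ii) the subresultant identity (\ref{sumoversigma}) for the fibers, which applies to any pair in generic position with respect to the projection resultant; and (iii) the inequality $\mai\le 2\ma$ for $\alpha\in V(R)$, whose counterpart for $(g,h)$ follows from the Teissier-type bound $n-\ka\le\ma$ (the generic position forces $\operatorname{mult}(\alpha,R)$ to absorb the defect in the number of distinct fiber roots, exactly as in the argument preceding Lemma~\ref{K-minus}). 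Plugging the resulting $\tO(n^4+n^3\tau)$ aggregate into Theorem~\ref{thm:isolation complexity} and Lemma~\ref{lem:resultant isolation} yields total cost $\tO(n^6+n^5\tau)$ for the lifting step, which matches the projection cost and establishes the claim.
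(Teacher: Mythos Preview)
The paper does \emph{not} rerun the projection--lifting pipeline on the pair $(g,h)$. Instead it sets $f:=g^{2}+h^{2}$ and invokes Theorem~\ref{thm:complexitycurves} as a black box: the real zero set of $f$ is precisely the solution set of $g=h=0$, so \topo{} applied to $f$ (which has magnitude $(2n,O(\tau))$) returns a graph consisting of isolated vertices in the claimed time. The only additional work is undoing the shear, which is handled by a refinement argument reusing Lemmas~\ref{mahlermeasure-lemma}--\ref{lem:resultant isolation}. This reduction sidesteps every issue you raise in your last paragraph.

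Your direct approach, as written, has a genuine gap. The subresultant chain of $(g,h)$ controls $\gcd(g_\alpha,h_\alpha)$, not the multiplicity structure of $g_\alpha$ or $h_\alpha$ individually. In particular: to feed $g_\alpha$ into the isolator of Section~\ref{sec:algorithm} you need its number of distinct roots, which is governed by the chain of $(g,g_y)$, not $(g,h)$; the separation identity~(\ref{sumoversigma}) for $g_\alpha$ likewise involves $\sr_{n-k(\alpha)}^{(g,g_y)}(\alpha)$, and summing $\log M(\cdot^{-1})$ over $\alpha\in V(\res(g,h))$ then requires a bound on $\res(\res(g,h),\res(g,g_y))$ that you do not supply; and the Teissier inequality you invoke bounds $n-k(\alpha)$ for $f_\alpha$ by $\mult(\alpha,\res(f,f_y))$ precisely because $f_y$ is the derivative of $f$---there is no such relation between the root defect of $g_\alpha$ and $\mult(\alpha,\res(g,h))$. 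Your route is likely salvageable (one would work with the chains of $(g,g_y)$ and $(h,h_y)$ in addition to that of $(g,h)$, and argue as in the $G=\hat{R}$ case of Lemma~\ref{lem:bounds}), but it is substantially more work than the $g^{2}+h^{2}$ trick and the justification you give does not go through.
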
}

\begin{proof} \Kurt{
As already mentioned above, the idea is to consider the polynomial $f(x,y):=g^{2}+h^{2}$ and to compute the topology of the curve $C:=C_{\R}$ defined by $f$. Since $g$ and $h$ are assumed to be coprime, the system $g=h=0$ has only finitely many solutions, and the set of these points coincides with the ``curve'' $C$. Hence, the topology algorithm returns a graph that consists of vertices only. According to Theorem~\ref{thm:complexitycurves}, the cost the topology computation is bounded by $\tO(n^{6}+n^{5}\tau)$ bit operations in expectation since $f$ has magnitude $(2n,O(\tau+\log n))$.}

\Kurt{However, in general, our algorithm does not directly return the solutions of the initial system but the solutions of a sheared system $g(x+sy,y)=h(x+sy,y)=0$. Here, $s$ is a positive integer of bitsize $O(\log n)$ for which \textsc{TopNT} succeeds in computing 
the topology of the sheared curve $\hat{C}:=C_{s,\R}$ defined by $\hat{f}(x,y)=f(x+sy,y)=0$. Since $\hat{C}$ consists of isolated singular points only and there are no two covertical points (note that our algorithm only succeeds for an $s$ for which there are no two covertical extremal points), it follows that, for each point $(\hat{x},\hat{y})\in \hat{C}$, $\hat{x}$ is a root of the resultant $\hat{R}=\res(\hat{f},\hat{f};y)$ and $\hat{y}$ is the unique (multiple) real root of $\hat{f}(\hat{x},y)$. The point $(\hat{x},\hat{y})$ is represented by an isolating box $B(\hat{x},\hat{y})=I(\hat{x})\times I(\hat{y})$, where $I(\hat{x})$ is the isolating interval for the root $\hat{x}$ of $\hat{R}$ and $I(\hat{y})$ is the isolating interval for the root $\hat{y}$ of $f(\hat{x},y)$. Each solution $(x,y)$ of the initial system can now be recovered from a unique solution $(\hat{x},\hat{y})\in\hat{C}$. More precisely, $x=\hat{x}-s\cdot \hat{y}$ and $y=\hat{y}$. However, in order to obtain isolating boxes for the solutions $(x,y)$, we have to refine the boxes $B(\hat{x},\hat{y})$ first such that the sheared boxes $B(x,y):=(I(\hat{x})-s\cdot I(\hat{y}),I(\hat{y}))$ do not overlap. Note that the latter is guaranteed if both intervals $I(\hat{x})$ and $I(\hat{y})$ have width less than $\sigma(\hat{x},\hat{R})/(4|s|)\le \sigma(\hat{x},\hat{R})/4$. Namely, if the latter inequality holds, then the intervals $I(\hat{x})-s\cdot I(\hat{y})$ are pairwise disjoint.
Hence, it follows that the corresponding isolating intervals have to be refined to a width less than $w(\hat{x},\hat{y})=\sigma(\hat{x},\hat{R})/n^{O(1)}$. For the resultant polynomial $\hat{R}$, we conclude from Theorem~\ref{thm:square free complexity} that computing isolating intervals of size less $w(\hat{x},\hat{y})$ uses $\tO(n^{6}+n^{5}\tau)$ bit operations since $\log M(w(\hat{x},\hat{y})^{-1})=\tO(n^{4}+n^{3}\tau)$ and $\hat{R}$ has magnitude $(n^{2},\tO(n\tau))$. In order to compute an isolating interval of size $w(\hat{x},\hat{y})$ or less for the root $\hat{y}$ of $\hat{f}(\hat{x},y)$ (in fact, for all roots of $\hat{f}(\hat{x},y)$), we need 
\begin{align*}\tO(n^3 + n^2\tau_{\hat{x}}& + n\cdot\sum\nolimits_{i=1}^{k(\hat{x})}\left( m_{\hat{x},i}\log M(\sigma_{\hat{x},i}^{-1}) + \log M(P_{\hat{x},i}^{-1}) \right)+(n\max_{i} m_{\hat{x},i})\cdot\log M(w(\hat{x},\hat{y})^{-1})
\end{align*}
bit operations; cf.~the proof of Lemma~\ref{lem:fibers} with $\alpha=\hat{x}$ and $f=\hat{f}$. Also, we need an approximation of precision $L_{\hat{x}}$ of $\hat{f}(\hat{x},y)$ with $L_{\hat{x}}$ bounded by
\begin{align*}
\tO(n\tau_{\hat{x}} &+ \sum\nolimits_{i=1}^{k(\hat{x})}\left( m_{\hat{x},i}\log M(\sigma_{\hat{x},i}^{-1}) + \log M(P_{\hat{x},i}^{-1})\right)+(n\max_{i} m_{\hat{x},i})\cdot\log M(w(\hat{x},\hat{y})^{-1}).
\end{align*} 
Since $\max_{i} m_{\hat{x},i}\le 2\cdot\mult(\hat{x},\hat{R})$ and $w(\hat{x},\hat{y})=\sigma(\hat{x},\hat{R})/n^{O(1)}$, it holds that $$(n\max_{i} m_{\hat{x},i})\cdot\log M(w(\hat{x},\hat{y})^{-1})=\tO(n\cdot\mult(\hat{x},\hat{R})\cdot\log M(\sigma(\hat{x},\hat{R})^{-1}).$$ Thus, summing up the cost for computing the roots of $\hat{f}(\hat{x},y)$ over all real roots of $\hat{R}$ yields the bound $\tO(n^{5}+n^{4}\tau)$. Here, we use an analogous argument as in the proof of Lemma~\ref{lem:fibers} and the fact that $\sum_{\hat{x}}n\cdot\mult(\hat{x},\hat{R})\cdot\log M(\sigma(\hat{x},\hat{R})^{-1}=\tO(n^{5}+n^{4}\tau)$. The more costly part is to compute the approximations of precision $L_{\hat{x}}$ of the polynomials $\hat{f}(\hat{x},y)$. Again, we can use Lemma~\ref{mahlermeasure-lemma} and~\ref{lem:bounds} to show that $\sum_{\hat{x}}L_{\hat{x}}=\tO(n^{4}+n^{3}\tau)$. Thus, from Lemma~\ref{lem:resultant isolation}, we conclude that the approximations of the $\hat{f}(\hat{x},y)$'s can be computed with $\tO(n^{6}+n^{5}\tau)$ bit operations.}
\end{proof}

\section{Conclusion}

We presented an algorithm for isolating the roots of a complex univariate polynomial that can handle multiple roots provided the number $k$ of distinct roots is part of the input and the coefficients can be approximated to an arbitrary precision. The algorithm uses approximate factorization as a subroutine. Any algorithm for approximate factorization that can be run with arbitrary precision can be used.

If used with Pan's algorithm~\cite{Pan:alg} for approximate factorization, the algorithm is highly efficient:\medskip
\begin{itemize}
\item It solves the benchmark problem 
of isolating all roots of a polynomial $p$ with \emph{integer} coefficients of absolute value bounded by
$2^{\tau}$ with 
$\tO(n^3 + n^2 \tau)$ bit operations. This matches the best bound known~\cite[Theorem 3.1]{Pan:survey}. 
\item When combined with a a recent algorithm for computing
  the topology of a real planar algebraic curve specified as the zero set of a bivariate
  integer polynomial, it leads to improved complexity bounds for topology computation and and for isolating the real solutions of a bivariate
  polynomial system. For input polynomials of degree $n$ and bitsize $\tau$, we 
  improve the currently best running
  time from
$\tO(n^{9}\tau+n^{8}\tau^{2})$ (deterministic) to $\tO(n^{6}+n^{5}\tau)$
(randomized) for topology computation and from $\tO(n^{8}+n^{7}\tau)$
(deterministic) to $\tO(n^{6}+n^{5}\tau)$ (randomized) for solving bivariate
systems.\medskip
\end{itemize}

The considerable improvement of the bit complexity of the above problems related to the computation of a cylindrical algebraic decomposition mainly stems from the adaptivity of our root isolation method. That is, the precision demand as well as the number of bit operations to isolate the roots of the "fiber polynomials" is directly related to the geometric locations of the corresponding roots. As a consequence, our analysis profits from amortization effects over the critical fibers. 
We expect that our adaptive complexity bound for root isolation will yield a a series of further
complexity results for similar problems, where amortization effects take place. 

\Kurt{A major open problem is whether there are deterministic algorithms for curve analysis and bivariate system solving of the same complexity.}


\ignore{
\section{Appendix}
Throughout the following considerations, let
\[
p(x)=p_{n}x^{n}+\ldots+p_{0}\in\R[x]
\]
be a polynomial of degree $n$ with real coefficients $p_{i}$, and $|p_{n}|\ge 1$. We define $\tau$ to be the minimal non-negative integer such that $\frac{|p_{i}|}{|p_n|}<2^{\tau}$ for all $i=0,\ldots,n-1$. We further assume that each coefficient $p_{i}$ can be approximated to any specified precision. The roots of $p$ are denoted by $z_{1},\ldots,z_{n}\in\C$ and $\Gamma_p:=\log(\max_{i=0,\ldots,n}(|z_i|)$ denotes the corresponding \emph{logarithmic root bound}.

\begin{lemma}
In order to determine an approximation of $\tilde{p}$ of $p$ with
\end{lemma}

\begin{theorem}
It holds that $\tau=\tilde{O}(n\Gamma_p)$. The number of bit operations that is needed to compute an integer $\Gamma\in\N$ with
\begin{align}
\Gamma_p\le\Gamma<8\log n+\Gamma_p\label{inequ:rootbound}
\end{align}
is bounded by $\tilde{O}(n^{2}\Gamma_p)$. For the latter computations, we have to ask for an approximation $\tilde{p}$ of $p$ with $\left\|p-\tilde{p}\right\|<2^{-\tilde{O}(n\Gamma_p)}\cdot \left\|p\right\|$.
\end{theorem}

\begin{proof}
Consider the Cauchy polynomial $$\bar{p}(x):=|p_n|x^n-\sum_{i=0}^{n-1}|p_i|x^i$$ of $p$. Then, according to~\cite[Proposition 2.51]{eigenwillig-phd}, $\bar{p}$ has a unique positive real root $\xi\in\R^+$, and the following inequality holds: $$2^{\Gamma_p}\le\xi<\frac{n}{\ln 2}\cdot 2^{\Gamma_p}<n\cdot 2^{\Gamma_p+1}.$$
Furthermore, since $\bar{p}$ coincides with its own Cauchy polynomial, each complex root of $\bar{p}$ has absolute value less than or equal to $|\xi|$. It follows that $\bar{p}(x)>0$ for all $x\ge n2^{\Gamma_p+1}$, and thus $|p_n|(n2^{\Gamma_p+1})^n>|p_i|(n2^{\Gamma_p+1})^i$ for all $i$. From the definition of $\tau$, we either have $\tau=0$ or there exists an $i_0$ with $|p_{i_0}|/|p_n|\ge 2^{\tau-1}$. The first case is trivial, whereas, in the second case, we have $(2^{\Gamma_p+1}n)^{n-i_0}\ge 2^{\tau-1}$. This shows that $\tau=\tilde{O}(n\Gamma_p)$.\\

We now turn to the computation of $\Gamma$: Let $k_{0}$ denote the smallest non-negative integer 
$k$ with $\bar{p}(2^k)>0$ (which is equal to the smallest $k$ with $2^{k}>\xi$). Our goal is to 
compute an integer $\Gamma$ with $k_{0}\le\Gamma\le k_{0}+1$. Namely, if $\Gamma$ fulfills the 
latter inequality, then $\xi<2^{\Gamma}<4\xi$, and thus $\Gamma$ fulfills Inequality (\ref
{inequ:rootbound}). 

For the computation of $\Gamma$, we use binary search and approximate evaluation of $\bar{p}$ at the points $2^{k}$: More precisely, we evaluate $\bar{p}(2^{k})$ using interval arithmetic with a precision $\rho$ (using fixed point arithmetic) which guarantees that the width of $\IBox(\bar{p}(2^k),\rho)$ is smaller than $|p_n|$, where $\IBox(E,\rho)$ is the interval obtained by evaluating a polynomial expression $E$ via interval arithmetic with precision $\rho$ for the basic arithmetic operations; see~\cite[Section 4]{qir-kerber-11} for details. We use~\cite[Lemma 3]{qir-kerber-11} to estimate the cost for each such evaluation: Since $\bar{p}$ has coefficients of size less than $2^\tau\cdot |p_{n}|$, we have to choose $\rho$ such that 
\[
2^{-\rho+2}(n+1)^2 2^{\tau+\log |p_{n}|+nk}<|p_n|
\] 
in order to ensure that $w(\IBox(F^C(2^k),\rho))<|p_n|$. Hence, $\rho$ is bounded by $O(\tau+nk)$ and, 
thus, each interval evaluation needs $\tilde{O}(n(\tau+nk))$ bit operations. We now use binary search to find the 
smallest $k$ such that $\IBox(\bar{p}(2^k),\rho)$ contains only positive values. The following argument 
then shows that $k_{0}\le k\le k_{0}+1$: Obviously, we must have $k\ge k_{0}$ since 
$\bar{p}(2^{k})
<0$ and $\bar{p}(2^{k})\in \IBox(\bar{p}(2^k),\rho)$ for 
all $k<k_{0}$. Furthermore, the point $x=2^{k_{0}+1}$ has distance more than $1$ to each of the roots 
of $\bar{p}$, and thus $|\bar{p}(2^{k_{0}+1})|> |p_{n}|\ge 1$. Hence, it follows that $\IBox(F^C(2^{k_
{0}+1}),\rho)$ contains only positive values. For the binary search, we need $$O(\log k_{0})=O(\log\log\xi)=O(\log(\log n+\Gamma_{p}))$$ 
iterations, and the cost for each of these iterations is bounded by $\tilde{O}(n(\tau+nk_{0}))$ bit 
operations. This proves the claim.
\end{proof}}

\end{document}